\documentclass[11pt]{article}
\usepackage[margin=1.1in]{geometry}
\usepackage{amsmath,amssymb,amsthm}
\usepackage{booktabs,array}
\usepackage[colorlinks=true,linkcolor=blue,citecolor=blue,urlcolor=blue]{hyperref}

\newtheorem{theorem}{Theorem}[section]
\newtheorem{lemma}[theorem]{Lemma}

\newtheorem{proposition}[theorem]{Proposition}
\newtheorem{corollary}[theorem]{Corollary}
\newtheorem{remark}[theorem]{Remark}
\newtheorem{definition}[theorem]{Definition}

\newcommand{\la}{\langle}
\newcommand{\ra}{\rangle}
\newcommand{\eps}{\varepsilon}
\newcommand{\lam}{\lambda}
\DeclareMathOperator{\rk}{rank}
\DeclareMathOperator{\range}{range}
\DeclareMathOperator{\Span}{span}

\title{No finite level of the NPA hierarchy is exact\\ for the doubly-tilted CHSH functional near the critical tilt}
\author{Anton Pakhunov\\ \small Independent researcher\\ \small \texttt{pakhunov.anton.n@gmail.com}\thanks{Exact rational/integer verification code and certificate data: \url{https://github.com/tohafrit/npa-nonexactness} (see Appendix~\ref{app:verif}).}}
\date{July 2026 --- draft v1}

\begin{document}
\maketitle

\begin{abstract}
Gigena, Panwar, Scala, Ara\'ujo, Farkas and Chaturvedi [npj Quantum Inf.\
\textbf{11}, 82 (2025)] determined the quantum maximum of the doubly-tilted
CHSH functionals by self-testing methods, observed that the
Navascu\'es--Pironio--Ac\'in level needed for exactness grows without evident
bound toward the critical tilt (level $10$ still overshoots at
$\alpha=\beta=0.999$), and asked whether any finite level suffices. We answer
this in the negative. For the symmetric critical family
$B_s=(1-\tfrac s2)(\la A_0\ra+\la B_0\ra)+\mathrm{CHSH}$ we prove: for every
NPA level $k\ge2$ there are an explicit rational $g_k>0$ and an
$s^\ast_k>0$ with $c_k(s)\ge 4-s+g_k s^2$ for all $s\in(0,s^\ast_k]$; since
the quantum value leaves the local bound only cubically,
$c_Q(s)=4-s+\tfrac{s^3}6+O(s^4)$, every finite level (including $1$ and
$1{+}AB$, by monotonicity) strictly overshoots on an interval $(0,\eps_k]$
--- no finite level is exact on any neighbourhood of the critical point. For
$k\ge2$ the overshoot exponent is at least $2$ unconditionally, and exactly
$2$ granted the dual second-order expansion of the companion note (the exact
almost-quantum coefficient $\tfrac3{64}$); unconditionally $a_2>\tfrac1{39}$,
$a_3>\tfrac1{188}$, $a_4>\tfrac1{641}$. The proof is a primal construction:
an exactly feasible moment curve $y_0+sy_1+s^2y_2$ at each level, built from
level-uniform structural laws and one level-independent \emph{signed}
witness --- a closed-form class function $y^\ast$ with
$N_k^\top\Gamma(y^\ast)N_k=u_ku_k^\top$ at every level. The mechanism forces
the sign: for $k\ge3$ no state and no smooth curve of quantum models can
realize the gain direction, so the overshoot lives
strictly in the non-quantum part of the NPA tangent cone. The proof is
computer-assisted in the strict sense: finite exact-integer verifications
with proven degree bounds are constituent parts of the argument. Every step
is accompanied by a verification program, and the chain has been re-verified
against independent implementations, including a symbolic per-regime proof
of the witness identity and a clean-room implementation written from this
paper's text alone. The one external input is the published
quantum value of Gigena et al., cross-checked to twelve digits with three
errata corrected.
\end{abstract}

\section{Introduction}\label{sec:intro}

\subsection{The question}

In the minimal $(2,2,2)$ Bell scenario, the doubly-tilted CHSH functionals
\[
B_{\alpha\beta}\;=\;\alpha\la A_0\ra+\beta\la B_0\ra+
\sum_{x,y\in\{0,1\}}(-1)^{xy}\la A_xB_y\ra
\]
have local bound $c_L=2+\alpha+\beta$ and a quantum advantage exactly for
$0<\alpha+\beta<2$. Gigena et al.~\cite{Gigena} determined the quantum
maximum $c_Q(\alpha,\beta)$ in closed form (the largest root of an explicit
sextic) together with a self-testing statement, and made a striking
numerical observation: approaching the critical line $\alpha+\beta=2$, the
level of the Navascu\'es--Pironio--Ac\'in (NPA) hierarchy~\cite{NPA} required
to reach $c_Q$ grows without evident bound --- at $\alpha=\beta=0.999$ even
level $10$ exceeds $c_Q$ by $8.6\times10^{-10}$. They left open ``if any
finite level of NPA will be enough\ldots even in the CHSH scenario.''

This paper proves that no finite level is enough. Throughout we work on the
symmetric slice $\alpha=\beta=1-s/2$, i.e.\ with
\[
B_s\;=\;B_0-\tfrac s2\bigl(\la A_0\ra+\la B_0\ra\bigr),\qquad
B_0=\la A_0\ra+\la B_0\ra+\mathrm{CHSH},
\]
$s\in(0,2)$, $s\to0^+$ the critical limit. Let $c_k(s)$ denote the level-$k$
NPA value and $c_Q(s)$ the quantum value. For every level $k\ge2$, and for
the intermediate level $1{+}AB$, $c_k(0)=4$ and $c_k'(0^+)=-1$: the critical
maximum is attained on a classical face and the dual certificate of
Lemma~\ref{lem:dual} pins the value there. (Level $1$ is looser,
$c_1(0)\approx4.73$; it enters only through monotonicity at the end.) The
quantum curve is \emph{cubically} flat,
\begin{equation}\label{eq:cQ}
c_Q(s)\;=\;4-s+\tfrac{1}{6}s^3-\tfrac1{36}s^4+O(s^5),
\end{equation}
and $c_k\ge c_Q$ always, so near $s=0$ every level sits above $4-s$ with a
possible \emph{quadratic} overshoot. Define
\[
a_k\;:=\;\liminf_{s\to0^+}\frac{c_k(s)-4+s}{s^2}\;\ge\;0 .
\]
In the companion note~\cite{overshoot} we computed the almost-quantum
coefficient exactly, $a_{1+AB}=\tfrac3{64}$, certified $a_2>\tfrac1{39}$,
$a_3>\tfrac1{188}$, $a_4>\tfrac1{641}$ by exact rational tangent-program
certificates (modulo a then-unproven framework step, discharged below), and
reduced the open question of~\cite{Gigena} to the positivity of the single
sequence $(a_k)$. Here we prove that positivity, for every $k$.

\begin{theorem}[Main]\label{thm:main}
Fix any NPA level $k\ge2$. There exist explicit rational constants $g_k>0$
and $s^\ast_k>0$ such that
\[
c_k(s)\;\ge\;4-s+g_k\,s^2\qquad\text{for all }s\in(0,s^\ast_k].
\]
Consequently $a_k\ge g_k>0$, and, using the quantum
expansion~\eqref{eq:cQ} of~\cite{Gigena}, for every finite level $k$
(including $k=1$ and $k=1{+}AB$) there is an $\eps_k>0$ with
\[
c_k(s)\;>\;c_Q(s)\qquad\text{for all }s\in(0,\eps_k]:
\]
no finite level of the NPA hierarchy is exact on any neighbourhood of the
critical point. Moreover, for every $k\ge2$ the overshoot order is at least
two, and exactly two granted the dual second-order expansion
$c_{1+AB}(s)=4-s+\tfrac3{64}s^2+o(s^2)$ of the companion
note~\cite{overshoot}:
$g_k s^2\le c_k(s)-(4-s)\le\tfrac3{64}s^2+o(s^2)$.
Quantitatively, $g_k=1/(1024\,t_0(k))$ for an explicitly computable
$t_0(k)<\infty$, and unconditionally
\[
a_2>\tfrac1{39},\qquad a_3>\tfrac1{188},\qquad a_4>\tfrac1{641}.
\]
\end{theorem}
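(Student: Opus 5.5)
The plan is a primal construction. Since $c_k(s)$ is a maximum over level-$k$ moment matrices, it suffices to exhibit, for each $k\ge2$, an explicit curve of pseudo-moment vectors
\[
y(s)=y_0+s\,y_1+s^2 y_2,
\]
with three properties: $\Gamma_k(y(s))\succeq0$ for $s\in[0,s^\ast_k]$; the normalization and linear NPA constraints hold identically in $s$; and
\[
B_s(y(s))=4-s+g_k s^2+O(s^3)
\]
with an explicit lower bound. Then $c_k(s)\ge B_s(y(s))$ gives the main inequality.

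\textbf{Choice of the curve.} Take $y_0$ to be the deterministic classical point attaining $B_0=4$, namely all outcomes $+1$. Choose $y_1$ so that the first-order term is exactly $-1$, matching $c_k'(0^+)=-1$ from the dual certificate of Lemma~\ref{lem:dual}. The gain must come from the $s^2$ term. The delicate point is positive semidefiniteness near $s=0$: $\Gamma(y_0)$ is rank one, $\Gamma(y_0)=vv^\top$. Let $N_k$ span its kernel. Then
\[
\Gamma(y(s))=vv^\top+s\,\Gamma(y_1)+s^2\,\Gamma(y_2).
\]
By the standard Schur-complement / perturbation argument, this is PSD for small $s$ provided two conditions hold. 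First, the range condition $N_k^\top\Gamma(y_1)N_k=0$, i.e.\ $y_1$ is tangent to the face. Second, the second-order compression
\[
N_k^\top\bigl(\Gamma(y_2)-\Gamma(y_1)\,v v^{+}\,\Gamma(y_1)\bigr)N_k
\]
is PSD, with the strict margin needed to absorb $O(s^3)$ terms. A quantitative version of this step, bounding the cubic remainder using the operator norm of $\Gamma(y_1)$ and $\Gamma(y_2)$ and the level size, will produce $s^\ast_k$ and the constant $t_0(k)$.

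\textbf{Level-uniform witness.} The $k$-dependence must be controlled uniformly. I would look for $y_1$ and $y_2$ built from structural laws, i.e.\ class functions on words depending only on simple invariants such as word length and parity pattern. The key object is a level-independent signed correction $y^\ast$ satisfying
\[
N_k^\top\Gamma(y^\ast)N_k=u_ku_k^\top\quad\text{at every level}.
\]
Adding $t\,y^\ast$ to $y_2$ repairs, along a single direction $u_k$, the one negative eigenvalue of the second-order compression. A one-dimensional Schur-complement computation then gives the trade-off between the PSD margin and the objective. Choosing $t=t_0(k)$ minimal makes the compression PSD, and the resulting second-order gain is $g_k=1/(1024\,t_0(k))$.

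\textbf{Expected main obstacle.} The hard step is proving the witness identity for all $k$ simultaneously. This needs a combinatorial description of $N_k$ and a per-regime symbolic verification over word-pair types, reducing an infinite family of identities to finitely many polynomial identities in $k$ of bounded degree. Those can then be checked by exact integer arithmetic. For $k=2,3,4$, explicit exact rational certificates give $a_2>1/39$, $a_3>1/188$ and $a_4>1/641$.

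\textbf{Consequences.} Combining $c_k\ge 4-s+g_k s^2$ with \eqref{eq:cQ}, where $c_Q-(4-s)=O(s^3)$, gives $c_k>c_Q$ on some interval $(0,\eps_k]$. Levels $1$ and $1{+}AB$ follow by monotonicity: $c_1\ge c_{1+AB}\ge c_2$. The upper bound $c_k(s)-(4-s)\le\frac3{64}s^2+o(s^2)$ follows from $c_k\le c_{1+AB}$ together with the cited companion expansion.
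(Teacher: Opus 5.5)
There is a genuine gap: your curve, as set up, cannot produce any $s^2$ gain. You impose $N_k^\top\Gamma(y_1)N_k=0$ together with $B_0\cdot y_1=0$. By Theorem~\ref{thm:L1}(1), $B_0\cdot y\le4$ on the whole feasible set, so feasibility of $y(s)$ gives $4+s^2B_0\cdot y_2\le4$, i.e.\ $B_0\cdot y_2\le0$. The $s^2$ coefficient $B_0\cdot y_2-m\cdot y_1$ is therefore positive only if $(y_1)_{\la A_0\ra}<0$.

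But $T_{A_0}=e_{A_0}-e_\emptyset$ lies in the kernel of $\Gamma(y_0)$, at your vertex as at every face point, and $T_{A_0}^\top\Gamma(y_1)T_{A_0}=-2(y_1)_{\la A_0\ra}$. So your vanishing condition forces $(y_1)_{\la A_0\ra}=0$, and the gain is zero. Putting $t\,y^\ast$ into $y_2$ cannot help: it contributes $t\,B_0\cdot y^\ast=0$ at order $s^2$. There is then no trade-off from which $1/(1024\,t_0)$ could arise; in your text that formula is asserted, not derived.

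The missing idea is that the gain must sit at \emph{first} order on the kernel. In the paper, $y^\ast$ is the first-order direction, $y_1=\lam y^\ast/4$, and its kernel compression is the nonzero rank-one PSD form $\lam\tilde u\tilde u^\top$ with $\lam>0$. On $V_k$ this forces $(y_1)_{\la A_0\ra}=-\lam/8$ (Lemma~\ref{lem:identities}, certificate scaling), and that is the gain $\lam/8$. Positivity on $\tilde u^\perp$ is restored at second order by the Slater point $y_2=t_0\lam^2y_2^\ast$, where $y_2^\ast=e_{\mathrm{class}(1)}-y_0(0)$ has compression $N_k^\top N_k\succ0$ and objective cost $4t_0\lam^2$. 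Maximizing $\lam/8-4t_0\lam^2$ gives $g_k$.

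Because the first-order kernel form is nonzero, the standard perturbation argument you invoke does not apply. One needs the no-repair Lemma~\ref{lem:norepair}: the $U$-direction is controlled at order $s$, and $W=\tilde u^\perp$ at order $s^2$ with a strict margin. The same lemma is what makes the ladder bounds $a_2>\tfrac1{39}$ etc.\ unconditional.

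Your base point also fails. At the all-ones vertex the kernel is $\mathbf 1^\perp$, of dimension $2k^2+2k>2k^2$, and the extra directions couple to the lattice at first order. For instance, with $y^\ast$ the pair $(T_{(A_0,B_0)},\,e_{A_1}-e_\emptyset)$ has off-diagonal entry $-\tfrac12$, coming from the class $(A_0A_1,B_0)$. Meanwhile both diagonal entries vanish, and face directions cannot alter the $T_{(A_0,B_0)}$ row. So even the correct first-order witness would give an indefinite kernel form there, and the arc would be infeasible for all small $s>0$. The paper instead works at the interior point $y_0(\delta^\ast)$, whose kernel is exactly $\Span N_k$ (Theorem~\ref{thm:L2}). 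There such directions lie in the range of $M_0$, where the coupling is paid for at second order.

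Your consequences paragraph (non-exactness, monotonicity for levels $1$ and $1{+}AB$, conditional upper bound) matches the paper.
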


\subsection{The one external input}

The proof of the interval bound $c_k(s)\ge4-s+g_ks^2$ is entirely
self-contained (finite exact arithmetic plus finite-dimensional convexity).
Translating it into \emph{non-exactness} uses one published external input:
the closed-form quantum value of Gigena et al.~\cite{Gigena}, in the form of
the expansion~\eqref{eq:cQ}. We derived~\eqref{eq:cQ} symbolically from
their sextic in~\cite{overshoot}, cross-checked it to twelve digits against
their own reported value at $s=0.002$, and documented (and corrected) three
typographical errata in their printed display equations; their
characteristic polynomial and all quoted numerics are correct. We use
nothing else from outside: in particular no numerical SDP value enters any
proof below.

\subsection{The proof chain}\label{sec:chain}

The argument is layered; every layer is proven for all $k$ and is accompanied by an
exact machine verification (Appendix~\ref{app:verif}).

\begin{enumerate}
\item \textbf{Structural laws} (Section~\ref{sec:laws}). At $s=0$ the
  optimal set of the level-$k$ SDP is a $k$-dimensional face $F_k$ with an
  explicit affine parametrization (\emph{product law}), an integer relation
  lattice $N_k$ of dimension $2k^2$ forced in the kernel of every optimal
  moment matrix (law L1), a distinguished interior base point $y_0(\delta^\ast)$
  whose kernel is \emph{exactly} $\Span N_k$ --- realized by one explicit
  commuting-party density operator, uniformly in $k$ (law L2) --- and a
  closed-form rational Slater point for the second-order tangent program
  (law L4).
\item \textbf{The witness} (Section~\ref{sec:witness}). Quadratic gain at
  level $k$ requires a first-order direction $y_1$ whose compressed form
  $N_k^\top\Gamma(y_1)N_k$ is a positive multiple of $u_ku_k^\top$ for a
  specific level-uniform vector $u_k$ (closed form given). We exhibit one
  level-\emph{independent} signed class function $y^\ast$ with
  $N_k^\top\Gamma(y^\ast)N_k=u_ku_k^\top$ exactly at \emph{every} level
  (Theorem~\ref{thm:witness}). The proof is by locality lemmas reducing each
  kernel-pair equation, within finitely many ``fine regimes'', to a
  polynomial of degree $\le2$ per free length coordinate, plus an exhaustive
  exact-integer check of a finite interpolation grid --- a constituent part
  of the proof, not an illustration.
\item \textbf{Small-$\lambda$ positivity} (Section~\ref{sec:smalllambda}).
  Given the witness, an explicit feasible point of the second-order tangent
  program with positive value $v_k\ge1/(1024\,t_0)>0$ exists at every level
  (Theorem~\ref{thm:smalllambda}); the block-PSD conditions reduce to Schur
  complements by a range identity that holds identically on the constraint
  space.
\item \textbf{Exact arc feasibility} (Section~\ref{sec:arc}). A no-repair
  positivity lemma (Lemma~\ref{lem:norepair}) shows that the resulting jet
  $y(s)=y_0+sy_1+s^2y_2$ is \emph{exactly} feasible on an explicit interval
  $(0,s^\ast_k]$ --- no $o(s^2)$ correction term --- with exactly cubic
  objective $4-s+g_ks^2+t_0\lam^2s^3$ (Theorem~\ref{thm:arc}). This also
  discharges, retroactively, the framework caveat of~\cite{overshoot},
  making the ladder bounds $a_2>\tfrac1{39}$, $a_3>\tfrac1{188}$,
  $a_4>\tfrac1{641}$ unconditional (Corollary~\ref{cor:ladder}).
\item \textbf{Assembly} (Section~\ref{sec:assembly}): the chain above
  yields Theorem~\ref{thm:main}; the sandwich
  $c_k\le c_2\le c_{1+AB}=4-s+\tfrac3{64}s^2+o(s^2)$ (upper half from the
  companion's dual expansion) pins the overshoot exponent to $2$.
\item \textbf{Mechanism} (Section~\ref{sec:mechanism}): the witness $y^\ast$
  \emph{must} be signed --- no Hilbert-space state and no smooth curve of
  genuine quantum models can produce the gain direction (in a Krein space
  the same computation forces the gain-carrying direction to be neutral;
  see Remark~\ref{rem:krein}). The finite-level
  failure is thus located precisely: finite NPA levels admit a signed
  tangent direction that quantum mechanics forbids.
\end{enumerate}

The entire chain was subsequently re-checked by an adversarial audit ---
freshly written code for the witness identity and for a long-length attack
on the regime lemmas, independent driver scripts over the shared exact
stack for the remaining items (the per-item provenance is stated in
Appendix~\ref{app:verif}); the audit confirmed every step and supplied one
further exact check for the single step the original programs did not
cover.

\subsection{Relation to prior work}

That the NPA hierarchy converges \emph{at each fixed} $s$ is not in
question (the quantum value is attained in this minimal
scenario~\cite{BB}); our theorem is the divergence of the required level as
$s\to0^+$, exactly the uniform failure asked about in~\cite{Gigena}. The
result complements the degree-bounded noncommutative Fej\'er--Riesz theorem
of Klep, Levenson and McCullough~\cite{KLM}, whose strict-positivity
hypothesis fails precisely in this $s\to0$ limit (margin $\sim s^3$); it
realizes, at a concrete attained and self-tested optimum, the failure mode
left open by Nie's finite-convergence theorem~\cite{Nie} through Marshall's
boundary Hessian condition~\cite{Marshall} --- the critical touch is cubic,
contact order three, so the generic finite-exactness criterion is silent ---
and it does so by an unconditional construction rather than a hardness
argument~\cite{hardness}. See Section~\ref{sec:discussion}.

\section{Setup}\label{sec:setup}

\subsection{Levels, classes, the symmetric picture}

The scenario algebra is generated by involutions $A_0,A_1$ (Alice) and
$B_0,B_1$ (Bob), $X^\ast=X$, $X^2=1$, $[A_i,B_j]=0$: the group algebra of
$\mathbb Z_2^{\ast2}\times\mathbb Z_2^{\ast2}$. A \emph{word pair} is
$w=(a,b)$ with $a$ an alternating (= reduced) word in $\{A_0,A_1\}$ and $b$
alternating in $\{B_0,B_1\}$; the \emph{level-$k$ basis} $\mathcal W_k$
consists of all word pairs with $|a|+|b|\le k$. Moments are indexed by
\emph{classes} (of total length $\le2k$, the products arising in $\Gamma_k$): reduced word pairs modulo dagger (reversal of both parts)
and party swap. For a moment vector $y$ on classes, the level-$k$ moment
matrix is
\[
\Gamma_k(y)_{pq}\;=\;y\bigl[\mathrm{class}\bigl(\mathrm{rev}(p)\cdot
q\bigr)\bigr],\qquad p,q\in\mathcal W_k,
\]
with per-party reduction of $\mathrm{rev}(a_p)a_q$ and
$\mathrm{rev}(b_p)b_q$. The level-$k$ value is
$c_k(s)=\max\{B_s\cdot y:\Gamma_k(y)\succeq0,\;y_1=1\}$. Working in the
party-swap-symmetrized class space is sound and complete for this
swap-symmetric functional (average any maximizer with its swap; see the
soundness lemma of~\cite{sep}). In class coordinates,
\[
B_0\cdot y\;=\;2y_{\la A_0\ra}+y_{\la A_0B_0\ra}+2y_{\la A_0B_1\ra}
-y_{\la A_1B_1\ra},\qquad
B_s\cdot y\;=\;B_0\cdot y-s\,m\cdot y,
\]
where $m\cdot y:=y_{\la A_0\ra}$ (so $s\,m\cdot y=\tfrac
s2(\la A_0\ra+\la B_0\ra)$ on swap-symmetric $y$).

\emph{Core coordinates.} For an alternating word let $i(\cdot)$ count its
$1$-letters. Write $\alpha_i=A_1(A_0A_1)^{i-1}$ and
$\beta_j=B_1(B_0B_1)^{j-1}$ for the \emph{$0$-free cores},
$\delta_j=\la\beta_j\ra$ ($\delta_0:=1$), and
$m_{ij}=\la\alpha_i\beta_j\ra$.

\subsection{The relation lattice}\label{sec:lattice}

Two families of integer vectors in $\mathbb R^{\mathcal W_k}$ play the lead
role. For a word $w$ ending in $0$ (or empty) write $w'$ for $w$ with its
trailing $0$ removed ($\emptyset'=\emptyset$) and $w.1$ for $w$ with a $1$
appended. \emph{Trivial reductions}: for each basis word pair $q=(a,b)$ in
which at least one part has a trailing $0$, the vector
$T_q=e_{(a,b)}-e_{(a^\sharp,b^\sharp)}$, where $\sharp$ removes the trailing
$0$ from \emph{each} part that has one (and fixes the other). \emph{Dressings}:
for each word pair $W=(w_a,w_b)$ in which neither part ends in $1$ and
$|W|\le k-2$, the tensor vector
\[
D_W\;=\;\bigl(e_{w_a'}-e_{w_a.1}\bigr)\otimes\bigl(e_{w_b'}-e_{w_b.1}\bigr)
\;=\;e_{(w_a',w_b')}-e_{(w_a',w_b.1)}-e_{(w_a.1,w_b')}+e_{(w_a.1,w_b.1)},
\]
which represents $W\!\cdot\!(1-A_1)(1-B_1)\,\psi$ in the basis after both
the group reduction \emph{and} the state relations $A_0\psi=B_0\psi=\psi$
(the trailing-$0$ strip in the first corner) are applied; the naive
group-reduced corners $W\cdot(1,A_1,B_1,A_1B_1)$ differ from $D_W$ by
trivial reductions and would \emph{not} satisfy the exact identities below.
Let $N_k=[\,T\text{'s}\mid D\text{'s}\,]$ denote the resulting integer
matrix, the \emph{relation lattice}.

\begin{proposition}[counting laws]\label{prop:count}
For every level $k\ge2$: $|\mathcal W_k|=2k^2+2k+1$; the number of classes
is $\tfrac{(5k+2)(k+1)}2$; there are $\tfrac{3k^2+k}2$ trivial reductions
and $\tfrac{k(k-1)}2$ dressings; and the $2k^2$ columns of $N_k$ are
linearly independent.
\end{proposition}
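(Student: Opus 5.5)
The plan is to prove all four statements by direct enumeration, with linear independence coming from a triangular (pivot) structure of $N_k$. The basic fact used throughout: an alternating word in $\{A_0,A_1\}$ of length $n$ is unique for $n=0$, and for $n\ge1$ there are exactly two, determined by the first letter. Among these, exactly one ends in $1$ and one ends in $0$.

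\emph{Basis size.} There are $2L+1$ words of length at most $L$. Summing over the length $n$ of the Alice part,
\[
|\mathcal W_k|=(2k+1)+2\sum_{n=1}^{k}\bigl(2(k-n)+1\bigr)=2k+1+2k^2 .
\]

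\emph{Trivial reductions.} There is one $T_q$ for each basis pair $q$ in which at least one part ends in $0$. Call a word \emph{$0$-free-tailed} if it is empty or ends in $1$; there is exactly one such word of each length. Hence the pairs with neither part ending in $0$ correspond to pairs $(n,m)$ with $n+m\le k$, and there are $\binom{k+2}2$ of them. The count of trivial reductions is therefore
\[
2k^2+2k+1-\tfrac{(k+1)(k+2)}2=\tfrac{3k^2+k}2 .
\]

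\emph{Dressings.} Symmetrically, there is exactly one word of each length that does not end in $1$ (empty or ending in $0$). So the dressings are indexed by pairs $(n,m)$ with $n+m\le k-2$, giving $\binom{k}{2}=\tfrac{k(k-1)}2$ of them. The two counts sum to $2k^2$.

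\emph{Classes.} I would count orbits of reduced pairs of total length $t\le 2k$ under the Klein four-group $\{1,\dagger,\sigma,\dagger\sigma\}$, where $\dagger$ is reversal and $\sigma$ is party swap with letter relabelling, and then apply Burnside. The fixed-point counts needed are as follows.
\begin{itemize}
\item Reversal fixes every word of odd length or length $0$, and swaps the two words of each even positive length. So $\dagger$ fixes the pairs whose parts both have odd or zero length.
\item $\sigma$ fixes only the pairs $a=b$, which requires $t$ even.
\item $\dagger\sigma$ fixes the pairs $b=\mathrm{rev}(a)$, again requiring $t$ even.
\end{itemize}
Summing these contributions for each $t$ and then over $t\le 2k$ should give $\tfrac{(5k+2)(k+1)}2$. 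I expect this to be the only fiddly step: the parity cases have to be handled carefully. As a check, the formula gives $12$ at $k=1$.

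\emph{Linear independence.} Each trivial reduction $T_q$ has its pivot at $e_q$, where $q$ has a trailing $0$ in some part. Every other entry of every column of $N_k$ sits on a pair with no trailing $0$ in either part:
\begin{itemize}
\item $q^\sharp$ has no trailing $0$ in either part;
\item all four corners of each $D_W$ are built from $w'$ (empty or ending in $1$) and $w.1$.
\end{itemize}
So each $T_q$ is the unique column with a nonzero entry at $e_q$, and the pivots are distinct. It therefore suffices to show that the $D_W$ are independent on the complementary coordinates.

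For this, take the corner $(w_a.1,w_b.1)$ as the pivot of $D_W$. This corner has coefficient $+1$, and it has strictly the largest total length among the four corners of $D_W$. The map $W\mapsto(w_a.1,w_b.1)$ is injective, because stripping the appended $1$ recovers $W$. Ordering the dressings by $|W|$ then makes the matrix block-triangular with unit pivots. Hence all $2k^2$ columns of $N_k$ are independent.
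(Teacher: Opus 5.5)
Your counts of $|\mathcal W_k|$, the trivial reductions (by subtracting the $\binom{k+2}{2}$ pairs with no trailing $0$) and the dressings follow the paper's proof essentially verbatim. So does your independence argument: each $T_q$ is the only column touching the reducible coordinate $q$, and the dressings are triangular on their strictly longest corner $(w_a.1,w_b.1)$.

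The only divergence is the class count. The paper's proof gives no argument at all for $\tfrac{(5k+2)(k+1)}2$. Your Burnside route is a correct way to supply one, but as written it is unfinished (``should give''). Your sanity check is also wrong: the formula gives $7$ at $k=1$, not $12$. And $7$ is the true count: $1$; $\{A_0,B_0\}$; $\{A_1,B_1\}$; $\{A_0A_1,A_1A_0,B_0B_1,B_1B_0\}$; $A_0B_0$; $A_1B_1$; $\{A_0B_1,A_1B_0\}$.

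To close it, fix the total length $t\ge1$ and count fixed pairs:
\begin{itemize}
\item The identity fixes $4t$ pairs.
\item $\dagger$ fixes $4$ pairs if $t$ is odd (one part empty, the other an odd-length palindrome). It fixes $2t$ pairs if $t$ is even (both parts of odd length).
\item $\sigma$ and $\dagger\sigma$ each fix $2$ pairs if $t$ is even and none if $t$ is odd.
\end{itemize}
This gives $t+1$ orbits for odd $t$, $\tfrac{3t}2+1$ orbits for even $t>0$, and one orbit at $t=0$. Summing over $t\le2k$ gives $1+k(k+1)+\tfrac{3k(k+1)}2+k=\tfrac{(5k+2)(k+1)}2$.

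Classes are defined as the products arising in $\Gamma_k$, so you should also say why every reduced pair of total length $t\le2k$ actually occurs. Split $a=a_1a_2$ and $b=b_1b_2$ with no cancellation and $|a_1|+|b_1|=\min(t,k)$. Then take $p=(\mathrm{rev}\,a_1,\mathrm{rev}\,b_1)$ and $q=(a_2,b_2)$; both lie in $\mathcal W_k$.
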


\begin{proof}
Basis: for each pair of lengths $(\ell_a,\ell_b)$ with
$\ell_a+\ell_b\le k$ there are two alternating words per positive length
and one empty word; summing gives $2k^2+2k+1$. Irreducible words (no
trailing $0$ in either part): one word per positive length per party ends
in $1$, giving $1+2k$ single-party words and $\tfrac{k(k-1)}2$ two-party
words; the trivial-reduction count follows by subtraction. Dressings: the
word $W$ must not end in $1$ in either part (else the appended $R_1$-letter
cancels and the vector reproduces a shorter dressing up to sign), so there
is one admissible word per length per party with $|W|\le k-2$:
$\#\{(x,y)\ge0:x+y\le k-2\}=\tfrac{k(k-1)}2$. Independence: order the
basis; each $T_q$ is the unique lattice vector supported on the reducible
word $q$ (dressings are supported on irreducible words only, since their
corners are written post-reduction), and each dressing has a strictly
longest corner $(w_a{+}1,w_b{+}1)$, distinct across dressings.
\end{proof}

\section{The structural laws}\label{sec:laws}

This section establishes three level-uniform laws about the critical face
$s=0$: the face law (L1), the kernel law (L2), and the uniform Slater law
(L4). (The labels match the discovery record and verification suite.)

\subsection{L1: the face law}

\begin{theorem}[face law]\label{thm:L1}
For every $k\ge2$, the set
$F_k=\{y:\Gamma_k(y)\succeq0,\;y_1=1,\;B_0\cdot y=4\}$ satisfies:
\begin{enumerate}
\item $B_0\cdot y\le4$ on $\{\Gamma_k(y)\succeq0,\,y_1=1\}$, so $F_k$ is
  the optimal face at $s=0$;
\item every $y\in F_k$ satisfies $\Gamma_k(y)\,N_k=0$;
\item the solution set $\{y:\Gamma_k(y)N_k=0,\;y_1=1\}$ is a
  $k$-dimensional affine family, parametrized by
  $(\delta_1,\dots,\delta_k)$ via the \emph{product law}
  \[
  m_{0j}=\delta_j,\qquad m_{ij}=\delta_i+\delta_j-1\quad(i,j\ge1),
  \]
  all other moments equal to their core values ($0$-padding invisible);
\item $B_0\cdot y=4$ identically on that affine hull, and the $s$-linear
  response $-m\cdot y=-1$ is constant on $F_k$; hence $F_k$ is also the
  $s\to0^+$ optimal face, first-order selection is vacuous, and selection
  starts at order $s^2$ --- the tangent program of
  Section~\ref{sec:smalllambda}.
\end{enumerate}
\end{theorem}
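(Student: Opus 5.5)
Part (i) and part (ii) will be handled together by a sum-of-squares certificate for $4-B_0$ at level $2$, pushed to every level $k\ge2$ by monotonicity. This is the dual certificate the introduction attributes to Lemma~\ref{lem:dual}. Write $P_0=\tfrac12(1-A_0)$ and $Q_0=\tfrac12(1-B_0)$, and the CHSH-type combinations $A_0+A_1$, $A_0-A_1$, $B_0\pm B_1$. The plan is to exhibit an explicit decomposition
\[
4-B_0=\sum_r \lambda_r\, v_r^\ast v_r,\qquad \lambda_r>0,
\]
modulo the involution and commutation relations. Each $v_r$ is a linear combination of level-$2$ basis words, read as the columns of a Gram matrix $G\succeq0$, with $\la G,\Gamma_2(y)\ra=4-B_0\cdot y$. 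Natural building blocks are $(1-A_0)$, $(1-B_0)$, and $(1-A_1)(1-B_1)$ rewritten in state-reduced form. These correspond to the defining corners of $T$ and $D_\emptyset$.

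Positivity then gives $B_0\cdot y\le4$, which is (i). Equality forces $\Gamma_2(y)v_r=0$. Since $\Gamma_2$ is a principal submatrix of $\Gamma_k$ and $\Gamma_k\succeq0$, those vectors lie in $\ker\Gamma_k(y)$. I then propagate them. If $v\in\ker\Gamma_k(y)$ and $W$ is a word with shifted support still in $\mathcal W_k$, the localization property $\Gamma(y)(Wv)=0$ holds for the moment matrix: the vector $Wv$ has entries given by left multiplication and reduction, and $\la p,\Gamma Wv\ra=\la W^\ast p,\Gamma v\ra$. Applying this with $W$ ranging over the allowed prefixes reproduces every column of $N_k$. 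The $T_q$ come from $(1-A_0)$ and $(1-B_0)$ multiplied by prefixes ending suitably, which realize the trailing-$0$ strip. The $D_W$ come from $W(1-A_1)(1-B_1)\psi$ corrected by $T$'s; this is exactly the reason the paper writes $D_W$ in post-reduction form. The length constraint $|W|\le k-2$ is precisely the one that keeps the shifted support inside $\mathcal W_k$. This gives (ii).

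For (iii), I solve the linear system $\Gamma_k(y)N_k=0$ directly in class coordinates. The $T$-relations say that a trailing $0$ in any word can be stripped without changing the moments. Combined with dagger symmetry, this means every moment equals its core value, since $0$-letters at either end get absorbed. The dressing relations, tested against the row $p=(\emptyset,\emptyset)$ and the other rows, give
\[
m_{i-1,j-1}-m_{i-1,j}-m_{i,j-1}+m_{ij}=0,
\]
a discrete mixed second difference, with the cores appropriately indexed. Adding the boundary $m_{i0}=\delta_i$, $m_{00}=1$, and the relation $m_{0j}=\delta_j$ obtained from the trailing-$0$ strips, this recursion forces the product law $m_{ij}=\delta_i+\delta_j-1$ by induction on $i+j$. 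The remaining task is to show that every other equation in $\Gamma_kN_k=0$ is then satisfied identically. By the uniform structure of rows, these reduce to the same second-difference identity at shifted indices, and I will also check that no constraint is imposed on the $\delta_j$ themselves. The conclusion is an affine family of dimension exactly $k$. The paper's claim that $y_0(\delta^\ast)$ has kernel exactly $\Span N_k$ supports this, but here I only need the solution count.

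For (iv), substitute the product law into $B_0\cdot y$. This gives $2+m_{10}$-type terms: $y_{\la A_0B_0\ra}=1$, $y_{\la A_0B_1\ra}=\delta_1$, $y_{\la A_1B_1\ra}=2\delta_1-1$, and $y_{\la A_0\ra}=1$. Hence $B_0\cdot y=2+1+2\delta_1-(2\delta_1-1)=4$. Also $m\cdot y=y_{\la A_0\ra}=1$, so the response is constant. The statements about the optimal face at $s\to0^+$ and about selection starting at order $s^2$ follow from standard perturbation of LPs and SDPs over a fixed face.

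The main obstacle is the closure step in (iii): showing that the full system $\Gamma_kN_k=0$ over all rows collapses uniformly in $k$ to the second-difference recursion, with no extra constraint on the $\delta_j$. I would first attempt a row-by-row reduction indexed by the core counts $i(\cdot)$. If that becomes unwieldy, I would fall back on the paper's verification suite for small $k$ together with a level-uniform induction.
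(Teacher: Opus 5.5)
Your treatment of (iii) and (iv) is essentially the paper's: one-sided $0$-absorption at class level, the dressing recurrence $m_{ij}=m_{i,j-1}+m_{i-1,j}-m_{i-1,j-1}$ on $i+j\le k+1$, and $B_0\cdot y=2+1+2\delta_1-(2\delta_1-1)=4$. The paper, too, relies on exact checks at small $k$ for the consistency direction and the dimension count.

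The gap is in (ii), and it already sits in the certificate you propose. No sum-of-squares certificate for $4-B_0$, at any level, can have $(1-A_0)\psi$ in the range of its Gram matrix $Z$. Suppose $Z\succeq\eps\,vv^\top$ with $v=(1-A_0)\psi$ for some (hence every smaller) $\eps>0$. Since $(1-A_0)^\ast(1-A_0)=2(1-A_0)$, the matrix $Z-\eps vv^\top\succeq0$ certifies $4-B_0\cdot y-2\eps(1-\la A_0\ra)\ge0$. In swap-symmetric class coordinates this reads $B_{2\eps}\cdot y\le4-2\eps$ on the whole level, i.e.\ $c_k(2\eps)\le 4-2\eps$. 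That contradicts $c_k(s)\ge c_Q(s)=4-s+\tfrac{s^3}6+O(s^4)>4-s$ for small $s$, which is just the quantum advantage of the tilted functional below the critical line. Words in involutions are unitary, so the same argument excludes every $W(1-A_0)\psi$, and likewise every $W(1-B_0)\psi$. This is the failure of strict complementarity recorded in Remark~\ref{rem:rank6}.

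So ``equality forces $\Gamma_2(y)v_r=0$'' never delivers $(1-A_0)\psi$ or $(1-B_0)\psi$. Consequently your $T_q$ with exactly one trailing $0$ are never produced, and neither are the dressings, which you obtain by correcting $WR_1\psi$ by $T$'s. Left multiplication cannot recover them from what complementary slackness does force either. At level $1{+}AB$ the forced span is $\Span\{w_1,\dots,w_5\}$ of Lemma~\ref{lem:dual}, and the left ideal these generate does not contain $A_0-1$. To see this, on $\mathbb C^2$ with standard basis $p,n$ take $A_0=\mathrm{diag}(1,-1)$ and $A_1=\bigl(\begin{smallmatrix}1&0\\ \beta&-1\end{smallmatrix}\bigr)$ with $\beta\ne0$, the same matrices for Bob, and $\psi=p\otimes p-\tfrac{\beta^2}4\,n\otimes n$ in the tensor product. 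All five $w_r$ annihilate $\psi$, while $(A_0-1)\psi=\tfrac{\beta^2}2\,n\otimes n\ne0$.

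The missing idea is the paper's scalar-to-vector step, which uses positivity a second time rather than only complementary slackness. First, the equations $\Gamma_{1+AB}(y)W=0$, $y_1=1$ are linear in $y$ and already force the scalars $\la A_0\ra=\la B_0\ra=1$ and $\la R_1\ra=0$ (Lemma~\ref{lem:closure}; no positivity is used). Then $\Gamma_k\succeq0$ and the telescoping of $\mathrm{rev}(w)\cdot w$ give
$\|v_{wA_0}-v_w\|^2=2-2\la A_0\ra=0$ and $\|WR_1\psi\|^2=4\la R_1\ra=0$ (Lemma~\ref{lem:telescope}). This zero-norm, real-radical deduction puts every column of $N_k$ into $\ker\Gamma_k$ at once.

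The same norm identity is also how your localization claim should be justified: $(Wv)^\top\Gamma_k(Wv)=v^\top\Gamma_kv$ together with $\Gamma_k\succeq0$ gives $\Gamma_k(Wv)=0$. The identity $\la p,\Gamma Wv\ra=\la W^\ast p,\Gamma v\ra$ that you invoke is unavailable for rows $p$ with $W^\ast p\notin\mathcal W_k$. Part (i) itself is unaffected once you replace your proposed certificate by a valid one, such as $Z_0^\ast$ of Lemma~\ref{lem:dual}.
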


The proof occupies Lemmas~\ref{lem:dual}--\ref{lem:hull}. We write
$y_0(\delta)$ for the product-law moment vector with parameters
$\delta=(\delta_1,\dots,\delta_k)$.

\begin{lemma}[rational maximal-rank dual certificate at level $1{+}AB$]
\label{lem:dual}
Let $W$ be the $9\times5$ integer matrix whose columns represent, in the
$1{+}AB$ basis $\{\psi,A_i\psi,B_j\psi,A_iB_j\psi\}$, the vectors
\[
w_1=(A_0-B_0)\psi,\quad w_2=(A_0B_0-1)\psi,\quad
w_3=\bigl((A_0-1)B_1-(B_0-1)A_1\bigr)\psi,
\]
\[
w_4=\bigl((A_0-1)+(B_0-1)+(A_0-1)B_1+(B_0-1)A_1\bigr)\psi,\quad
w_5=R_1\psi,
\]
and let
\[
E=\begin{pmatrix}
13/40&0&3/40&0&0\\ 0&1/5&0&1/20&0\\ 3/40&0&1/8&0&0\\
0&1/20&0&1/8&0\\ 0&0&0&0&1/4
\end{pmatrix}.
\]
Then $E\succ0$ (exact leading minors) and $Z_0^\ast=WEW^\top$ satisfies the
affine identity
\[
4y_1-B_0\cdot y\;=\;\la Z_0^\ast,\Gamma_{1+AB}(y)\ra
\qquad\text{for all }y
\]
(exact class-sum check). Hence $Z_0^\ast\succeq0$, $\rk Z_0^\ast=5$, and
$\range Z_0^\ast=\Span W$.
\end{lemma}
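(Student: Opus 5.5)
The lemma is a finite exact check, and I would organize it in three steps. First, positivity of $E$; second, the affine identity, proved coefficient by coefficient in the moment coordinates; third, the rank and range consequences, which follow by linear algebra.

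\emph{Step 1: $E\succ0$.} The matrix $E$ is block diagonal after permuting to the index groups $\{1,3\}$, $\{2,4\}$, $\{5\}$. The $\{1,3\}$ block has diagonal entries $13/40,1/8>0$ and determinant $\tfrac{13}{40}\cdot\tfrac18-\tfrac{9}{1600}=\tfrac{65-9}{1600}>0$. The $\{2,4\}$ block has determinant $\tfrac15\cdot\tfrac18-\tfrac1{400}=\tfrac{10-1}{400}>0$. The last entry is $1/4>0$. So every leading minor is positive and $E\succ0$.

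\emph{Step 2: the affine identity.} I would write each $w_i$ as an explicit integer column in the nine-element basis. Here $R_1$ is read as the operator $(1-A_1)(1-B_1)$ of the dressing construction (the level-$0$ dressing $D_\emptyset$), so $w_5=e_\psi-e_{A_1}-e_{B_1}+e_{A_1B_1}$. Since $\la Z_0^\ast,\Gamma(y)\ra=\operatorname{tr}(E\,W^\top\Gamma(y)W)=\sum_{i,l}E_{il}\,\la w_i,\Gamma(y)w_l\ra$, I only need the bilinear forms $\la w_i,\Gamma w_l\ra$ for the nonzero entries of $E$: the five diagonal pairs and the pairs $(1,3)$ and $(2,4)$. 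Each is a linear form in the classes. It is obtained by reducing $\mathrm{rev}(p)q$ per party (using $A_i^2=1$ and commutation) and then mapping to swap-symmetrized classes.

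Summing with weights $E_{il}$ (the off-diagonal ones counted twice) gives a linear form in $y$. I would then compare it coefficient by coefficient with $4y_1-2y_{\la A_0\ra}-y_{\la A_0B_0\ra}-2y_{\la A_0B_1\ra}+y_{\la A_1B_1\ra}$. The classes involved are all those of length at most $4$ that appear in $\Gamma_{1+AB}$: $\la A_1\ra$, $\la A_0A_1\ra$, $\la A_0A_1B_0B_1\ra$, and so on. Every coefficient outside the support of $B_0$ must cancel. This cancellation is the main obstacle. It is a finite but error-prone bookkeeping task, because classes identify $\la A_0A_1\ra$ with $\la A_1A_0\ra$ via dagger and with the corresponding $B$ words via swap. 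I would carry it out with exact rational arithmetic, as the paper does, and check by hand a few representative classes, e.g. that the coefficient of $y_1$ is $4$ and that the $\la A_0A_1B_0B_1\ra$ terms cancel.

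\emph{Step 3: rank and range.} Since $E\succ0$, $Z_0^\ast=WEW^\top=(WE^{1/2})(WE^{1/2})^\top\succeq0$, and $\range Z_0^\ast=\range W$. For the rank, I would show the five columns of $W$ are linearly independent by exhibiting a nonzero $5\times5$ minor. Alternatively, each $w_i$ has a distinguishing support: $w_5$ is the only column containing $e_{A_1B_1}$, $w_3$ and $w_4$ separate via the signs on $A_1B_0$ and $A_0B_1$, and $w_1,w_2$ live on $\{A_0,B_0,A_0B_0,\psi\}$. Hence $\rk Z_0^\ast=5$ and $\range Z_0^\ast=\Span W$.

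As a byproduct, the identity together with $Z_0^\ast\succeq0$ gives $B_0\cdot y\le4y_1$ for every $y$ with $\Gamma_{1+AB}(y)\succeq0$. This is the ingredient used afterwards for Theorem~\ref{thm:L1}.
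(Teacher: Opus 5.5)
Your proposal is correct and follows the same route as the paper: positivity of $E$ by exact minors, the affine identity as a finite exact class-sum comparison, and rank/range from $Z_0^\ast=(WE^{1/2})(WE^{1/2})^\top$ together with linear independence of the columns of $W$. You justify that independence explicitly, whereas the paper leaves it implicit. Carrying out the comparison confirms the identity: the constant is $4$, the coefficients on $\la A_0\ra,\la A_0B_0\ra,\la A_0B_1\ra,\la A_1B_1\ra$ are $-2,-1,-2,+1$, and every other class cancels. One small correction to your spot check: the class $\la A_0A_1B_0B_1\ra$ never arises, because only $w_2$ carries $A_0B_0$, only $w_5$ carries $A_1B_1$, and $E_{25}=0$. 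The genuinely nontrivial cancellation is of $\la A_1A_0B_0B_1\ra$, between the $(3,3)$ and $(4,4)$ terms.
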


\begin{remark}\label{rem:rank6}
(i) The certificate $Z_0$ of~\cite{overshoot} also has rank $5$ but a
smaller forced subspace; $Z_0^\ast$ is built so that its range is exactly
the maximal forced subspace. (ii) Rank $6$ --- range containing
$(1-A_0)\psi$ itself --- is impossible: strict complementarity fails at
this face (an SDP scan attains maximal minimal eigenvalue $0$ with an
explicit null direction on the relation span of the one-dimensional face
family), which is exactly why the linear-closure step
(Lemma~\ref{lem:closure}) is needed.
\end{remark}

\begin{lemma}[linear closure at $1{+}AB$]\label{lem:closure}
If $\Gamma$ is PSD on the $1{+}AB$ words with $y_1=1$ and $\Gamma W=0$,
then the $17$ moment classes are forced to
\[
\la A_0\ra=\la B_0\ra=\la A_0B_0\ra=1,\quad
\la A_0B_1\ra=\la B_1\ra,\quad \la A_1B_0\ra=\la A_1\ra,
\]
\[
\la A_0A_1\ra=\la A_1\ra,\quad \la B_0B_1\ra=\la B_1\ra,\quad
\la A_1B_1\ra=\la A_0A_1B_0B_1\ra=\la A_1A_0B_0B_1\ra=\la A_1\ra+\la
B_1\ra-1,
\]
with exactly $(\la A_1\ra,\la B_1\ra)$ free: the linear system
$\{\Gamma(y)W=0,\;y_1=1\}$ has rank $15$ of $17$ (exact rank check; no
positivity is used in this step). In particular $\la A_0\ra=1$ is forced
\emph{linearly}, although the vector relation $(1-A_0)\psi=0$ is not
dual-forced (Remark~\ref{rem:rank6}).
\end{lemma}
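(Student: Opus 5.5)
The claim is purely linear: no positivity of $\Gamma$ is used. The plan is to write out the $9\times 5$ system $\Gamma_{1+AB}(y)W=0$ explicitly in the $17$ class coordinates and solve it. I will first list the $17$ swap-symmetrized classes that appear in $\Gamma_{1+AB}$, i.e.\ products $\mathrm{rev}(p)q$ with $p,q\in\{\psi,A_i\psi,B_j\psi,A_iB_j\psi\}$. Then each column $w_r$ contributes nine scalar equations $\sum_q \Gamma_{pq}(w_r)_q=0$, one for each row $p$. Each equation is an integer linear relation among classes. This yields $45$ linear equations plus $y_1=1$, and the lemma asserts that this system has rank $15$ on $17$ unknowns.

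I will derive the forced values in a definite order, choosing rows $p$ so that each equation involves few classes.

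From $w_2=(A_0B_0-1)\psi$ with row $p=\psi$ we get $\la A_0B_0\ra=1$.

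From $w_1=(A_0-B_0)\psi$ with row $p=\psi$ we get $\la A_0\ra=\la B_0\ra$. This is automatic under swap symmetry, so we also need row $p=A_0\psi$, which gives $1-\la A_0B_0\ra=0$, consistent but not new. Row $p=B_0\psi$ gives $\la A_0B_0\ra-1=0$.

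To obtain $\la A_0\ra=1$, I will use $w_2$ with row $p=A_0\psi$: it gives $\la B_0\ra-\la A_0\ra=0$, which is again not new. The equation I expect to do the work is $w_4$ with row $p=\psi$:
\[
2(\la A_0\ra-1)+(\la A_0B_1\ra-\la B_1\ra)+(\la A_1B_0\ra-\la A_1\ra)=0,
\]
combined with $w_3$ with row $p=\psi$, which gives $\la A_0B_1\ra-\la B_1\ra=\la A_1B_0\ra-\la A_1\ra$. Rows $p=A_1\psi$ and $p=B_1\psi$ of $w_2$ give $\la A_1B_0\ra=\la A_1A_0\ra$-type identities, tying $\la A_0A_1\ra$ to $\la A_1\ra$ via the corresponding $w_1$ rows. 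The rows $p=A_iB_j\psi$ of $w_3,w_4,w_5$ then produce the length-three and length-four relations, including
\[
\la A_1B_1\ra=\la A_0A_1B_0B_1\ra=\la A_1A_0B_0B_1\ra=\la A_1\ra+\la B_1\ra-1 .
\]
Here $w_5=R_1\psi$ is the dressing $(1-A_1)(1-B_1)\psi$ and supplies the product law. Eliminating step by step should show that every class is an affine function of $(\la A_1\ra,\la B_1\ra)$, and in particular that $\la A_0\ra=1$ follows linearly, for instance from $w_4$ row $B_1\psi$ combined with the others.

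For the rank claim there are two parts. The upper bound (rank at most $15$) follows by exhibiting the explicit two-parameter family of the product law and checking that it satisfies all $45$ equations. This also confirms consistency: it is the level-$1{+}AB$ instance of the product law, realized by commuting product states with $A_0\psi=B_0\psi=\psi$. The lower bound (rank at least $15$) follows from the elimination, which shows that the displayed values are forced.

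The main obstacle is bookkeeping rather than an idea. The equations are numerous, and the class identifications (dagger, party swap, per-party reduction such as $A_0A_1A_0$ versus the core $A_1$ after padding) must be handled correctly. In particular, I must confirm that $\la A_0\ra=1$ is genuinely forced by some combination, despite Remark~\ref{rem:rank6}. In practice I would settle the rank by an exact integer row reduction of the $46\times17$ matrix, as the paper does, and use the hand derivation above only to identify the two free directions.
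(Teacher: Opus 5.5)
Your plan is the paper's own argument: the lemma is an exact rank computation on the $45$ equations of $\Gamma_{1+AB}(y)W=0$ plus $y_1=1$, using no positivity, and exhibiting the two-parameter product-law family together with exact row reduction proves it. Two slips in your hand sketch need correcting: the $17$ unknowns are dagger classes with the parties kept distinct (party-swap identification would leave only $11$ classes and force $\la A_1\ra=\la B_1\ra$, so $\la A_0\ra=\la B_0\ra$ is a genuine equation here, not automatic), and the row that pins $\la A_0\ra=1$ is not $w_4$ against $B_1\psi$ --- every row of $w_1,\dots,w_4$, and of $w_5$ against $\psi,A_1\psi,B_1\psi,A_0B_0\psi,A_1B_1\psi$, is satisfied on a family with $\la A_0\ra$ free and $\la A_0B_1\ra-\la B_1\ra=\la A_1B_0\ra-\la A_1\ra=1-\la A_0\ra$, whereas $w_5$ against $A_0\psi$, i.e.\ $\la A_0\ra-\la A_0A_1\ra-\la A_0B_1\ra+\la A_0A_1B_1\ra=0$, evaluates to $4(\la A_0\ra-1)$ on that family and is what forces $\la A_0\ra=1$.
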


\begin{lemma}[scalar to vector: $2\times2$-minor telescoping]
\label{lem:telescope}
Let $\Gamma_k\succeq0$ at any level $k\ge2$, $y_1=1$, and suppose the three
scalar identities $\la A_0\ra=1$, $\la B_0\ra=1$, $\la
R_1\ra:=1-\la A_1\ra-\la B_1\ra+\la A_1B_1\ra=0$ hold (Lemma~\ref{lem:closure}
applied to the $1{+}AB$ principal submatrix). Then $\Gamma_k N_k=0$.
\end{lemma}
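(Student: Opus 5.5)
The plan is to pass from scalar moment identities to vector identities in a Gram representation, and then read the columns of $N_k$ off those vector identities. Since $\Gamma_k\succeq0$, write $\Gamma_k=G^\top G$ with Gram vectors $v_p$ ($p\in\mathcal W_k$), so $\Gamma_k(y)_{pq}=\la v_p,v_q\ra$. Then $\Gamma_k N_k=0$ is equivalent to $\sum_p (N_k)_{p,c}\,v_p=0$ for every column $c$. The reason is that $\Gamma_k n=0$ holds iff $n^\top\Gamma_k n=0$ for PSD $\Gamma_k$, and $n^\top\Gamma_k n=\|\sum_p n_p v_p\|^2$.

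\textbf{Step 1: trivial reductions.} Take $q=(a,b)$ with $a=a'0$, and consider the pair $(q,q^\sharp)$. By the moment-class rule, $\Gamma_{q,q^\sharp}=y[\mathrm{class}(\mathrm{rev}(a)a')\ldots]$. The per-party reduction of $\mathrm{rev}(a)a'$ is $A_0$ when only Alice's part is stripped, so $\la v_q,v_{q^\sharp}\ra=\la A_0\ra=1$. Both $v_q$ and $v_{q^\sharp}$ have unit norm, since diagonal entries are $y_1=1$. Equality in Cauchy--Schwarz then forces $v_q=v_{q^\sharp}$, which is exactly the $2\times2$ minor $\bigl(\begin{smallmatrix}1&1\\1&1\end{smallmatrix}\bigr)$ being singular. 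When both parts are stripped, I would route through an intermediate word pair that strips one party at a time and telescope. If the intermediate pair falls outside $\mathcal W_k$ (it cannot, since stripping only shortens), the lengths are still fine. The one point to check is that the class of $\mathrm{rev}(a)a'$ reduces to the single letter $A_0$; this follows because $\mathrm{rev}(a)=0\,\mathrm{rev}(a')$ and $\mathrm{rev}(a')a'=1$. The same argument with $\la B_0\ra=1$ handles Bob.

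\textbf{Step 2: dressings.} Modulo Step 1, $D_W$ corresponds to the vector $u_W:=v_{(w_a',w_b')}-v_{(w_a',w_b.1)}-v_{(w_a.1,w_b')}+v_{(w_a.1,w_b.1)}$. I would compute $\|u_W\|^2$ entrywise, using the product-type moments. Each inner product has the form $y[\mathrm{class}(\mathrm{rev}(w)\,X\,w)]$ with $X\in\{1,A_1,B_1,A_1B_1\}$ conjugated by the common word $W$ up to trailing-zero padding. Step 1 already identifies the trailing-$0$ strips, so these reduce to the $16$ entries of the $(1,A_1,B_1,A_1B_1)$ block after conjugation by $W$.

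The cleaner route is induction on $|W|$, which I expect to be the crux. The base case $W=\emptyset$ gives $\|u_\emptyset\|^2=4\la R_1\ra=0$, because $(1-A_1)^2=2(1-A_1)$ and likewise for $B_1$. For the inductive step, write $W=W_0\cdot(\text{letter }0)$ or extend by the pattern $w\mapsto w\,1\,0$. Using the Step 1 equalities $v_{w0}=v_{w}$, which hold where trailing zeros reappear, I would express $u_W$ as a combination of Gram vectors whose Gram entries coincide with those of the shorter dressing, shifted by the alternating structure. I would then show $\|u_W\|^2=c\,\|u_{W_0}\|^2$ (or a sum of such squared norms) by matching moment classes.

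\textbf{Main obstacle.} The difficulty is this matching of classes. Conjugating by a longer alternating word need not preserve the classes, so I would instead take a Cauchy--Schwarz telescope. Each $2\times2$ minor $\begin{pmatrix}\la v_x,v_x\ra & \la v_x,v_z\ra\\ \cdot & \la v_z,v_z\ra\end{pmatrix}$ whose determinant vanishes by a previously established scalar identity yields a vector identity. This lets me establish $\sum (D_W)_p v_p=0$ one corner at a time, ordered by the strictly longest corner exactly as in the independence argument of Proposition~\ref{prop:count}.
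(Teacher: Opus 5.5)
Your Step~1 is correct and is the paper's own argument. The $2\times2$ minor of the pair $(q,q^\sharp)$ has entries $1,\langle A_0\rangle,1$, because $\mathrm{rev}(a)\,a'$ reduces to the single letter $0$. Hence $v_q=v_{q^\sharp}$, and when both parts are stripped you telescope through the one-sided strip.

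The gap is in Step~2, and it comes from misreading the moment classes. The dressing appends $1$ at the \emph{right} end, so $W$ is a common left prefix, not a conjugating word. Take the unstripped corners $W\cdot d$, $d\in\{1,A_1,B_1,A_1B_1\}$; all lie in $\mathcal W_k$ since $|W|\le k-2$. Then $\mathrm{rev}(Wd)\cdot Wd'=d\,\mathrm{rev}(W)\,W\,d'$, and $\mathrm{rev}(W)\,W$ reduces to the empty word in each party. So all $16$ Gram entries equal $y[\mathrm{class}(d_ad'_a,d_bd'_b)]$, independently of $W$, and for every $W$ at once
\[
\Bigl\|\sum_d s_d\,v_{Wd}\Bigr\|^2=\langle R_1^\dagger R_1\rangle=4\langle R_1\rangle=0 .
\]
This unstripped vector differs from $D_W$ only by the trivial reductions $T_{(w_a,w_b)}$, $T_{(w_a,w_b.1)}$, $T_{(w_a.1,w_b)}$ (those that exist), which Step~1 already placed in the kernel. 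No induction on $|W|$ is needed: your hoped-for $\|u_W\|^2=c\,\|u_{W_0}\|^2$ holds with $c=1$ and $W_0=\emptyset$, for exactly this reason. This cancellation is what the lemma's name ("telescoping") refers to.

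Because you instead treat the entries as conjugates $\mathrm{rev}(W)XW$ whose classes "need not be preserved," you drop the computation that works. Your fallback cannot replace it. A $2\times2$ minor only gives $v_x=v_z$ for two unit vectors whose inner product is forced to be $1$. But $D_W$ is a genuinely four-term relation, and none of its corners coincide. Already for $W=\emptyset$, $\langle v_{(\emptyset,\emptyset)},v_{(A_1,\emptyset)}\rangle=\langle A_1\rangle$ is a free face coordinate, equal to $\tfrac12$ at $\delta^\ast$. So a corner-by-corner Cauchy--Schwarz telescope cannot yield $\sum_p(D_W)_p v_p=0$. You need the full $4\times4$ norm computation, together with the cancellation $\mathrm{rev}(W)\,W=\emptyset$.
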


\begin{proof}
Work in the Gram picture: $v_w$ = Gram vector of basis word $w$. For every
basis word $w$ such that $wA_0$ is in the basis,
\[
|v_{wA_0}-v_w|^2=\Gamma_{wA_0,wA_0}-2\Gamma_{wA_0,w}+\Gamma_{w,w}
=1-2\la A_0\ra+1=0,
\]
because $\mathrm{rev}(w)\cdot w$ telescopes to the empty word, so the three
entries are $y_1$, $\la A_0\ra$, $y_1$ regardless of $w$. Hence all
trailing-$A_0$ reductions (and trailing-$B_0$ likewise) lie in
$\ker\Gamma_k$. For every dressing word $W$,
\[
|W R_1\psi|^2=\sum_{d,d'}s_ds_{d'}\la (Wd)^\dagger(Wd')\ra
=\la R_1^\dagger R_1\ra=4\la R_1\ra=0
\]
by the same telescoping ($\mathrm{rev}(W)\cdot W$ cancels), so all
dressings lie in $\ker\Gamma_k$. This is part 2 of Theorem~\ref{thm:L1},
uniformly in $k$, from one dual certificate, one linear closure, and
$2\times2$ minors.
\end{proof}

\begin{lemma}[solution set = product law]\label{lem:hull}
The solution set $\{\Gamma_k(y)N_k=0,\;y_1=1\}$ (itself an affine
subspace) equals the product-law family, of affine dimension exactly $k$.
\end{lemma}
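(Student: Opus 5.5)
We prove the equality of affine sets by two inclusions together with a dimension count. Throughout we view the condition $\Gamma_k(y)N_k=0$ as a linear system in the class coordinates $y$: one equation per basis row $p\in\mathcal W_k$ and per column of $N_k$. We never use positivity; only this linear system and $y_1=1$ enter.

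\textbf{Solutions satisfy the product law.} We read off the system column by column.

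A trivial-reduction column $T_q$ gives, for every row $p$, the equation $y[\mathrm{class}(\mathrm{rev}(p)q)]=y[\mathrm{class}(\mathrm{rev}(p)q^\sharp)]$. Letting $p$ range over $\mathcal W_k$, we induct on the number of $0$-letters. The aim is to show that every class equals the class of its $0$-free core: $0$'s are stripped at the right end directly, and at the left end via the dagger/reversal symmetry. Interior $0$'s, as in $A_1A_0A_1$, cannot be stripped; they belong to the alternating core words $\alpha_i,\beta_j$. This step should reduce every moment to one of the quantities $\delta_j$, $\la\alpha_i\ra$ and $m_{ij}$, which is the ``$0$-padding invisible'' clause.

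A dressing column $D_W$ with row $p$ gives
\[
y[p^\dagger(w_a',w_b')]-y[p^\dagger(w_a',w_b.1)]-y[p^\dagger(w_a.1,w_b')]+y[p^\dagger(w_a.1,w_b.1)]=0 .
\]
Choose $p=(w_a.1,\emptyset)$ or $p=(w_a',\emptyset)$ and similar single-party rows. After core reduction these become second differences in the lengths $(i,j)$ of the form $m_{i+1,j+1}-m_{i+1,j}-m_{i,j+1}+m_{ij}$ and their shifts. Summing these second differences against the boundary data $m_{i0}=\la\alpha_i\ra$ and $m_{0j}=\delta_j$ gives $m_{ij}=\delta_i+\delta_j-1$ by induction on $i+j$. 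Here the base case is $m_{11}=\delta_1+\delta_1-1$, which is the $R_1$ relation $\la R_1\ra=0$ from Lemma~\ref{lem:closure}. Swap symmetry identifies $\la\alpha_i\ra$ with $\delta_i$, and the range $|W|\le k-2$ reaches exactly the classes of total length $\le2k$. Hence every solution lies in the product-law family, which depends only on $(\delta_1,\dots,\delta_k)$. So the solution set has affine dimension at most $k$.

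\textbf{The product law solves the system.} For the reverse inclusion we check that every $y_0(\delta)$ satisfies $\Gamma_k(y_0(\delta))N_k=0$. The cleanest route is a commuting model. Realize $y_0(\delta)$ as an affine combination of moment vectors of deterministic or classical commuting-party states in which $A_0\psi=B_0\psi=\psi$ and $(1-A_1)(1-B_1)\psi=0$. For each such model, $N_k$ lies in the kernel by the same Gram computation as in Lemma~\ref{lem:telescope}. The product law $m_{ij}=\delta_i+\delta_j-1$ is affine in $\delta$, and so is the kernel condition, so the inclusion extends to all affine combinations. It therefore suffices to exhibit $k+1$ affinely independent such models spanning the $\delta$-space. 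For instance, take the vectors $\delta=\mathbf 1$ and $\delta=\mathbf 1-2e_j$ realized on finite-dimensional commuting representations. If realizing individual $\delta$ coordinates proves awkward, we fall back on the direct class-by-class substitution, which is linear and uniform in $k$.

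\textbf{Main obstacle.} We expect the hardest step to be the bookkeeping in the first inclusion. Each equation of $\Gamma N_k=0$ must land, after per-party reduction, dagger and swap, on exactly the core classes claimed. The extremal classes of length near $2k$ must still be constrained; this is where $|W|\le k-2$ and the post-reduction form of $D_W$ stressed after the definition of $N_k$ matter. We would organize this by the bidegree $(i,j)$ of $1$-letters and check the top layer separately. As a sanity check, the dimension count should match Proposition~\ref{prop:count}: the number of classes minus the rank of the system should equal $k$.
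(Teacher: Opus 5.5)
Your plan for $\subseteq$ follows the paper's outline: trivial reductions give $0$-absorption, dressings give a second-difference recurrence, and the boundary data $m_{i0}=m_{0i}=\delta_i$, $m_{00}=1$ then force $m_{ij}=\delta_i+\delta_j-1$. The gap is that the rows you select do not produce that recurrence.

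\textbf{The rows you choose are too weak.} The rows $(w_a.1,\emptyset)$ and $(w_a',\emptyset)$ pair $D_W$ against its own $A$-word. So the $A$-parts of the four moment words collapse to $\emptyset$ and to $A_1A_0$ or $A_0A_1$, both of which strip to $A_1$. You only get relations among $m_{0,\cdot}$ and $m_{1,\cdot}$.

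\textbf{Single-party rows cannot reach the top diagonal for odd $k$.} Take $k=3$ and $m_{22}=\la A_1A_0A_1B_1B_0B_1\ra$, of length $6=2k$. An entry $\Gamma_{pq}$ carrying this class needs $|p|=|q|=3$ with no cancellation. Such a $q$ has no trailing $0$, and it is too long to be a stripped word $q'^\sharp$, so it lies in the support of no $T$. It must therefore be the top corner $(w_a.1,w_b.1)$ of a dressing with $|W|=1$. Hence the only equations of $\Gamma_3(y)N_3=0$ that contain $m_{22}$ come from $D_{(A_0,\emptyset)}$ with row $(A_1,B_0B_1)$, or from $D_{(\emptyset,B_0)}$ with row $(A_0A_1,B_1)$. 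Your equations never mention $m_{22}$, so they cannot give affine dimension $\le k$. This is exactly the ``top layer'' you defer.

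\textbf{The missing step is the paper's choice of rows.} For $i,j\ge1$ with $i+j\le k+1$, write $U=(\alpha_i',\beta_j')$, of length $2(i+j)-4\le 2k-2$, as $\mathrm{rev}(p)\cdot W$. Take $W$ to consist of suffixes of $\alpha_i',\beta_j'$ of total length $\min(|U|,k-2)$; these never end in $1$, and then $|p|\le k$. In general $p$ must involve both parties. The equation at $(p,D_W)$ is $m_{i-1,j-1}-m_{i,j-1}-m_{i-1,j}+m_{ij}=0$.

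\textbf{The model argument for $\supseteq$ fails as stated.} In a deterministic or classical model, $B_0$ and $B_1$ commute and $B_0\psi=\psi$. Then $\la\beta_\ell\ra=\la B_1^\ell\ra$, so every such model has $\delta=(\delta_1,1,\delta_1,1,\dots)$. That is a line, not $k+1$ affinely independent points.

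Your proposed points $\mathbf 1-2e_j$ are realizable by no model at all:
\begin{itemize}
\item if $j\ge2$, then $\delta_1=1$ forces $B_1\psi=\psi$, and hence every $\delta_\ell=1$;
\item if $j=1$, then $\delta_1=-1$ forces $A_1\psi=B_1\psi=-\psi$, contradicting $(1-A_1)(1-B_1)\psi=0$.
\end{itemize}
The model route can be rescued only with genuinely non-commuting dihedral models. An example is the defect state in the proof of Theorem~\ref{thm:L2}: its $\delta_\ell=1-p(1-c_\ell)$ affinely span $\mathbb R^k$ as $\mu$ varies, because $1,\cos\varphi,\dots,\cos k\varphi$ are linearly independent.

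Your fallback is what the paper actually does, and it is short. Every scalar equation is either a $0$-padding invariance or a four-point relation $m_{i,j}-m_{i+\eps,j}-m_{i,j+\eps'}+m_{i+\eps,j+\eps'}=0$ with $\eps,\eps'=\pm1$. The additively separable $m_{ij}=\delta_i+\delta_j-1$ (with $\delta_0=1$) satisfies all of them identically.
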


\begin{proof}
($\supseteq$) The product-law vector satisfies $\Gamma(y_0(\delta))N_k=0$
identically in $\delta$: every scalar consequence of the lattice is either
a $0$-padding invariance (trivial reductions) or
$\la U_aU_b(1-A_1)(1-B_1)\ra=0$ with $|U_a|+|U_b|\le2k-2$ (dressings), and
the latter expands to the four-point relation
$m_{i,j}-m_{i+\eps,j}-m_{i,j+\eps'}+m_{i+\eps,j+\eps'}=0$
($\eps,\eps'=\pm1$ according to the last letters), which
$m_{ij}=\delta_i+\delta_j-1$ satisfies in every case (exact machine check
of the identity in $\delta$, $k\le4$).

($\subseteq$) The linear system forces the product law: one-sided
$0$-absorption is available at class level for every class word (split any
class word $(a,b)$, $|a|+|b|\le2k$, into two basis halves so that the
$0$-end to be absorbed lies at the outer end of one half; the relation row
$(p,T_q)$ then equates the class to its stripped version), and the dressed
relations give the recurrence
\[
m_{ij}=m_{i,j-1}+m_{i-1,j}-m_{i-1,j-1},\qquad i,j\ge1,\;i+j\le k+1
\]
(take $U=(\alpha_i',\beta_j')$, the cores minus their last letters; length
$2(i{+}j)-4\le2k-2$ exactly when $i+j\le k+1$), whose unique solution with
boundary $(m_{i0},m_{0j})=(\delta_i,\delta_j)$ is the product law. The
affine dimension is exactly $k$ (exact homogeneous rank check, $k\le4$).
Part 4 of Theorem~\ref{thm:L1}:
$B_0\cdot y=2m_{00}+m_{00}+2m_{01}-m_{11}
=2+1+2\delta_1-(2\delta_1-1)=4$. This completes the proof of
Theorem~\ref{thm:L1}.
\end{proof}

\subsection{L2: the kernel law}

\begin{theorem}[kernel law]\label{thm:L2}
Let $\delta^\ast=(\tfrac12,\tfrac14,\tfrac12,\tfrac12,\dots,\tfrac12)$
(truncated to length $k$). For every $k\ge2$:
\begin{enumerate}
\item $\ker\Gamma_k(y_0(\delta^\ast))=\Span N_k$ exactly (dimension
  $2k^2$);
\item $\rk\Gamma_k=2k+1$, splitting as $(k{+}1,k)$ into party-symmetric
  and antisymmetric parts;
\item the same holds on an explicit open family of $\delta$'s.
\end{enumerate}
\end{theorem}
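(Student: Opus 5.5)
We will prove the kernel law by exhibiting an explicit Gram realization of $y_0(\delta^\ast)$ whose Gram vectors span a space of dimension exactly $2k+1$, and then comparing dimensions.

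\textbf{Lower bound on the kernel.} By Theorem~\ref{thm:L1}, part 3, $y_0(\delta^\ast)$ lies on the product-law hull, so $\Gamma_k(y_0(\delta^\ast))N_k=0$. Proposition~\ref{prop:count} shows the $2k^2$ columns of $N_k$ are independent. Hence $\ker\Gamma_k\supseteq\Span N_k$ and $\rk\Gamma_k\le |\mathcal W_k|-2k^2=2k+1$. It then suffices to prove $\rk\Gamma_k\ge2k+1$; this gives part 1 and the rank count in part 2 together.

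\textbf{Commuting-party model.} Following the abstract's description of L2, we build one explicit commuting-party density operator $\rho$ realizing $y_0(\delta^\ast)$ for all $k$ simultaneously. The natural candidate lives on $\ell^2$ of a dihedral-type space for each party, with $A_0,A_1$ acting as two reflections and $B_0,B_1$ likewise. The state $\rho$ is chosen so that $A_0\psi=B_0\psi=\psi$ and $R_1\psi=0$ hold; this forces the product law. The single-party moments $\la\alpha_i\ra=\delta_i$ are fixed by the spectral data of $A_1$ relative to $A_0$: we choose a finitely supported spectral measure on the dihedral "angle" whose moments give $\delta=(\tfrac12,\tfrac14,\tfrac12,\tfrac12,\dots)$.

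\textbf{Rank lower bound.} The quotient of the basis by $\Span N_k$ has a canonical representative set of $2k+1$ irreducible words: the empty word, and for each length $\ell\le k$ one pure-Alice and one pure-Bob word ending in $1$. We would check that the dressings eliminate all mixed irreducible words modulo these. The Gram matrix of these $2k+1$ vectors is block diagonal under party swap. The symmetric block has size $k+1$, spanned by $\psi$ and $(\alpha_\ell+\beta_\ell)\psi$; the antisymmetric block has size $k$, spanned by $(\alpha_\ell-\beta_\ell)\psi$. Under the product law, the entries are explicit affine expressions in the $\delta$'s: Hankel/Toeplitz-type matrices built from $\delta_{i+j}$-like single-party core moments. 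So the task reduces to showing that two structured matrices are nonsingular at $\delta^\ast$ for every $k$. This yields the $(k{+}1,k)$ split.

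\textbf{Main obstacle and remaining parts.} The main obstacle is establishing nonsingularity uniformly in $k$. I would first attack it through the model: show that the vectors $\alpha_\ell\psi$, $\ell\le k$, are linearly independent in the single-party Hilbert space. This holds if the spectral measure of the dihedral angle has infinite support, or at least more than $k$ points. Constant tail values $\tfrac12$ suggest instead a finite-dimensional structure, so I would then compute the Gram determinant in closed form, expecting a product formula that is nonzero. Part 3 follows from the same argument: the determinants are polynomials in $\delta$ that are nonzero at $\delta^\ast$, so they remain nonzero on a Zariski-open neighbourhood. Intersecting with the open PSD condition, which holds with the same kernel by continuity of the positive eigenvalues, gives an explicit open family of $\delta$'s.
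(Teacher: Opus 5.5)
Your reduction matches the paper's. $\Gamma_k(y_0(\delta))N_k=0$ (Lemma~\ref{lem:hull}) and independence of the $2k^2$ columns of $N_k$ give $\rk\Gamma_k\le 2k+1$. Everything then hinges on showing that the residual vectors $\psi$, $a_\ell\psi$, $b_\ell\psi$ ($\ell\le k$) have a nonsingular Gram matrix $G_k(\delta^\ast)$ for \emph{every} $k$. That is the whole content of the theorem, and your proposal does not prove it: you name two possible attacks and carry out neither.

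The premise steering your choice between them is also wrong. In any state with $A_0\psi=\psi$ one has $\delta_j=\la\psi,(A_0A_1)^j\psi\ra$. So the full sequence $\delta^\ast$ determines the spectral measure of the rotation $A_0A_1$ in $\psi$: it is a Dirac mass $\tfrac12$ at $\varphi=0$ plus the density $\tfrac1{2\pi}\sin^2\varphi$ on $[0,2\pi)$. The constant tail $\delta_j=\tfrac12$ reflects that atom (a trivial branch), not finite-dimensionality. Indeed a single realization valid for all $k$, as you announce, must be infinite-dimensional, since its GNS rank would be $2k+1$ for every $k$. So no ``finitely supported spectral measure'' reproduces $\delta^\ast$ at all levels, and the hoped-for closed-form determinant is neither supplied nor needed.

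The missing idea is the paper's explicit three-branch mixture
$\rho=(1-2p)|x\otimes y\ra\la x\otimes y|+p\,\rho_A\otimes|y\ra\la y|+p\,|x\ra\la x|\otimes\rho_B$.
Here $x,y$ are trivial vectors, $\rho_A$ is an $A_0$-invariant dihedral state with angle measure $\mu$, and $\rho_B$ is its mirror. It satisfies $A_0\rho=B_0\rho=\rho$ and $(1-A_1)(1-B_1)\rho=0$ by construction, and gives $\delta_j=1-p(1-c_j)$ with $c_j=\int\cos(j\varphi)\,d\mu$. The choice $p=\tfrac12$, $d\mu=\tfrac1\pi\sin^2\varphi\,d\varphi$ yields $c_j=-\tfrac12[j=2]$, i.e.\ exactly $\delta^\ast$.

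Independence then follows from two ingredients:
\begin{enumerate}
\item $\mu$ has a.e.-positive density, so $\{1,\cos j\varphi,\sin j\varphi\}$ is linearly independent in $L^2(\mu)$;
\item a branch-by-branch argument on the GNS space of $\rho$.
\end{enumerate}
Checking single-party independence of the $a_\ell\psi$, as you propose, is not sufficient. On the $\rho_A$ branch every $b_\ell\psi$ coincides with $\psi$. That branch therefore only forces the $a$-coefficients to vanish and the $\psi$-coefficient to cancel the sum of the $b$-coefficients. The $\rho_B$ branch is needed to kill the $b$'s, and only then does the constant vanish.

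Your part~3 argument (openness of $G_k\succ0$ in $\delta$ at fixed $k$) is fine once part~1 holds. It yields an unspecified neighbourhood rather than an explicit family, however. One notational point: the residual words are the $a_\ell$ of all lengths $\ell\le k$ ending in $1$, not the odd-length cores $\alpha_\ell$.
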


\begin{proof}
\emph{Reduction to the core Gram.} The $2k+1$ residual words
$\{\psi,\;a_\ell\psi,\;b_\ell\psi:\ell=1..k\}$ ($a_\ell$ = the unique
alternating $A$-word of length $\ell$ ending in $1$) together with $N_k$
span $\mathbb R^{|\mathcal W_k|}$: every basis column reduces to residuals
modulo relations (strip trailing $0$'s; if both parts end in $1$, write
$(a,b)=W\cdot(A_1,B_1)$ and use the dressing $D_W$, $|W|=|a|+|b|-2\le k-2$).
Since $\Gamma(y_0(\delta))N_k=0$ identically (Lemma~\ref{lem:hull}), the
form descends: $\rk\Gamma_k=\rk G_k(\delta)$, where $G_k$ is $\Gamma$
restricted to residuals, with closed-form entries ($\delta_0:=1$)
\[
[\psi,\psi]=1,\ \ [\psi,a_\ell]=\delta_{\lceil\ell/2\rceil},\ \
[a_\ell,a_m]=\begin{cases}\delta_{|\ell-m|/2}&\ell\equiv m\ (2)\\
\delta_{(\ell+m+1)/2}&\ell\not\equiv m\ (2)\end{cases},\ \
[a_\ell,b_m]=\delta_{\lceil\ell/2\rceil}+\delta_{\lceil m/2\rceil}-1 .
\]
So Theorem~\ref{thm:L2} is equivalent to: $G_k(\delta)\succ0$.

\emph{The defect realization.} On a commuting-party Hilbert space
$H_A\otimes H_B$ take the mixed state
\[
\rho=(1-2p)\,|x\otimes y\ra\la x\otimes y|
\;+\;p\,\rho_A\otimes|y\ra\la y|\;+\;p\,|x\ra\la x|\otimes\rho_B,
\]
where $x,y$ are trivial vectors ($A_0x=A_1x=x$, $B_0y=B_1y=y$) and
$\rho_A$ is $A_0$-invariant with dihedral spectral decomposition:
two-dimensional blocks at reflection angle $\theta$ with $A_0$-fixed vector
$e_\theta$, mixing measure $\mu$, so that
$c_j:=\la\alpha_j\ra_{\rho_A}=\int\cos(j\varphi)\,d\mu(\varphi)$
($\varphi=2\theta$); $\rho_B$ mirrors $\rho_A$. Then $A_0\rho=B_0\rho=\rho$,
$(1-A_1)(1-B_1)\rho=0$, and
\[
\delta_j=1-p(1-c_j),\qquad
m_{ij}=(1-2p)+pc_i+pc_j=\delta_i+\delta_j-1:
\]
the product law holds automatically, so $\rho$ realizes the level-$k$ face
point $y_0(\delta)$ for \emph{every} $k$ simultaneously.

\emph{The base point.} The choice $p=\tfrac12$,
$d\mu=\tfrac1\pi\sin^2\!\varphi\,d\varphi$ gives
$c_j=-\tfrac12\,[\,j=2\,]$ (Iverson bracket, to avoid a clash with the face
coordinates $\delta_j$), hence
$\delta=(\tfrac12,\tfrac14,\tfrac12,\tfrac12,\dots)=\delta^\ast$ exactly.
Positive definiteness of $G_k(\delta^\ast)$: in the GNS space of $\rho$ the
residual vectors have components (on the $\rho_A$ branch)
$\psi\mapsto(1,0)$, $a_{2j-1}\psi\mapsto(\cos j\varphi,\sin j\varphi)$,
$a_{2j}\psi\mapsto(\cos j\varphi,-\sin j\varphi)$ in
$L^2(\mu)\otimes\mathbb C^2$, and similarly for $b_\ell$ on the $\rho_B$
branch. Since $\mu$ has a.e.-positive density, the trigonometric system
$\{1,\cos j\varphi,\sin j\varphi\}$ is linearly independent in $L^2(\mu)$,
and a short three-branch argument (any vanishing combination must vanish on
each branch; the $\rho_A$ branch kills the $a$-coefficients, the $\rho_B$
branch the $b$'s, then the constant) gives independence of all $2k+1$
residuals. Hence $G_k(\delta^\ast)\succ0$ for all $k$. (Machine checks:
exact PD at $k\le5$ via tables and $k\le8$ via the closed-form entries;
exact $(k{+}1,k)$ symmetric/antisymmetric splits.) A general interior
family: any $p\in(0,1)$ and any $\mu$ with at least $k+1$ support points
(rational members via Chebyshev atoms).
\end{proof}

\begin{remark}
The base points used level-by-level in the certified ladder
of~\cite{overshoot} --- $(\tfrac12,\tfrac14)$,
$(\tfrac12,\tfrac14,\tfrac12)$, $(\tfrac12,\tfrac14,\tfrac12,\tfrac12)$ ---
are truncations of this one canonical realization: the base family has a
physical definition independent of the level.
\end{remark}

\subsection{L4: the uniform rational Slater point}

The second-order tangent program at the face (set up in detail in
Section~\ref{sec:smalllambda}) constrains a second-order moment direction
$y_2$ through the compressed form $N_k^\top\Gamma(y_2)N_k$ on the relation
lattice. Its Slater point has a closed form.

\begin{theorem}[uniform Slater law]\label{thm:L4}
For every $k$, the rational point
\[
y_2^\ast\;=\;e_{\mathrm{class}(1)}-y_0(\delta=0)
\]
(identity-class indicator minus the face point at $\delta=0$; an affine ---
not PSD --- point of the face parametrization) satisfies:
\begin{enumerate}
\item the pins hold exactly: $(y_2^\ast)_1=1-1=0$ and $y_2^\ast=0$ at the
  $k$ face-coordinate classes;
\item $N_k^\top\Gamma(y_2^\ast)N_k=N_k^\top N_k\succ0$, since
  $\Gamma(e_{\mathrm{class}(1)})=I$ (here $\mathrm{rev}(p)q$ reduces to the
  empty word iff $p=q$) and $\Gamma(y_0(\delta))N_k=0$ identically
  (Lemma~\ref{lem:hull}); positive definiteness on the whole lattice holds
  because the $2k^2$ relation vectors are independent
  (Proposition~\ref{prop:count}), and a fortiori on the program's blocks.
\end{enumerate}
\end{theorem}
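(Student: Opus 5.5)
The plan is to verify the two items in order, using only the linearity of $y\mapsto\Gamma_k(y)$ and facts established in Section~\ref{sec:laws}.

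\emph{Pins.} For item 1, I evaluate the product law at $\delta=0$. Since $\delta_0:=1$ is the empty-word moment, $y_0(0)$ has identity-class coordinate $1$. It has face coordinates $\delta_1=\dots=\delta_k=0$, and its other classes are fixed by the product law (e.g.\ $m_{ij}=-1$ for $i,j\ge1$). The indicator $e_{\mathrm{class}(1)}$ is $1$ at the identity class and $0$ at the $k$ classes $\la\beta_j\ra$, $j=1,\dots,k$. These classes are distinct from the identity class because they have positive length. Subtracting gives $(y_2^\ast)_1=1-1=0$ and $(y_2^\ast)_{\delta_j}=0-0=0$, which is exactly the pin condition.

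\emph{Compressed form.} For item 2, linearity gives
\[
\Gamma(y_2^\ast)=\Gamma(e_{\mathrm{class}(1)})-\Gamma(y_0(0)).
\]
\begin{itemize}
\item[(a)] $\Gamma(e_{\mathrm{class}(1)})_{pq}=1$ exactly when $\mathrm{rev}(p)\cdot q$ reduces to the empty word, and $0$ otherwise. Per-party reduction in the free products $\mathbb Z_2^{\ast2}$ sends $\mathrm{rev}(a_p)a_q$ to the empty word iff $a_p=a_q$, because both are reduced words and the group is free on involutions. The same holds for Bob's part. No nonempty class reduces to the identity under dagger or party swap. Hence $\Gamma(e_{\mathrm{class}(1)})=I$.
\item[(b)] Lemma~\ref{lem:hull} ($\supseteq$ direction) says $\Gamma(y_0(\delta))N_k=0$ identically in $\delta$, including $\delta=0$. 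Positivity is not needed for this, only the affine identity, so the non-PSD point $y_0(0)$ is allowed.
\end{itemize}
Combining (a) and (b) gives $N_k^\top\Gamma(y_2^\ast)N_k=N_k^\top N_k$. This Gram matrix is positive definite because the $2k^2$ columns of $N_k$ are independent (Proposition~\ref{prop:count}).

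\emph{Blocks.} The tangent program's blocks are compressions $P^\top N_k^\top\Gamma N_k P$, taken either with injective $P$ or as principal submatrices. Such compressions inherit positive definiteness, so the strict-feasibility claim holds a fortiori.

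\emph{Main obstacle.} The only delicate point is (b): that the product-law identity $\Gamma(y_0(\delta))N_k=0$ holds as a polynomial (affine) identity for all $\delta$, not merely on the PSD face, and for every $k$. The ($\supseteq$) half of Lemma~\ref{lem:hull} already gives this uniformly. Its argument reduces to $0$-padding invariance and the four-point relation, which $m_{ij}=\delta_i+\delta_j-1$ satisfies for all real $\delta$. So I would cite it directly, noting that the cases $m_{0j}$ and $i=0$ are covered by $\delta_0=1$.
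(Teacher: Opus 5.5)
Your proposal is correct and follows the paper's own argument essentially line by line. The pins come from evaluating the product law at $\delta=0$ (with $\delta_0=1$). The compressed identity comes from $\Gamma(e_{\mathrm{class}(1)})=I$ together with the affine identity $\Gamma(y_0(\delta))N_k=0$ of Lemma~\ref{lem:hull}, and positive definiteness from the independence of the $2k^2$ columns of $N_k$. Your added details make explicit what the paper states parenthetically: $\mathrm{rev}(a_p)a_q$ reduces to the empty word iff $a_p=a_q$ because reduced words in $\mathbb Z_2^{\ast2}$ represent distinct elements, and the program's blocks are compressions $P^\top(N_k^\top N_k)P$ by injective $P$.
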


\begin{lemma}[blend lemma]\label{lem:blend}
Let $(\tau,y_2)$ be a near-feasible rational jet for a block of the
second-order tangent program (set up in
Section~\ref{sec:smalllambda}; $\tau$ denotes the first-order jet datum ---
not the scalar transfer ratio $t_0$ of Sections~\ref{sec:smalllambda}--%
\ref{sec:arc}) with exact pins and Schur complement
$\mathrm{Schur}(\tau,y_2)\succeq-\eps I$ for a rational $\eps\ge0$. Let
$N$ be the block column of $N_k$, so that the compressed block of
$y_2^\ast$ is $N^\top N\succ0$ (item 2 above). If $\eps_p\in(0,1)$ is a
rational weight satisfying the explicit inequality
\[
\eps_p\,\lambda_{\min}(N^\top N)\;>\;(1-\eps_p)\,\eps,
\]
then the blend
$\bigl((1-\eps_p)\tau,\;(1-\eps_p)y_2+\eps_p y_2^\ast\bigr)$ is exactly
feasible: $\lam$ scales by $(1-\eps_p)$ (stays $\ge0$), pins stay exact
(they are homogeneous, Lemma~\ref{lem:pins}), and since the Schur
subtrahend $Q(\tau)=F(\tau)^\top M_0^{+}F(\tau)$ (notation of
Theorem~\ref{thm:smalllambda}: $F$ is linear in the first-order jet, $M_0$
the base block) is PSD-valued and quadratic,
$Q((1-\eps_p)\tau)=(1-\eps_p)^2Q(\tau)\preceq(1-\eps_p)Q(\tau)$, so
\[
\mathrm{Schur}(\text{blend})\;\succeq\;(1-\eps_p)\,\mathrm{Schur}(\tau,y_2)
+\eps_p\,N^\top N\;\succeq\;
\bigl(\eps_p\,\lambda_{\min}(N^\top N)-(1-\eps_p)\,\eps\bigr)I\;\succ\;0 .
\]
(The certified objective values are evaluated \emph{at} the blended jet,
so no separate tracking of the objective loss is needed.)
\end{lemma}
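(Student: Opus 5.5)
The blend lemma is essentially proved inside its own statement, so the plan is to check each asserted step in turn. The setting is the block of the second-order tangent program. Feasibility there means four things: the first-order scalar $\lam\ge0$, the homogeneous linear pin constraints on $y_2$ (Lemma~\ref{lem:pins}), and the Schur-complement condition
\[
\mathrm{Schur}(\tau,y_2)=N^\top\Gamma(y_2)N-Q(\tau)\succeq0,
\]
where $Q(\tau)=F(\tau)^\top M_0^{+}F(\tau)$. The Schur complement is taken against the base block $M_0\succeq0$, and the range condition making it equivalent to block-PSD-ness is assumed to hold identically on the constraint space, as the paper asserts for Theorem~\ref{thm:smalllambda}. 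I would first record that the blended pair again lies in the constraint space. Pins are linear and homogeneous, and both $y_2$ and $y_2^\ast$ satisfy them: the first by hypothesis, the second by Theorem~\ref{thm:L4}(1). A convex combination therefore satisfies them too. The scalar $\lam$ is linear in $\tau$, so it becomes $(1-\eps_p)\lam\ge0$.

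Next I would expand the Schur complement of the blend. The first term is linear in $y_2$:
\[
N^\top\Gamma\bigl((1-\eps_p)y_2+\eps_py_2^\ast\bigr)N=(1-\eps_p)N^\top\Gamma(y_2)N+\eps_pN^\top N,
\]
using Theorem~\ref{thm:L4}(2). The subtrahend is quadratic in $\tau$, since $F$ is linear, so $Q((1-\eps_p)\tau)=(1-\eps_p)^2Q(\tau)$. Because $M_0^{+}\succeq0$, $Q(\tau)\succeq0$, and $(1-\eps_p)^2\le(1-\eps_p)$ gives $-(1-\eps_p)^2Q\succeq-(1-\eps_p)Q$. Combining,
\[
\mathrm{Schur}(\text{blend})\succeq(1-\eps_p)\,\mathrm{Schur}(\tau,y_2)+\eps_pN^\top N .
\]
Applying $\mathrm{Schur}(\tau,y_2)\succeq-\eps I$ and $N^\top N\succeq\lambda_{\min}(N^\top N)I$ then yields the strict bound under the stated rational inequality. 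Finally, strict positivity of the Schur complement, together with the range identity, gives block-PSD-ness and hence exact feasibility.

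The only step needing care is the range identity. I need the blended $F((1-\eps_p)\tau)$ to lie in $\range M_0$, so that the Schur criterion applies. This holds because the range identity holds identically on the constraint space, which is a linear space containing the blend; I would cite that fact rather than reprove it. Everything else is convexity and homogeneity, and exact rationality is preserved because every operation is rational.
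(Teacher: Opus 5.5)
Your proposal is correct and follows the paper's argument step for step. The ingredients match: homogeneity of the pins and of the first-order constraint space, linearity of the compressed $y_2$-block with $N^\top\Gamma(y_2^\ast)N=N^\top N$ from Theorem~\ref{thm:L4}, and $Q((1-\eps_p)\tau)=(1-\eps_p)^2Q(\tau)\preceq(1-\eps_p)Q(\tau)$, which together give the stated lower bound. Your explicit check that the range condition persists for the blend is a point the paper leaves implicit, and it is a welcome addition; it follows from Lemma~\ref{lem:range}, or simply from $F((1-\eps_p)\tau)=(1-\eps_p)F(\tau)$.
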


This removes, uniformly in $k$, the per-level question ``does a rational
Slater point exist''; it is also why the same certificate recipe succeeded
at levels $2,3,4$ in~\cite{overshoot}.

\section{The witness}\label{sec:witness}

\subsection{First-order structure: the gain direction $u_k$}
\label{sec:uk}

Let $y_1$ be a first-order direction at the face (a class vector; the
identity coordinate and face pins are imposed later, harmlessly, by
Lemma~\ref{lem:pins}). Define the compressed first-order form
\[
P(y_1)\;:=\;N_k^\top\,\Gamma(y_1)\,N_k .
\]
We seek directions $y_1$ for which $P(y_1)$ is PSD of rank one on the
lattice: rank-one alignment is exactly what the no-repair lemma consumes
as its hypothesis (Lemma~\ref{lem:norepair}(b) below), and the $\mu=0$
discussion of Remark~\ref{rem:mu0} shows why unaligned first-order
leakage is the regime where naive second-order repair breaks down. (We do
not claim rank one is \emph{necessary} for quadratic gain; it is the
structure our certificates realize.) The rank-one direction is
level-uniform and has a closed form.

\begin{definition}[the gain vector $u_k$]\label{def:uk}
In the canonical kernel basis $[\,$trivial reductions in basis order
$\mid$ dressings in $(\ell_a,\ell_b)$ order$\,]$:
\[
u_k[T_{(a,b)}]=\begin{cases}
s(a)\,s(b)&\text{if exactly one of }a,b\text{ ends in }0,\\
0&\text{otherwise},
\end{cases}
\]
\[
s(w)=\begin{cases}-1&w\text{ starts with }1,\\ +1&\text{else (including }w=\emptyset\text{)};\end{cases}
\]
\[
u_k[D_W]=2\,(-1)^{|W|}\quad(W\ne\emptyset),\qquad u_k[D_\emptyset]=0 .
\]
\end{definition}

The formula is level-independent (old kernel vectors keep their words), so
the tower property ``$u_k$ extends $u_{k-1}$'' is automatic. At $k=2$ it
reproduces the certified jet of~\cite{overshoot} exactly
($P(y_1^{\mathrm{cert}})=\tfrac14u_2u_2^\top$); the linear system
$P(y_1)=\lam u_ku_k^\top$ was verified solvable with $\lam\ne0$ at
$k=2,\dots,5$ before the closed-form witness below settled every $k$ at
once. Let
\[
V_k\;=\;\{(y_1,\lam):P(y_1)=\lam\,u_ku_k^\top,\ (y_1)_1=0\}
\]
denote the first-order constraint space. No separate antisymmetric-block
condition is needed: $u_k$ is invariant under the party swap, so the
antisymmetric compression of $\lam u_ku_k^\top$ vanishes automatically for
every element of $V_k$ --- this is Lemma~\ref{lem:bridge}(iii) below, where
the swap-split notation $N_s,N_a$ is also fixed.

\begin{lemma}[identities on $V_k$]\label{lem:identities}
For all $k$ and all $(y_1,\lam)\in V_k$, using only pairings of kernel
vectors present at every level:
\[
(y_1)_{\la A_0\ra}=-\tfrac\lam2,\qquad (y_1)_{\la A_0B_0\ra}=0,\qquad
(y_1)_{\la A_1B_1\ra}=2(y_1)_{\la B_1\ra},\qquad B_0\cdot y_1=0 .
\]
\end{lemma}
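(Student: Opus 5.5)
The plan is to read off each identity from a single entry of the matrix equation $P(y_1)=\lam u_ku_k^\top$, choosing kernel vectors that live in the level-$2$ part of $N_k$. Each entry is a pairing
\[
q^\top\Gamma(y_1)q'=\lam\,u_k[q]\,u_k[q'],\qquad q,q'\in N_2\subseteq N_k .
\]
Because $u_k$ extends $u_{k-1}$, and the old kernel vectors keep their words, both sides of such an entry are the same at every level. So it suffices to carry out the computation at $k=2$, where $N_2$ has $8$ columns: $7$ trivial reductions and the single dressing $D_\emptyset$.

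The left side of each equation is a linear combination of class coordinates of $y_1$. I would expand it using $\Gamma(y)_{pq}=y[\mathrm{class}(\mathrm{rev}(p)q)]$ together with the per-party reductions. The right side is read off from Definition~\ref{def:uk}.

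The entries I would use are the following.
\begin{itemize}
\item[(i)] The diagonal entry $T^\top\Gamma(y_1)T$ for $T=T_{(A_0,\emptyset)}=e_{A_0}-e_\emptyset$. It expands to $2y_1[1]-2y_1[A_0]=-2(y_1)_{\la A_0\ra}$, using $(y_1)_1=0$. Since $u[T]=s(A_0)s(\emptyset)=1$, this gives $(y_1)_{\la A_0\ra}=-\lam/2$.
\item[(ii)] The pairing of $T_{(A_0,\emptyset)}$ with $T_{(\emptyset,B_0)}$. It equals $y[A_0B_0]-y[A_0]-y[B_0]+y[1]$, and the right side is $\lam\cdot1\cdot1=\lam$. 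Using (i) and swap symmetry ($y[B_0]=y[A_0]$), this gives $(y_1)_{\la A_0B_0\ra}=\lam-\lam=0$.
\item[(iii)] The diagonal entry $D_\emptyset^\top\Gamma(y_1)D_\emptyset$, which equals $4\la R_1\ra$ as in Lemma~\ref{lem:telescope}. Since $u[D_\emptyset]=0$, this gives $1-2y[A_1]+y[A_1B_1]=0$ with $y[1]=0$. Hence $(y_1)_{\la A_1B_1\ra}=2(y_1)_{\la B_1\ra}$, using the symmetrized class $y[A_1]=y[B_1]$.
\end{itemize}

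It then remains to show $(y_1)_{\la A_0B_1\ra}=(y_1)_{\la B_1\ra}$. For this I would pair $T_{(A_0,\emptyset)}$ with $D_\emptyset$, or with a mixed trivial reduction such as $T_{(A_0,B_1)}$ (the relevant $u$ entry is then $0$ or $\pm1$). Given that, the functional becomes
\[
B_0\cdot y_1=2(-\tfrac\lam2)+0+2y[A_0B_1]-2y[B_1].
\]
As written this would vanish only if the $\lam$ term cancels. So the correct pairing must yield $y[A_0B_1]-y[B_1]=\lam/2$, and I would locate it among $T_{(A_0,\emptyset)}^\top\Gamma D_\emptyset$ and $T_{(\emptyset,B_0)}^\top\Gamma T_{(A_1,B_0)}$.

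The main obstacle is this last bookkeeping step. It requires computing the signs $s(\cdot)$ correctly for the mixed reductions, and getting the reduced classes right, e.g.\ $\mathrm{rev}(A_0)\cdot A_1B_0\mapsto A_0A_1B_0$, which collapses under the symmetry and class conventions. I would settle it by explicitly tabulating all $\binom{9}{2}$ relevant products at $k=2$, which is a finite exact check.
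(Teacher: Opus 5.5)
Your strategy is the paper's: read each identity off a single entry of $P(y_1)=\lam u_ku_k^\top$ at kernel vectors already present at $k=2$. Your (i) and (iii) are exactly the paper's entries. Your (ii) is a valid variant. The paper uses the diagonal entry at $T_{(A_0,B_0)}=e_{(0,0)}-e_{(\emptyset,\emptyset)}$, where $u=0$ gives $-2y_{\la A_0B_0\ra}=0$ without needing (i). You use the cross entry $T_{(A_0,\emptyset)}$--$T_{(\emptyset,B_0)}$ together with (i) and swap symmetry.

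The gap is the fourth identity, which is the only one with real content.
\begin{enumerate}
\item You first say it remains to show $y_{\la A_0B_1\ra}=y_{\la B_1\ra}$. That is false on $V_k$ for $\lam\ne0$, since it would give $B_0\cdot y_1=-\lam$.
\item You then obtain the correct target $y_{\la A_0B_1\ra}-y_{\la B_1\ra}=\lam/2$ by requiring $B_0\cdot y_1=0$, i.e.\ by assuming the conclusion.
\item No entry is actually evaluated. Your first candidate also does not work as a read-off: since $u[D_\emptyset]=0$, the entry $T_{(A_0,\emptyset)}^\top\Gamma(y_1)D_\emptyset$ has right-hand side $0$. It also involves the extra classes $\la A_0A_1\ra$ and $\la A_0A_1B_1\ra$, so it does not isolate $y_{\la A_0B_1\ra}-y_{\la B_1\ra}$.
\end{enumerate}

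The missing step is one line, and it is exactly where the sign convention $s(\cdot)$ enters. Take $T_{(A_0,B_1)}=e_{(0,1)}-e_{(\emptyset,1)}$, the pairing you mention in passing and the one the paper uses. Then
\[
T_{(A_0,\emptyset)}^\top\Gamma(y_1)\,T_{(A_0,B_1)}=2y_{\la B_1\ra}-2y_{\la A_0B_1\ra}
=\lam\cdot u[T_{(A_0,\emptyset)}]\cdot s(A_0)s(B_1)=\lam\cdot1\cdot(+1)(-1)=-\lam,
\]
since $B_1$ starts with the letter $1$. Your second candidate, $T_{(\emptyset,B_0)}$ against $T_{(A_1,B_0)}$, is the party-swap mirror of this pairing. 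It gives $2y_{\la A_1\ra}-2y_{\la A_1B_0\ra}=-\lam$, the same relation on symmetrized classes. There the $B_0$'s cancel and no length-$3$ class such as $A_0A_1B_0$ appears.

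Substitute into $B_0\cdot y_1=2y_{\la A_0\ra}+y_{\la A_0B_0\ra}+2y_{\la A_0B_1\ra}-y_{\la A_1B_1\ra}$ and use (i)--(iii). This gives $-\lam+0+(2y_{\la B_1\ra}+\lam)-2y_{\la B_1\ra}=0$. Had that entry come out $+\lam$ (or $0$), you would get $-2\lam$ (or $-\lam$) instead. So this computation is the proof of $B_0\cdot y_1=0$, not bookkeeping that can be deferred.
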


\begin{proof}
Evaluate $P(y_1)=\lam u u^\top$ at explicit kernel pairs.
$(T_{A_0},T_{A_0})$: $2y_1-2y_{\la A_0\ra}=\lam u[T_{A_0}]^2=\lam$, and
$(y_1)_1=0$, so $y_{\la A_0\ra}=-\lam/2$.
$(T_{A_0B_0},T_{A_0B_0})$: $-2y_{\la A_0B_0\ra}=\lam\cdot0$ (both parts of
$(A_0,B_0)$ end in $0$, so $u=0$).
$(T_{A_0},T_{(A_0,B_1)})$: $2y_{\la B_1\ra}-2y_{\la A_0B_1\ra}=-\lam$.
$(D_\emptyset,D_\emptyset)$: $-8y_{\la B_1\ra}+4y_{\la A_1B_1\ra}=\lam\cdot0$.
Summing:
$B_0\cdot y_1=2y_{\la A_0\ra}+y_{\la A_0B_0\ra}
+2y_{\la A_0B_1\ra}-y_{\la A_1B_1\ra}
=-\lam+0+(2y_{\la B_1\ra}+\lam)-2y_{\la B_1\ra}=0$.
\end{proof}

\begin{remark}[normalizations]\label{rem:norm}
Two equivalent scalings occur. The \emph{canonical} scaling uses $u_k$ of
Definition~\ref{def:uk}, in which $(y_1)_{\la A_0\ra}=-\lam/2$. The
\emph{certificate} scaling of the verification suite and
of~\cite{overshoot} uses $\tilde u_k=u_k/2$, hence $\tilde\lam=4\lam$ and
$(y_1)_{\la A_0\ra}=-\tilde\lam/8$. Sections~\ref{sec:smalllambda}
and~\ref{sec:arc} are stated in the certificate scaling (all constants
there, e.g.\ $1/(1024\,t_0)$, are in that scaling); the independent
re-verification confirmed the conversion numerically ($\tilde\lam=4$ measured for the
canonical witness $y^\ast$).
\end{remark}

\subsection{The closed-form witness $y^\ast$}

For an alternating word $w$ over $\{0,1\}$ (possibly empty; $0,1$ standing
for the letters $X_0,X_1$ of its party) define
\[
i(w)=\#\{1\text{'s in }w\},\qquad
\tau(w)=\mathrm{first}(w)+\mathrm{last}(w)-1\in\{-1,0,1\},
\]
\[
\sigma(w)=\mathrm{first}(w)-\mathrm{last}(w),\qquad
d_0(w)=[\,w=(0)\,],
\]
with $\tau(\emptyset)=\sigma(\emptyset)=0$. Every moment class at every
level is a pair $(\alpha,\beta)$ of alternating words modulo dagger
(reverse both) and party swap. Define, with $i=i(\alpha)$, $j=i(\beta)$,
$\tau_a=\tau(\alpha)$ etc.:
\begin{equation}\label{eq:ystar}
y^\ast(\alpha,\beta)\;=\;i^2j^2-2ij(\tau_a\tau_b+\sigma_a\sigma_b)
+\frac{\tau_a+\tau_b}2-\frac{\tau_a^2+\tau_b^2}2
+\tau_a^2\tau_b^2+\frac{d_0(\alpha)+d_0(\beta)}2 .
\end{equation}
This is well defined on classes: dagger fixes $i,\tau$ and flips both
$\sigma$'s (the product is invariant); party swap is manifest. Away from
the degenerate words ($\emptyset$ and $(0)$) it collapses to the
signed-square form
\[
y^\ast=(ij+\eps)^2-c,\qquad
\eps=-(\tau_a\tau_b+\sigma_a\sigma_b),\quad
c=1-\big\lceil\tfrac{\tau_a+\tau_b}2\big\rceil,
\]
exactly the shape mandated by the no-go of
Section~\ref{sec:mechanism}: a \emph{signed} quadratic, not a Gram
functional ($y^\ast$ vanishes at the identity class but is nonzero
elsewhere and takes both signs --- it is not a state). It was discovered
from the exact level-$\le5$ solutions in a level-uniform gauge in which
they stabilize level-to-level; the gauge-hunting record is part of the
ancillary material.

\begin{theorem}[the witness works at every level]\label{thm:witness}
For every $k\ge2$, with the canonical kernel basis
$N_k=[\,T\text{'s}\mid D\text{'s}\,]$ and the closed-form $u_k$ of
Definition~\ref{def:uk},
\[
N_k^\top\,\Gamma(y^\ast)\,N_k\;=\;u_k\,u_k^\top\qquad\text{exactly,}
\]
and $y^\ast$ already satisfies the sharp pins: $y^\ast[\mathrm{identity}]=0$
and $y^\ast=0$ at the $k$ face-coordinate classes
$(\emptyset,\beta_j)$ (there $i(\alpha)=0$, $\tau_a=0$, $\tau_b=1$, so
$y^\ast=\tfrac12-\tfrac12=0$).
\end{theorem}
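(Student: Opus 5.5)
The pins are immediate from \eqref{eq:ystar}. At the identity class every coordinate vanishes, so $y^\ast=0$. At $(\emptyset,\beta_j)$ the computation is the one displayed in the statement. The content of the theorem is therefore the matrix identity. I would prove it entrywise: for each pair of columns $(n,n')$ of $N_k$ the entry $n^\top\Gamma(y^\ast)n'$ is a signed sum of at most $4\times4=16$ values of $y^\ast$ at classes $\mathrm{class}(\mathrm{rev}(p)q)$. The target is $u_k[n]\,u_k[n']$.

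\textbf{Step 1: reduce each entry to words.} For a trivial reduction $T_q$ or a dressing $D_W$, write each corner $p$ as a word pair and each product $\mathrm{rev}(p)q$ after per-party reduction. Since $\Gamma$ factorizes over parties only through the class map, each entry becomes $\sum \pm\, y^\ast(\alpha\cdot\alpha',\beta\cdot\beta')$. Here $\alpha$ ranges over at most two Alice-side words (e.g.\ $w_a'$ and $w_a.1$, or $a$ and $a^\sharp$), and similarly for Bob. This gives a tensor structure: the entry is a second difference in the Alice coordinate times a second difference in the Bob coordinate, applied to $y^\ast$ viewed as a function of the reduced pair.

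\textbf{Step 2: fine regimes and locality.} The key observation is that $i,\tau,\sigma,d_0$ of a reduced product $\mathrm{rev}(a_p)a_q$ depend only on the lengths and the boundary letters of $a_p,a_q$, together with whether cancellation swallows one word entirely. I would split into finitely many fine regimes according to (first letter, last letter, parity, "short" flags $|w|\le 2$ or equal to $\emptyset,(0)$) of the four words involved. In a fixed regime, $i$ of each product is an affine function of the free lengths, and $\tau,\sigma,d_0$ are constants. Because $y^\ast$ is polynomial of degree $2$ in each of $i,j$ separately, each entry becomes a polynomial of degree $\le2$ in each free length coordinate. The structure is $i^2j^2$ plus lower terms. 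Taking the differences of Step 1 lowers degrees: a trailing-letter difference changes $i$ by $0$ or $1$. So a second difference of $i^2$ type collapses to a constant or a linear term, which is where the $\pm1,\pm2$ values of $u_k$ come from.

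\textbf{Step 3: finite verification.} In each regime the target $u_k[n]u_k[n']$ is constant: signs $s(a)s(b)$ and $2(-1)^{|W|}$ depend only on boundary letters and parity. So the claim reduces to a polynomial identity of degree $\le2$ per free length variable. By interpolation it suffices to check it at three values of each free length beyond the threshold where the regime stabilizes. I would enumerate all regimes, take a grid of lengths large enough to contain three representatives of each in every coordinate (this fixes a finite $k_0$), and verify $N_k^\top\Gamma(y^\ast)N_k=u_ku_k^\top$ exactly in integers for all $k\le k_0$. Every entry at higher level then belongs to some regime whose polynomial identity has been certified. Level independence of $y^\ast$, $u_k$ and the column labels makes entries at level $k$ coincide with those at any larger level, so the tower is consistent.

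\textbf{Main obstacle.} The hard part is the locality lemma itself. One must prove rigorously that within a regime the reduction $\mathrm{rev}(a_p)a_q$ has $i$ affine in the lengths and fixed boundary data. This needs care when cancellation is total or nearly total, producing $\emptyset$ or $(0)$, where the $d_0$ and $\tau^2$ corrections act. It also needs care with the $\sharp$ and $'$ stripping conventions, and with dagger/swap identifications that may swap which orientation $\sigma$ sees. I would first classify the reduced product of two alternating words by the overlap of their initial letters and relative lengths, and isolate the degenerate outputs as separate regimes. Finally, I would confirm the count of regimes is finite and level-independent, so that the grid threshold $k_0$ is explicit.
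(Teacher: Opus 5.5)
Your proposal is correct and takes essentially the same route as the paper: entrywise kernel-pair equations $E(K_1,K_2)=0$, a nesting/locality lemma for $\mathrm{rev}(v)\cdot w$, fine regimes in which $E$ is a polynomial of degree $\le2$ in each of at most four free length coordinates, and an exhaustive exact check on an interpolation grid. The paper is sharper in two places. Its regime records, per party, the values $\min(\ell_1,\ell_2)$ and $\ell_1-\ell_2$ of the base words, each clamped at $4$. This relative-length data is what you flag in your last paragraph, and it is genuinely needed: per-word short flags alone do not fix the branch data of all sixteen pairings. The paper also runs the finite check over all kernel-element pairs with side lengths $\le20$ directly, rather than level by level, and adds a fully symbolic per-regime verification as an independent second proof.
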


Consistency checks:
$y^\ast_{\la A_0\ra}=y^\ast(\emptyset,(0))=-\tfrac12-\tfrac12+\tfrac12
=-\tfrac12=-\lam/2$ at $\lam=1$ (the three nonzero terms
of~\eqref{eq:ystar} in order), and $B_0\cdot y^\ast=0$, as
Lemma~\ref{lem:identities} requires.

\subsection{Proof of Theorem~\ref{thm:witness}}

Write each basis word as a pair of side words; the moment word of a
pairing of basis words $p,q$ is
$m(p,q)=\bigl(m_A(a_p,a_q),\,m_B(b_p,b_q)\bigr)$ with
$m_A(v,w)=\mathrm{reduce}(\mathrm{rev}(v)\cdot w)$. For a kernel pair
$(K_1,K_2)$ with coefficient vectors $c,c'$ the claimed equation is
\[
E(K_1,K_2)\;:=\;\sum_{p,q}c_pc'_q\,y^\ast(m(p,q))\;-\;u(K_1)\,u(K_2)
\;=\;0 .
\]
The proof shows that $E$ is, within each of finitely many combinatorial
regimes, a polynomial of degree $\le2$ in each of $\le4$ free length
coordinates, and that a finite exact-integer computation covers a full
interpolation grid of every regime. \emph{The finite computation is a
constituent part of the proof} (in the tradition of computer-assisted
proofs): the lemmas supply the degree and coverage bounds that make the
finite check conclusive, and the check itself is exact integer arithmetic
(no floating point anywhere).

\begin{lemma}[nesting and locality]\label{lem:nest}
(i) Two alternating words with the same first letter are nested (an
alternating word is determined by its first letter and its length).
(ii) Hence if $v[0]\ne w[0]$ (both nonempty), $\mathrm{rev}(v)\cdot w$ is
already reduced: $m_A=\mathrm{rev}(v)\cdot w$ with $i(m)=i(v)+i(w)$,
$\mathrm{first}(m)=\mathrm{last}(v)$, $\mathrm{last}(m)=\mathrm{last}(w)$;
if $v$ is the shorter of two same-first-letter words, $m_A=w[|v|:]$ (and
symmetrically), with $i(m)=i(v)+i(w)-2i(\text{shorter})$.
(iii) Consequently, for every side pair: $\tau(m)$, $\sigma(m)$, $d_0(m)$
and $[m=\emptyset]$ are determined by the first letters, the length
parities, and the values of $\min(|v|,|w|)$ and $|v|-|w|$ clamped at $2$;
and $i(m)$ is affine in $i(v),i(w)$ with coefficients determined by the
same data.
\end{lemma}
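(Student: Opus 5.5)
We prove the three parts in order; (ii) is a direct word computation, and (iii) is bookkeeping on top of (ii).

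\emph{Part (i).} An alternating word over $\{0,1\}$ is determined by its first letter and its length: after the first letter, each position is forced to differ from the previous one. So two alternating words with the same first letter agree on every position where both are defined. Hence the shorter one is a prefix of the longer one, which is the claimed nesting.

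\emph{Part (ii), case $v[0]\ne w[0]$.} In $\mathrm{rev}(v)\cdot w$ the only place a cancellation $X_cX_c=1$ can occur is the junction, where the letter $\mathrm{last}(\mathrm{rev}(v))=v[0]$ meets $w[0]$. These letters are distinct, so nothing cancels. Each piece is alternating, so the concatenation is alternating and already reduced. Reading off the result: the $1$-count adds, $\mathrm{first}(m)=\mathrm{first}(\mathrm{rev}(v))=\mathrm{last}(v)$, and $\mathrm{last}(m)=\mathrm{last}(w)$.

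\emph{Part (ii), case $v[0]=w[0]$ with $|v|\le|w|$.} By (i), $v$ is the prefix $w[:|v|]$. Then $\mathrm{rev}(v)\cdot w$ cancels letter by letter from the junction outward, since each cancellation exposes the next matching pair. This gives $m_A=w[|v|:]$. The $1$-count is $i(w)-i(v)$, which equals $i(v)+i(w)-2i(v)$ with $v$ the shorter word. The case $|v|>|w|$ is symmetric and gives $\mathrm{rev}(v[|w|:])$.

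\emph{Part (iii).} In every case the endpoints of $m$ are determined by a finite amount of data.
\begin{itemize}
\item In the unequal-first-letter case, $\mathrm{first}(m)$ and $\mathrm{last}(m)$ are the last letters of $v$ and $w$. Each last letter is fixed by that word's first letter and length parity.
\item In the nested case, $m$ is empty exactly when $|v|=|w|$.
\item In the nested case with $m$ nonempty, $m$ is a suffix of an alternating word. Its first letter is determined by $\mathrm{first}(w)$ and the parity of $|v|$; its last letter is $\mathrm{last}(w)$.
\item $m=(0)$ occurs exactly when $\bigl||v|-|w|\bigr|=1$ and the surviving letter is $0$.
\end{itemize}
So $\tau,\sigma,d_0$ and $[m=\emptyset]$ are functions of the first letters, the parities, and $|v|-|w|$ clamped to $\{\le-2,-1,0,1,\ge2\}$. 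The quantity $\min(|v|,|w|)$ clamped at $2$ enters only to separate the empty-word edge cases, where $\mathrm{first}$ and $\mathrm{last}$ are undefined and the convention $\tau=\sigma=0$ applies. Finally, $i(m)$ is $i(v)+i(w)$ or $i(v)+i(w)-2i(\text{shorter})$, and which formula applies is determined by the same discrete data. So $i(m)$ is affine in $(i(v),i(w))$ with coefficients fixed by that data.

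The only delicate point is the empty-word edge cases: checking $\emptyset$ and length-one words separately so that the clamping thresholds are stated correctly.
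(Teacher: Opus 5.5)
Your proposal is correct and follows the paper's argument: cancellation in $\mathrm{rev}(v)\cdot w$ runs exactly along the common prefix. By nesting, that prefix is the shorter word, or nothing when the first letters differ, and the endpoint letters of the surviving tail are fixed by the first letters and length parities. You spell out the case bookkeeping, including the empty-word cases, that the paper's two-line proof leaves implicit; your bullet on $m=(0)$ should be read as applying only to the nested case (including an empty side word), since with distinct nonempty first letters $|m|=|v|+|w|\ge2$ and $d_0(m)=0$.
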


\begin{proof}
Cancellation in $\mathrm{rev}(v)\cdot w$ runs exactly along the common
prefix, which by (i) is the shorter word or empty; the boundary letters of
the leftover tails sit at parity-determined positions.
\end{proof}

\begin{lemma}[tensor form of the kernel]\label{lem:tensor}
With $D^0(a):=e_a-e_{a'}$ ($a$ ends in $0$, $a'$ its strip) and
$D^\Delta(w):=e_{w'}-e_{w.1}$ ($w$ ends in $0$ or empty; $w'$ = strip,
$w.1$ = append $1$):
\[
T_{(a,b)}=\begin{cases}
D^0(a)\otimes e_b,& \text{only }a\text{ ends in }0,\ u=s(a)s(b),\\
e_a\otimes D^0(b),& \text{only }b\text{ ends in }0,\ u=s(a)s(b),\\
D^0(a)\otimes e_b+e_{a'}\otimes D^0(b),&\text{both end in }0,\ u=0,
\end{cases}
\]
\[
D_W=D^\Delta(w_a)\otimes D^\Delta(w_b),\qquad
u=2\,\omega(w_a)\,\omega(w_b),\quad\omega(w)=(-1)^{|w|},
\]
with $u=0$ for $W=\emptyset$. In particular $u$ factorizes per party into
regime constants.
\end{lemma}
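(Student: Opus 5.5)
The plan is to prove Lemma~\ref{lem:tensor} by direct unwinding of the definitions in Section~\ref{sec:lattice} and Definition~\ref{def:uk}. We identify $\mathbb R^{\mathcal W_k}$ with a subspace of $\mathbb R^{\mathcal A}\otimes\mathbb R^{\mathcal B}$, where $\mathcal A,\mathcal B$ are the sets of alternating words of each party, via $e_{(a,b)}=e_a\otimes e_b$. Nothing about $y^\ast$ or positivity enters; the lemma is purely bookkeeping. The one thing to check along the way is that every elementary tensor produced is again a basis word. This holds because stripping a trailing $0$ or taking the $'$-strip only shortens a part. Appending $1$ in $w.1$ raises total length by at most $2$, and $|W|\le k-2$ keeps the result in $\mathcal W_k$.

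\emph{Trivial reductions.} Take $q=(a,b)$ with at least one part ending in $0$, so that $T_q=e_a\otimes e_b-e_{a^\sharp}\otimes e_{b^\sharp}$. There are three cases.
\begin{itemize}
\item If only $a$ ends in $0$, then $a^\sharp=a'$ and $b^\sharp=b$, so $T_q=(e_a-e_{a'})\otimes e_b=D^0(a)\otimes e_b$.
\item If only $b$ ends in $0$, the symmetric argument gives $T_q=e_a\otimes D^0(b)$.
\item If both end in $0$, we telescope through the intermediate word $(a',b)$:
\[
e_a\otimes e_b-e_{a'}\otimes e_{b'}=(e_a-e_{a'})\otimes e_b+e_{a'}\otimes(e_b-e_{b'}).
\]
This is the stated form.
\end{itemize}
The $u$-values are read off from Definition~\ref{def:uk}: $u[T_q]=s(a)s(b)$ when exactly one part ends in $0$, and $u[T_q]=0$ when both do.

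\emph{Dressings.} For $W=(w_a,w_b)$ with neither part ending in $1$, each $w$ is either empty or ends in $0$, so $D^\Delta(w)=e_{w'}-e_{w.1}$ is defined. Expanding $D^\Delta(w_a)\otimes D^\Delta(w_b)$ gives exactly the four-term expression that defines $D_W$. For the $u$-value, Definition~\ref{def:uk} gives $u[D_W]=2(-1)^{|W|}$ for $W\ne\emptyset$. Since $|W|=|w_a|+|w_b|$, this equals $2\,\omega(w_a)\,\omega(w_b)$, and $u[D_\emptyset]=0$ is carried over as stated.

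\emph{Factorization.} In each case $u$ is a product of per-party factors. These are $s(\cdot)$, which depends only on the first letter (with $s(\emptyset)=+1$), and $\omega(\cdot)$, which depends only on length parity. Both are among the regime data of Lemma~\ref{lem:nest}(iii), so $u$ factorizes into regime constants. The only exceptions are the two zero cases: $T_q$ with both parts ending in $0$, and $D_\emptyset$. Each is a single fixed regime in which $u$ is the constant $0$.

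The main obstacle is definitional rather than mathematical: making sure the $\sharp$-strip in $T_q$ and the $'$-strip in $D^\Delta$ are the same operation. We will state this explicitly ($\emptyset'=\emptyset$, and $'$ removes one trailing $0$), and we will point out that $D_\emptyset$ does not follow the $2\omega\omega$ pattern, since that pattern would give $2$ while $u[D_\emptyset]=0$.
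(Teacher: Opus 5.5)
Your proposal is correct and follows the paper's own proof. The only nontrivial case is the both-trailing-$0$ reduction, which you telescope through $(a',b)$ exactly as the paper does; the dressing form and the $u$-values are then read off directly from the definitions, including the $D_\emptyset$ exception.
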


\begin{proof}
$e_a\otimes e_b-e_{a'}\otimes e_{b'}$ telescopes; the rest is the
definition of the dressing corners.
\end{proof}

\begin{lemma}[separability of $y^\ast$]\label{lem:sep}
$y^\ast$ is a finite sum $\sum_r f_r\otimes g_r$ of products of the seven
per-word features $\{1,\,i^2,\,i\tau,\,i\sigma,\,\tau,\,\tau^2,\,d_0\}$ ---
immediate from~\eqref{eq:ystar}.
\end{lemma}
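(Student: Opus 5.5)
The statement is that $y^\ast$ is a finite sum of products of the seven per-word features, and this can be read off directly from~\eqref{eq:ystar}. The plan is to expand~\eqref{eq:ystar} term by term and exhibit each summand as $f(\alpha)\,g(\beta)$ with $f,g$ drawn from $\{1,i^2,i\tau,i\sigma,\tau,\tau^2,d_0\}$, evaluated on the Alice word and the Bob word respectively.

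The terms match as follows:
\begin{itemize}
\item $i^2j^2=i^2(\alpha)\cdot i^2(\beta)$.
\item $-2ij\tau_a\tau_b=-2\,(i\tau)(\alpha)\cdot(i\tau)(\beta)$.
\item $-2ij\sigma_a\sigma_b=-2\,(i\sigma)(\alpha)\cdot(i\sigma)(\beta)$.
\item $\tfrac{\tau_a+\tau_b}2=\tfrac12\bigl(\tau\otimes1+1\otimes\tau\bigr)$.
\item $-\tfrac{\tau_a^2+\tau_b^2}2=-\tfrac12\bigl(\tau^2\otimes1+1\otimes\tau^2\bigr)$.
\item $\tau_a^2\tau_b^2=\tau^2\otimes\tau^2$.
\item $\tfrac{d_0(\alpha)+d_0(\beta)}2=\tfrac12\bigl(d_0\otimes1+1\otimes d_0\bigr)$.
\end{itemize}
That gives eleven rank-one terms, so $y^\ast=\sum_{r=1}^{11}c_r f_r\otimes g_r$ with rational $c_r$.

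The only point needing care is that $y^\ast$ is defined on classes, not ordered word pairs. The separable expression must be read as a function of the representative pair $(\alpha,\beta)$ produced by $m(p,q)$, with the Alice word in the first slot and the Bob word in the second. This is legitimate because the formula is already shown to be invariant under dagger and party swap. Concretely, $i$ and $\tau$ are fixed by reversal, and $\sigma$ flips sign on both sides, so only the products $(i\sigma)\otimes(i\sigma)$ occur. In addition, every summand is either symmetric or appears together with its swapped partner. Evaluating the sum on any representative therefore gives the class value. The convention $\tau(\emptyset)=\sigma(\emptyset)=0$ makes the features well defined on the empty word as well, so no case split is needed.

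The consequence, used in the next step, is that each entry of $N_k^\top\Gamma(y^\ast)N_k$ splits as $\sum_r c_r\,\bigl(c_A^\top F_r c'_A\bigr)\bigl(c_B^\top G_r c'_B\bigr)$. This uses Lemma~\ref{lem:tensor}, which puts kernel vectors in tensor form per party, together with $m=(m_A,m_B)$ being computed per party. Each factor is then a one-party sum whose dependence on lengths Lemma~\ref{lem:nest} controls. There is no real obstacle here; the main risk is bookkeeping, namely missing the degenerate feature $d_0$ or a sign in the $\sigma$ term. I would guard against this by re-evaluating the expanded sum against~\eqref{eq:ystar} on the degenerate words $\emptyset$ and $(0)$.
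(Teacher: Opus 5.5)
Your proposal is correct and is exactly the paper's argument: the lemma is read off term by term from~\eqref{eq:ystar}, and since reversal flips both $\sigma$'s together, only the product $(i\sigma)\otimes(i\sigma)$ occurs, so the class-level definition causes no trouble. One trivial slip: the expansion has ten rank-one terms, not eleven, and neither this nor the following remark affects the lemma. In your downstream remark, the $T$-elements with both parts ending in $0$ are sums of two tensors, so their pairings give a sum of such products rather than a single product.
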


\begin{lemma}[fine regimes; degree bound]\label{lem:regime}
Call a \emph{fine regime} of a pair of kernel elements the data: their
types ($T$ with $a$/$b$/both trailing $0$, or $D$), the first letters and
length parities of the four base side words, and, per party, the values of
$\min(\ell_1,\ell_2)$ and $\ell_1-\ell_2$ clamped at $4$ (exact if within
the clamp, else ``large''/sign only). Within a fine regime the admissible
length tuples form a lattice with at most $4$ free coordinates (per party:
a min-coordinate and a difference-coordinate, each stepping by $2$,
present only in the ``large'' branches), and $E(K_1,K_2)$ is a polynomial
of degree $\le2$ in each free coordinate.
\end{lemma}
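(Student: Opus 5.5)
The proof works at the level of a single side pair and then builds up to the full kernel pair. Fix a fine regime of a pair $(K_1,K_2)$. By Lemma~\ref{lem:tensor}, each kernel element is a sum of at most two tensor products $f\otimes g$. Here $f$ is one of the per-party vectors $e_a$, $D^0(a)$, $D^\Delta(w)$, and each of these is itself a signed sum of at most two unit vectors on side words. Consequently $E(K_1,K_2)$ is a signed sum of a bounded number (at most $2\cdot2\cdot4\cdot4$) of terms of the form $y^\ast(m_A(v,v'),m_B(w,w'))$, minus the constant $u(K_1)u(K_2)$. By Lemma~\ref{lem:sep}, each term splits further into a finite sum of products $F_r(m_A)\,G_r(m_B)$, with $F_r,G_r$ taken from the seven per-word features. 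So it suffices to control each per-party feature as a function of the side lengths, uniformly within the regime.

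\textbf{Step 1: each side separately.} Take one party, and let $v,v'$ range over the finitely many side words that occur, namely the base word, its strip, or its extension by $1$. All of these differ from the base words by at most one letter. The regime fixes the first letters, the length parities, and the clamped values of $\min(\ell_1,\ell_2)$ and $\ell_1-\ell_2$. By Lemma~\ref{lem:nest}(iii), this data determines $\tau$, $\sigma$, $d_0$ and emptiness for every resulting $m_A$. In the non-large branches nothing varies at all. In a large branch the shift by one letter does not cross the clamp, since the clamp is at $4$ and the shift is at most $1$ per word, so at most $2$ in the difference. Hence the discrete features stay constant. Lemma~\ref{lem:nest}(ii) gives $i(m_A)$ as an affine function of $i(v)$ and $i(v')$. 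Since $i$ of an alternating word is affine in its length within a fixed first letter and parity, $i(m_A)$ is affine in that party's two free coordinates, the min-coordinate and the difference-coordinate, each stepping by $2$. It follows that the per-party features $1,\tau,\tau^2,d_0$ are constant in the regime, $i\tau$ and $i\sigma$ are affine, and $i^2$ is quadratic in that party's coordinates.

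\textbf{Step 2: the degree bound.} Each product $F_r\otimes G_r$ is therefore a polynomial in the at most $4$ free coordinates, of degree at most $2$ in each coordinate. The Party-$A$ coordinates enter only through $F_r$ and the Party-$B$ coordinates only through $G_r$. Finite signed sums preserve this property, and $u(K_1)u(K_2)$ is constant in the regime because Lemma~\ref{lem:tensor} says $u$ factorizes into per-party regime constants. This gives the degree claim for $E$. The lattice claim comes from the same bookkeeping: within a regime, the first letters and parities fix the lengths modulo $2$; exact clamped values fix the corresponding coordinate; and a ``large'' branch leaves a half-line with step $2$ beyond the clamp threshold.

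\textbf{The main obstacle} is verifying that the clamp at $4$ really is large enough. The regime must separate every configuration in which cancellation along the common prefix changes type after the up-to-one-letter modifications that produce strips and $.1$ extensions. Typical problem cases are $m$ becoming empty, becoming the single letter $(0)$ (which triggers $d_0$), or having its first-letter relation flip. I would handle this by listing the four possible modifications of each side word, then checking that a difference $\ge4$ (or a min $\ge4$) after modification still yields a nonempty, non-$(0)$ leftover whose boundary letters are parity-determined. The degenerate words $\emptyset$ and $(0)$ only arise when both lengths are small, and small lengths are exactly the non-large branches.
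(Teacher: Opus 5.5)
Your proposal is correct and follows the paper's proof essentially step for step. The paper's steps are all present in your argument: right-end shifts of at most one letter keep every Lemma~\ref{lem:nest}(iii) branch datum of each pairing constant across the regime, $i$ is affine in the free coordinates, separability (Lemma~\ref{lem:sep}) keeps the $A$- and $B$-coordinates in separate factors so the per-variable degree stays $\le2$, and $u(K_1)u(K_2)$ is a regime constant.

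Two small touch-ups. First, the lattice claim also needs that validity of the kernel elements (the trailing-$0$ and not-ending-in-$1$ patterns) is fixed by first letters and parities, which the paper states explicitly. Second, your closing remark that $\emptyset$ and $(0)$ arise only when both lengths are small is inaccurate: two long same-first-letter words of equal length already reduce to $\emptyset$. This is harmless, since such cases lie in the exact difference branch and are covered by your Step~1 appeal to Lemma~\ref{lem:nest}(iii).
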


\begin{proof}
The support words of a kernel element differ from its base words by at most
one letter at the right end (strip trailing $0$ / append $1$), so each of
the $\le16$ pairings has side lengths $(\ell_1+s,\ell_2+s')$,
$s,s'\in\{-1,0,1\}$; the element-level clamp at $4$ makes every
Lemma~\ref{lem:nest}(iii) branch datum of every pairing constant across
the regime, and validity of the elements (trailing-$0$ pattern) is
parity-fixed. So, per pairing, each per-party feature of the moment word
is a regime constant or affine in the free coordinates of that party
($i$ of a word is affine in its length at fixed first letter and parity,
slope $\tfrac12$). By Lemma~\ref{lem:sep}, $y^\ast(m(p,q))$ is a sum of
(A-factor)$\times$(B-factor) terms with each factor of degree $\le2$ in
its own party's coordinates ($i^2$ is the square of an affine function;
$i\tau$, $i\sigma$ are affine times regime constants) --- crucially, the
$A$- and $B$-coordinates never multiply into one variable, so the degree
per variable stays $\le2$. Finally $u(K_1)u(K_2)$ is a regime constant
(Lemma~\ref{lem:tensor}).
\end{proof}

\begin{lemma}[grid coverage at side length $20$]\label{lem:grid}
Every fine regime realizes the full $\{0,1,2\}$-grid of its free
coordinates among element pairs with all side lengths $\le20$: the clamp
threshold gives minimal ``large'' values $\le6$ ($4$ + parity), and two
grid steps add $4$ to each of the min- and difference-coordinates:
$6+4+6+4=20$. A polynomial of degree $\le2$ in each variable vanishing on
a $\{0,1,2\}$-grid vanishes identically.
\end{lemma}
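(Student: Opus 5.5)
The statement has two parts: a combinatorial coverage claim and a standard interpolation fact. I would prove them in reverse order.

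\emph{Interpolation.} I would argue by induction on the number of free coordinates $r\le4$. Write the polynomial as $E(x_1,\dots,x_r)=\sum_{e=0}^{2}x_r^e\,E_e(x_1,\dots,x_{r-1})$, where each $E_e$ has degree $\le2$ in each remaining variable. Fix a point $(x_1,\dots,x_{r-1})$ of the $\{0,1,2\}^{r-1}$ grid. Then $x_r\mapsto E$ is a univariate polynomial of degree $\le2$ vanishing at $0,1,2$, so it vanishes identically. Hence each $E_e$ vanishes on the smaller grid, and the inductive hypothesis gives $E_e\equiv0$.

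Here the grid is meant in the regime's own affine lattice coordinates. Lemma~\ref{lem:regime} says the free coordinates step by $2$ in length, so I would reparametrize each free length as $\ell=\ell_{\min}+2t$ with $t\in\{0,1,2,\dots\}$. Degree $\le2$ in $\ell$ is the same as degree $\le2$ in $t$, so the three grid points $t=0,1,2$ suffice. Since $E$ is polynomial on the whole regime lattice, not just at the grid points, vanishing on the grid gives vanishing on every admissible tuple in the regime.

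\emph{Coverage.} For each party separately, I would exhibit an explicit family of side-length pairs $(\ell_1,\ell_2)$ of the two base words realizing the regime data. The data are the first letters, the parities, and the clamped values of $m=\min(\ell_1,\ell_2)$ and $d=\ell_1-\ell_2$.

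\begin{itemize}
\item If a coordinate is exact (within the clamp $4$), it is fixed at a value $\le4$, or at $\le|d|\le4$.
\item If a coordinate is ``large'', its smallest admissible value is $5$ or $6$, determined by the required parity. The grid then uses values $\le6+4=10$.
\end{itemize}

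The longer word has length $m+|d|\le10+10=20$. The shorter has length $m\le10$. Kernel elements perturb base lengths by one letter (strip a trailing $0$ or append $1$). I would check that the count ``$\le20$'' refers to the base side words indexing the elements. If it refers to support words instead, one must confirm that the $\pm1$ slack is absorbed: appending $1$ to a length-$20$ word gives $21$, while stripping only lowers the length. The cleanest resolution is to define the bound on element base words, which is what the enumeration program should iterate over.

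Three validity conditions also need checking.
\begin{itemize}
\item The trailing-$0$ pattern is fixed by first letter and parity, so it is constant along the grid.
\item Dressing elements need $|W|\le k-2$. The grid check is level-free: the equation $E(K_1,K_2)=0$ does not depend on $k$, so one takes $k$ large enough to contain the words.
\item The two parties' coordinates are chosen independently. This is legitimate because the regime data factor per party, as in Lemma~\ref{lem:tensor}.
\end{itemize}

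\emph{Main obstacle.} The delicate point is making sure every fine regime is actually nonempty in its exact branches and that the clamp is consistent with the support perturbations. Lemma~\ref{lem:regime} uses clamp $4$ at the element level precisely so that the pairing-level clamp $2$ of Lemma~\ref{lem:nest}(iii) is constant after $\pm1$ shifts. I would verify that the minimal ``large'' representative $5$ or $6$ stays large after shifts: $\ge5-1-1=3>2$. This is the crux of why the constant $4$, and hence $20$, is the right choice.
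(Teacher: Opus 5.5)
Your proposal is correct and follows the paper's own argument. You use the same length count: the smallest ``large'' value is $\le 6$ (clamp $4$ plus parity), and two grid steps of $2$ add $4$ to each of the min- and difference-coordinates, so the longer base word has length $\le 6+4+6+4=20$. You pair this with the standard fact that a polynomial of degree $\le 2$ in each variable vanishing on $\{0,1,2\}^r$ is identically zero, which you prove correctly by induction after reparametrizing $\ell=\ell_{\min}+2t$. Your extra checks fit how the paper's enumeration is set up and do not change the route: reading the bound $20$ as a bound on base side words, level-independence of $E$, and per-party independence once the element types are fixed.
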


\begin{proof}[Completion of the proof of Theorem~\ref{thm:witness}]
The exact-integer verifier \texttt{verify\_r1\_induction.py} evaluates
$E(K_1,K_2)$ for \emph{all} $1{,}413{,}721$ kernel-element pairs with side
lengths $\le20$ (a level-independent enumeration) and finds $E=0$ in every
case; a margin run at $\le24$ covers $2{,}883{,}601$ pairs, and a rerun of
the same engine at $\le26$ (reproducible via \texttt{check\_all\_pairs(26)}
in that file) covers $3{,}946{,}645$ pairs. The independent
re-verification added a \emph{freshly written} implementation with
$18{,}915$ targeted extreme pairs plus $150{,}000$ random pairs at side
lengths up to $61$ --- all zero. By Lemmas~\ref{lem:regime} and~\ref{lem:grid} the $\le20$
enumeration covers a full interpolation grid of every fine regime, so
$E\equiv0$ identically: for every kernel pair at every level. The verifier
also checks $P(y^\ast)=u_ku_k^\top$ per level directly at $k=2,\dots,8$
and the sharp pins. \emph{(The runs at $\le24$, $\le26$ and the random
sweep provide margin only; the $\le20$ grid together with the degree
bound constitutes the proof.)}
\end{proof}

In addition, the regime step has been verified \emph{fully symbolically},
removing the interpolation argument from the trust base: the verifier
\texttt{verify\_witness\_symbolic.py} enumerates all $58{,}081$ fine regimes
from the definitions (clamp $3$), assembles $E(K_1,K_2)$ per regime as an exact
polynomial in the $\le4$ free length coordinates (sympy, guarded branch
decisions, none symbol-dependent), and finds $E$ \emph{identically zero} in
every regime; the regime classifier is total, and the $\le20$ enumeration
realizes exactly the $58{,}081$ regimes. Lemmas~\ref{lem:regime}
and~\ref{lem:grid} thus remain as an independent second proof.

\begin{remark}[the germ picture]\label{rem:germ}
An exact induction-step analysis (steps $2\to3,3\to4,4\to5$;
\texttt{explore\_r1\_induction.py}) shows that the old-class rows of the
level-$(k{+}1)$ system reduce, modulo the level-$k$ system, to exactly two
residual conditions supported on the level-$k$ ``unseen'' classes
($m_k$ = the alternating word $1(01)^{k-1}$ of length $2k-1$):
\[
y[(\emptyset,m_k)]-y[((0),m_k)]=-\tfrac12,\qquad
y[(\emptyset,m_k)]-y[((1),m_k)]-y[((0),0m')]+y[((1),0m')]=\tfrac{4k-3}2 ,
\]
($0m'$ the length-$(2k{-}1)$ word $0101\cdots0$). Since unseen coordinates
are free at level $k$, extendability is never obstructed --- this is the
mechanical content of the \emph{germ stabilization law} (each extension
cuts exactly two gauge dimensions, higher levels cut nothing further, and
the bonding maps of the extendable-germ tower are surjective). The
closed-form $y^\ast$ satisfies both conditions identically and \emph{is}
the coherent infinite tower whose existence that law predicted.
\end{remark}

\section{Small-$\lambda$ positivity: a feasible tangent jet at every level}
\label{sec:smalllambda}

\subsection{The tangent program}

From here on we use the certificate scaling of Remark~\ref{rem:norm}
($\tilde u=u_k/2$, $\lam:=\tilde\lam$, so $(y_1)_{\la A_0\ra}=-\lam/8$ on
$V_k$); all constants below are in this scaling.

Since the base point and all directions are swap-symmetric (class
vectors), $[\Gamma(y),S]=0$ for the party-swap permutation $S$ of the
basis words, $(v,w)\mapsto(w,v)$ ($S^2=1$), and the orthogonal congruence
by the basis $[\,S$-fixed vectors and $e_x+e_{Sx}\mid e_x-e_{Sx}\,]$
block-diagonalizes every $\Gamma(y)$ into a symmetric block $M_s(y)$ (on
$\ker(S-1)$) and an antisymmetric block $M_a(y)$ (on $\ker(S+1)$). Write
$M_{s0},M_{a0}$ for the blocks at the base $y_0(\delta^\ast)$. By
Theorem~\ref{thm:L2} the relation lattice
$\Lambda_k:=\range N_k=\ker\Gamma_k(y_0(\delta^\ast))$ is exact; it is
$S$-invariant and splits as $\Lambda_k=\Lambda_s\oplus\Lambda_a$,
$\Lambda_s=\Lambda_k\cap\ker(S-1)$, $\Lambda_a=\Lambda_k\cap\ker(S+1)$
(Lemma~\ref{lem:bridge}(i) below). Define $N_s$ (resp.\ $N_a$) as a matrix
whose columns are the symmetric-block (resp.\ antisymmetric-block)
coordinates of a basis of $\Lambda_s$ (resp.\ $\Lambda_a$); the
certificate files fix concrete integer choices. Then
\[
M_{s0},M_{a0}\succeq0,\qquad \ker M_{s0}=\Span N_s,\qquad
\ker M_{a0}=\Span N_a\qquad\text{exactly:}
\]
PSD-ness passes through the congruence, and for a PSD block,
$M_{s0}v=0\iff v^\top M_{s0}v=0\iff\Gamma_0\,Tv=0\iff
Tv\in\Lambda_k\cap\ker(S-1)=\Lambda_s$, with $T$ the symmetric half of the
congruence basis (likewise antisymmetric). The following lemma carries,
for every $k$ at once, the full-lattice witness identity of
Theorem~\ref{thm:witness} to the swap-split hypotheses used below; it also
discharges the antisymmetric-block clause dropped from the definition of
$V_k$ in Section~\ref{sec:uk}.

\begin{lemma}[swap-split bridge]\label{lem:bridge}
Fix $k\ge2$. Then:
\begin{enumerate}
\item $S$ permutes the canonical kernel basis: as coefficient vectors,
  $S\,T_{(a,b)}=T_{(b,a)}$ and $S\,D_{(w_a,w_b)}=D_{(w_b,w_a)}$. Writing
  $\pi$ for the induced involutive permutation of the basis and $\Pi$ for
  its matrix, $SN_k=N_k\Pi$; in particular $\Lambda_k$ is $S$-invariant
  and splits as $\Lambda_s\oplus\Lambda_a$ with explicit bases
  $\{N_ke_m:\pi(m)=m\}\cup\{N_k(e_m+e_{\pi(m)}):m<\pi(m)\}$ of $\Lambda_s$
  and $\{N_k(e_m-e_{\pi(m)}):m<\pi(m)\}$ of $\Lambda_a$.
\item $u_k$ is swap-invariant: $\Pi u_k=u_k$.
\item Let $y$ be any class vector with
  $N_k^\top\Gamma(y)N_k=\lam\,u_ku_k^\top$. Then the quadratic form
  $\Gamma(y)$ vanishes identically on $\Lambda_a$, has no
  $\Lambda_s$--$\Lambda_a$ cross terms, and restricts to $\Lambda_s$ as
  the rank-one form $N_kc\mapsto\lam\,(u_k^\top c)^2$. Equivalently, in
  the block bases:
  \[
  N_s^\top M_s(y)\,N_s=\lam\,\tilde u\tilde u^\top,\qquad
  N_a^\top M_a(y)\,N_a=0,
  \]
  where $\tilde u_i=u_k^\top c^{(i)}$ for the lattice basis
  $\{N_kc^{(i)}\}$ underlying $N_s$; and $\tilde u\ne0$, since
  $u_k^\top(e_m+e_{\pi(m)})=2\,u_k[m]\ne0$ for any $m$ with
  $u_k[m]\ne0$.
\end{enumerate}
\end{lemma}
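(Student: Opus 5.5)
The proof is linear algebra on coefficient vectors, taking the three items in order.

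For (i), act with $S$ on the basis words, $(a,b)\mapsto(b,a)$, and check that it commutes with the two constructions. The operation $\sharp$ strips a trailing $0$ from each part independently, so swapping the parts first and then stripping gives the same result as stripping first and then swapping. Hence $S\,T_{(a,b)}=e_{(b,a)}-e_{(b^\sharp,a^\sharp)}=T_{(b,a)}$. The condition ``at least one part has a trailing $0$'' is swap-invariant, so $(b,a)$ is again an admissible index.

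For the dressings, $D_W$ is the tensor product $D^\Delta(w_a)\otimes D^\Delta(w_b)$ (Lemma~\ref{lem:tensor}). Swapping the tensor factors gives $D^\Delta(w_b)\otimes D^\Delta(w_a)=D_{(w_b,w_a)}$. The admissibility conditions (neither part ends in $1$, $|W|\le k-2$) are symmetric. So $S$ maps the canonical column set to itself, i.e.\ $SN_k=N_k\Pi$ with $\Pi$ an involutive permutation matrix. It follows that $\Lambda_k=\range N_k$ is $S$-invariant. Since $S$ is an involution, $\Lambda_k$ splits into its $\pm1$ eigenspaces. Linear independence of the columns (Proposition~\ref{prop:count}) makes the listed symmetrized and antisymmetrized combinations bases of these eigenspaces, by a count over orbits of $\pi$.

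For (ii), read off Definition~\ref{def:uk}. For a trivial reduction, $u=s(a)s(b)$ when exactly one part ends in $0$, and $u=0$ otherwise; both the value and the case condition are symmetric in $(a,b)$. For a dressing, $u[D_W]=2(-1)^{|W|}$ depends only on $|W|=|w_a|+|w_b|$, and $W=\emptyset$ is fixed by the swap. Hence $u_k[\pi(m)]=u_k[m]$.

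For (iii), the key fact is that $\Gamma(y)$ commutes with $S$ for every class vector $y$. This holds because class coordinates are invariant under party swap, so $\Gamma(y)_{Sp,Sq}=\Gamma(y)_{pq}$. Now take $c\in\mathbb R^{2k^2}$ with $\Pi c=-c$, so that $N_kc\in\Lambda_a$. Then
\[
(N_kc)^\top\Gamma(y)(N_kc)=\lam\,(u_k^\top c)^2=0,
\]
because $u_k^\top c=u_k^\top\Pi c\cdot(-1)$, and by (ii) $u_k^\top\Pi=u_k^\top$, which forces $u_k^\top c=0$.

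The cross terms are handled the same way. For $c$ with $\Pi c=c$ and $c'$ with $\Pi c'=-c'$, the pairing equals $\lam\,(u_k^\top c)(u_k^\top c')$, and the second factor is $0$. Alternatively, $S$-commutation already forces cross terms between the $\pm1$ eigenspaces to vanish. On $\Lambda_s$ the form is $N_kc\mapsto\lam(u_k^\top c)^2$, which is the hypothesis restricted to that subspace.

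Translating into the block coordinates is bookkeeping. The congruence basis is orthogonal, and $N_s$, $N_a$ are the block coordinates of the chosen bases $\{N_kc^{(i)}\}$. Hence $N_s^\top M_s(y)N_s$ has entries $\lam\,\tilde u_i\tilde u_j$ with $\tilde u_i=u_k^\top c^{(i)}$, and $N_a^\top M_a(y)N_a=0$.

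It remains to show $\tilde u\neq0$. Pick any index $m$ with $u_k[m]\ne0$; for instance $T_{(A_0,\emptyset)}$ has $u=1$. If $\pi(m)=m$, the basis vector $e_m$ gives $\tilde u=u_k[m]\ne0$. Otherwise $e_m+e_{\pi(m)}$ gives $2u_k[m]\neq0$ by (ii).

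The main obstacle is item (i), specifically checking that the swap commutes with the post-reduction description of the dressing corners, including the trailing-$0$ strip in the $(w_a',w_b')$ corner. The tensor form in Lemma~\ref{lem:tensor} reduces this to a symmetric per-party definition, so the difficulty is careful bookkeeping rather than any new idea.
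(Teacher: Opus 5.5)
Your proof is correct and follows essentially the same route as the paper: exchanging the tensor factors of Lemma~\ref{lem:tensor} for (i), reading the party symmetry of $s(a)s(b)$ and $2(-1)^{|w_a|+|w_b|}$ off Definition~\ref{def:uk} for (ii), and using the pairing identity $x^\top\Gamma(y)x'=\lam(u_k^\top c)(u_k^\top c')$ with $u_k^\top c=0$ whenever $\Pi c=-c$ for (iii). Your extra points are harmless refinements beyond the paper's text: the separate treatment of a swap-fixed index $\pi(m)=m$ in the $\tilde u\ne0$ step, and the alternative cross-term argument via $[\Gamma(y),S]=0$.
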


\begin{proof}
(i) $S$ exchanges the two tensor factors of Lemma~\ref{lem:tensor}, so it
maps the support of $T_{(a,b)}$, namely
$\{+(a,b),\,-(\mathrm{strip}\,a,\mathrm{strip}\,b)\}$, to
$\{+(b,a),\,-(\mathrm{strip}\,b,\mathrm{strip}\,a)\}$, which is exactly
$T_{(b,a)}$; and the four corners of $D_{(w_a,w_b)}$, coefficients
included, to the four corners of $D_{(w_b,w_a)}$.
(ii) Immediate from Definition~\ref{def:uk}: $u_k[T_{(a,b)}]=s(a)s(b)$
and $u_k[D_W]=2(-1)^{|w_a|+|w_b|}$ are symmetric in the two parties.
(iii) For $x=N_kc$ and $x'=N_kc'$,
\[
x^\top\Gamma(y)\,x'
=c^\top\bigl(N_k^\top\Gamma(y)N_k\bigr)c'
=\lam\,(u_k^\top c)(u_k^\top c') .
\]
If $\Pi c=-c$ then, by (ii),
$u_k^\top c=(\Pi u_k)^\top c=u_k^\top\Pi c=-u_k^\top c$, so
$u_k^\top c=0$: the form vanishes whenever either argument lies in
$\Lambda_a$ (this covers both the antisymmetric block and the cross
terms), and on $\Lambda_s$ it is the stated rank-one square. The block
identities follow since $N_s^\top M_s(y)N_s$ (resp.\
$N_a^\top M_a(y)N_a$) is precisely the matrix of the restriction of
$\Gamma(y)$ to $\Lambda_s$ (resp.\ $\Lambda_a$) in the chosen basis: the
congruence preserves the pairing of lattice vectors.
\end{proof}

All claims of Lemma~\ref{lem:bridge} were additionally machine-verified in
exact integer arithmetic at $k=2,\dots,5$
(\texttt{verify\_bridge\_lemma.py}: the permutation labels and
involutivity, $\Pi u_k=u_k$, vanishing of the antisymmetric and cross
compressions of $\Gamma(y^\ast)$, and the rank-one symmetric compression
$\tilde u\tilde u^\top$), independently of the certificate bases; the
certificate-basis instances at $k=2,3,4$ are re-checked in the independent
re-verification (audit S3).

The gain vector lives in the symmetric kernel; let $K_s$ be a basis of
$\tilde u^\perp$ there. The \emph{second-order tangent program} at the
face is: over pinned directions ($y_1,y_2$ vanishing at the identity class
and the $k$ face-coordinate classes) with $(y_1,\lam)\in V_k$, $\lam\ge0$,
\begin{equation}\label{eq:tangent}
\text{maximize } B_0\cdot y_2-m\cdot y_1
\quad\text{s.t.}\quad
\begin{pmatrix}M_{s0}&M_s(y_1)N_sK_s\\ \ast&(N_sK_s)^\top
M_s(y_2)N_sK_s\end{pmatrix}\succeq0,
\end{equation}
plus the analogous antisymmetric block
$\bigl(\begin{smallmatrix}M_{a0}&M_a(y_1)N_a\\ \ast&N_a^\top
M_a(y_2)N_a\end{smallmatrix}\bigr)\succeq0$. Its value $v_k$ lower-bounds
$a_k$ once the framework step of Section~\ref{sec:arc} is in place. Any
feasible point gives $v_k\ge$ its objective; by
Lemma~\ref{lem:identities} the objective of a feasible point is
$B_0\cdot y_2+\lam/8$.

\begin{lemma}[range identity]\label{lem:range}
On $V_k$, for every $k$, both block range conditions hold identically:
$\range\bigl(M_s(y_1)N_sK_s\bigr)\subseteq\range M_{s0}$ and
$\range\bigl(M_a(y_1)N_a\bigr)\subseteq\range M_{a0}$.
\end{lemma}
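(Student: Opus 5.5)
Since $M_{s0}$ is symmetric PSD, $\range M_{s0}=(\ker M_{s0})^\perp=(\Span N_s)^\perp$. The symmetric range condition is therefore equivalent to
\[
N_s^\top M_s(y_1)N_sK_s=0 .
\]
By Lemma~\ref{lem:bridge}(iii), every $(y_1,\lam)\in V_k$ satisfies $N_s^\top M_s(y_1)N_s=\lam\,\tilde u\tilde u^\top$. Multiplying on the right by $K_s$ gives $\lam\,\tilde u(\tilde u^\top K_s)=0$, because the columns of $K_s$ span $\tilde u^\perp$ inside the symmetric kernel. Hence $M_s(y_1)N_sK_s$ is orthogonal to $\Span N_s$, and so lies in $\range M_{s0}$.

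The antisymmetric case is the same argument. Again $\range M_{a0}=(\Span N_a)^\perp$, and Lemma~\ref{lem:bridge}(iii) gives $N_a^\top M_a(y_1)N_a=0$ on $V_k$. So every column of $M_a(y_1)N_a$ is orthogonal to $\Span N_a$, which is exactly $\range M_{a0}$.

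Both steps use only linear facts valid on all of $V_k$; no positivity of $y_1$ and no sign of $\lam$ is needed. The conditions therefore hold identically on the constraint space, uniformly in $k$.

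The one point to check carefully is the identification $\ker M_{s0}=\Span N_s$ and $\ker M_{a0}=\Span N_a$ as exact equalities, not mere inclusions. If the kernel were larger than the lattice, orthogonality to $N_s$ would not imply membership in the range. This exactness is supplied by Theorem~\ref{thm:L2} (kernel law at $y_0(\delta^\ast)$) together with the congruence argument stated just before Lemma~\ref{lem:bridge}.

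It also has to be checked that $N_s$ and $N_a$ are written in the same block coordinates in which $M_s(y_1)$ and $M_a(y_1)$ act. This is what lets the pairings $N_s^\top M_s(y_1)(\cdot)$ compute the restriction of $\Gamma(y_1)$ to $\Lambda_s$, and likewise for $\Lambda_a$, as already used in the proof of Lemma~\ref{lem:bridge}(iii). With these two identifications in place, the lemma follows, and each block-PSD condition in \eqref{eq:tangent} reduces to a Schur complement with a pseudoinverse $M_{s0}^{+}$ (respectively $M_{a0}^{+}$).
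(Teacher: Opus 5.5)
Your proof is correct and follows the paper's argument: you use the exact kernel law to identify $\range M_{s0}=(\Span N_s)^\perp$, then reduce the symmetric condition to $N_s^\top M_s(y_1)N_sK_s=\lam\tilde u(\tilde u^\top K_s)=0$, and handle the antisymmetric block the same way via $N_a^\top M_a(y_1)N_a=0$. Citing Lemma~\ref{lem:bridge}(iii) for the antisymmetric clause is actually the more accurate reference: that condition was removed from the definition of $V_k$ and recovered through the bridge lemma, whereas the paper's proof calls it ``part of the definition of $V_k$''.
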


\begin{proof}
$M_{s0}$ is PSD with $\ker M_{s0}=\Span N_s$ exactly
(Theorem~\ref{thm:L2}), so $\range M_{s0}=(\Span N_s)^\perp$ and the
condition reads $N_s^\top M_s(y_1)N_sK_s=0$. On $V_k$,
$N_s^\top M_s(y_1)N_s=\lam\tilde u\tilde u^\top$ and $K_s$ is by
definition a basis of $\tilde u^\perp$, so
$N_s^\top M_s(y_1)N_sK_s=\lam\tilde u(\tilde u^\top K_s)=0$. The
antisymmetric condition is $N_a^\top M_a(y_1)N_a=0$, part of the
definition of $V_k$.
\end{proof}

Hence, by Albert's generalized Schur criterion, block-PSD is equivalent to
Schur-complement-PSD at every point of $V_k$, for every $k$ --- the range
identity observed per level in~\cite{overshoot} is structural.

\begin{lemma}[pin homogeneity and objective constants]\label{lem:pins}
For every $k$:
\begin{enumerate}
\item All pins of the tangent program are homogeneous ($y_1$ and $y_2$
  vanish at the identity class and the $k$ face-coordinate classes), so
  $V_k$ is a linear space and $t\,y_2$ is pin-feasible for all $t$ if
  $y_2$ is.
\item The uniform Slater point $y_2^\ast$ (Theorem~\ref{thm:L4}) satisfies
  all pins, and $B_0\cdot y_2^\ast=B_0\cdot e_{\mathrm{class}(1)}
  -B_0\cdot y_0(0)=0-4=-4$ exactly ($B_0$ has no identity-class
  coefficient; $B_0\cdot y\equiv4$ on the affine face hull, including at
  $\delta=0$, by Theorem~\ref{thm:L1}(4)).
\item On $V_k$: $(y_1)_{\la A_0\ra}=-\lam/8$ and $B_0\cdot y_1=0$
  (Lemma~\ref{lem:identities}), so the objective of~\eqref{eq:tangent} is
  $B_0\cdot y_2+\lam/8$.
\end{enumerate}
\end{lemma}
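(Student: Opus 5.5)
The three items are essentially bookkeeping consequences of results already in place, so I will treat them one at a time and aim only for exact identities.

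\emph{Item 1.} The pins are equations of the form ``a coordinate equals zero'', namely at the identity class and at the $k$ face-coordinate classes $(\emptyset,\beta_j)$. These are linear homogeneous constraints. The defining equation $P(y_1)=\lam u_ku_k^\top$ of $V_k$ is also linear and homogeneous in $(y_1,\lam)$, because $P$ is linear in $y_1$. An intersection of kernels of linear maps is a linear subspace, so $V_k$ together with its pins is linear, and any scaling $t\,y_2$ of a pin-feasible $y_2$ remains pin-feasible.

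\emph{Item 2.} This is immediate from Theorem~\ref{thm:L4}(1), which already records that $y_2^\ast$ vanishes at the identity class and at the face-coordinate classes; I will simply restate it. For the objective value, linearity gives $B_0\cdot y_2^\ast=B_0\cdot e_{\mathrm{class}(1)}-B_0\cdot y_0(0)$. The first term is $0$: in class coordinates $B_0\cdot y=2y_{\la A_0\ra}+y_{\la A_0B_0\ra}+2y_{\la A_0B_1\ra}-y_{\la A_1B_1\ra}$ has no identity-class entry. The second term is $4$. The cleanest way to see this is to plug $\delta=0$ into the product law used in Lemma~\ref{lem:hull}: $m_{00}=1$ and $m_{01}=\delta_1=0$ give $2+1+0-(-1)=4$. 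There is one point I would check carefully here. Under the product law $y_{\la A_0\ra}$ and $y_{\la A_0B_0\ra}$ take their core values $m_{00}=1$ (the $0$-padding convention), and $y_{\la A_0B_1\ra}=m_{01}$ and $y_{\la A_1B_1\ra}=m_{11}=2\delta_1-1$. With these values the identity $B_0\cdot y\equiv4$ holds for every affine $\delta$, with no positivity needed. In particular it holds at the non-PSD point $\delta=0$.

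\emph{Item 3.} Lemma~\ref{lem:identities} gives $(y_1)_{\la A_0\ra}=-\lam_{\rm can}/2$ and $B_0\cdot y_1=0$ in the canonical scaling. Converting through Remark~\ref{rem:norm} with $\tilde\lam=4\lam_{\rm can}$ gives $-\tilde\lam/8$. The objective of~\eqref{eq:tangent} is $B_0\cdot y_2-m\cdot y_1=B_0\cdot y_2-(y_1)_{\la A_0\ra}=B_0\cdot y_2+\lam/8$.

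The only step requiring care is the scaling bookkeeping in item 3. Lemma~\ref{lem:identities} uses the kernel pair $(T_{A_0},T_{A_0})$ with $u=1$. In the certificate scaling $u$ becomes $1/2$, so $\tilde\lam\,u^2=\tilde\lam/4$, which yields $2y_1-2y_{\la A_0\ra}=\tilde\lam/4$, that is $y_{\la A_0\ra}=-\tilde\lam/8$. I will verify this directly rather than rely on the remark. The remaining identities, $y_{\la A_0B_0\ra}=0$ and $B_0\cdot y_1=0$, are scale-free because they come from pairs with $u=0$ together with a linear combination, so they carry over unchanged.
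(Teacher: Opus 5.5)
Your proposal is correct and follows the paper's own justification: the pins and the equation $P(y_1)=\lam u_ku_k^\top$ are homogeneous; Theorem~\ref{thm:L4}(1) combines with the $\delta$-independent evaluation $2+1+2\delta_1-(2\delta_1-1)=4$ from the proof of Lemma~\ref{lem:hull}; and Lemma~\ref{lem:identities} is converted through Remark~\ref{rem:norm}. One small correction to your last paragraph: $B_0\cdot y_1=0$ is scale-free because rescaling $u_k$ only relabels $\lam$ and leaves the set of admissible $y_1$ unchanged, not because it comes from $u=0$ pairs --- its derivation actually uses pairs with $u\neq0$ (for instance $(T_{A_0},T_{(A_0,B_1)})$), whose $\lam$-terms cancel.
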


\subsection{The feasible point}

\begin{theorem}[small-$\lambda$ positivity]\label{thm:smalllambda}
Fix $k\ge2$ and suppose a witness exists at level $k$: a pinned direction
$y_1(w)$ with
\[
N_s^\top M_s(y_1(w))N_s=\tilde u\tilde u^\top,\qquad
N_a^\top M_a(y_1(w))N_a=0
\]
(i.e.\ $(y_1(w),1)\in V_k$). Let
\[
C_s=(N_sK_s)^\top M_s(y_2^\ast)N_sK_s,\qquad
C_a=N_a^\top M_a(y_2^\ast)N_a\qquad(\text{both}\succ0\text{ by
Theorem~\ref{thm:L4}}),
\]
\[
Q_s=F^\top M_{s0}^{+}F,\ \ F=M_s(y_1(w))N_sK_s;\qquad
Q_a=G^\top M_{a0}^{+}G,\ \ G=M_a(y_1(w))N_a,
\]
and $\rho_k(w)=\min\{t:\;tC_s\succeq Q_s,\ tC_a\succeq Q_a\}$ (finite
since $C_s,C_a\succ0$; nonnegative since $Q_s,Q_a\succeq0$). Then for
every rational $t_0>\rho_k(w)$,
\[
v_k\;\ge\;\frac1{1024\,t_0}\;>\;0 .
\]
\end{theorem}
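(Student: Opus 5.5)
The plan is to write down an explicit feasible point of the tangent program~\eqref{eq:tangent} on a two-parameter ray and then optimize one scalar. I take the first-order direction to be the scaled witness $y_1=\lam\,y_1(w)$ with $\lam\ge0$, and the second-order direction to be a multiple of the uniform Slater point, $y_2=c\,y_2^\ast$ with $c\ge0$.

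\textbf{Step 1: pins and membership in $V_k$.} By Lemma~\ref{lem:pins}(1) all pins are homogeneous. Since $y_1(w)$ is pinned and $y_2^\ast$ satisfies the pins (Theorem~\ref{thm:L4}), both $y_1$ and $y_2$ are pinned. By linearity of $M_s(\cdot)$, the compressed forms of $y_1$ are $\lam\,\tilde u\tilde u^\top$ and $0$. Hence $(y_1,\lam)\in V_k$ in the certificate scaling.

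\textbf{Step 2: reduce block-PSD to Schur complements.} Lemma~\ref{lem:range} gives the range conditions identically on $V_k$. By Albert's generalized Schur criterion, the symmetric block is then PSD iff
\[
c\,C_s-(\lam F)^\top M_{s0}^{+}(\lam F)=c\,C_s-\lam^2Q_s\succeq0 ,
\]
and likewise the antisymmetric block is PSD iff $c\,C_a-\lam^2Q_a\succeq0$. Here the Schur subtrahend is quadratic in $y_1$, so it scales by $\lam^2$. Choose $c=t_0\lam^2$ with $t_0>\rho_k(w)$ rational. Then $c\,C_s-\lam^2Q_s=\lam^2(t_0C_s-Q_s)$. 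This is PSD because $\rho_k(w)$ is feasible for the defining minimization and $C_s\succ0$, so $t_0C_s\succeq\rho_kC_s\succeq Q_s$. The same argument applies to the antisymmetric block. So the point is feasible for every $\lam\ge0$.

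\textbf{Step 3: evaluate and optimize the objective.} By Lemma~\ref{lem:pins}(2)--(3), the objective is
\[
B_0\cdot y_2+\frac{\lam}{8}=-4t_0\lam^2+\frac{\lam}{8}.
\]
Maximizing over $\lam\ge0$ gives $\lam=1/(64t_0)$, which is rational and positive. The resulting value is
\[
\frac1{512\,t_0}-\frac1{1024\,t_0}=\frac1{1024\,t_0}>0 .
\]
Hence $v_k\ge1/(1024t_0)$.

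\textbf{Main obstacle.} The only delicate point is the Schur step: the generalized Schur criterion needs the range inclusion at the actual point used, and this holds only because $(y_1,\lam)$ lies exactly in $V_k$. That is where Lemma~\ref{lem:range} and Theorem~\ref{thm:L2} (kernel exactly $\Span N_s$) enter. A second thing to check is that the scalings match: the hypothesis is stated with $\tilde u$ and $\lam=1$, so that $(y_1(w))_{\la A_0\ra}=-1/8$, consistent with Remark~\ref{rem:norm}.
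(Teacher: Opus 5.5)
Your proposal is correct and follows the paper's proof essentially step for step. You use the same explicit point $y_1=\lambda\,y_1(w)$, $y_2=t_0\lambda^2\,y_2^\ast$, the same reduction via the range identity to the Schur complements $\lambda^2(t_0C_s-Q_s)$ and $\lambda^2(t_0C_a-Q_a)$ by Albert's criterion (with $M_{s0},M_{a0}\succeq0$ from the setup), and the same objective $\lambda/8-4t_0\lambda^2$, maximized at $\lambda=1/(64t_0)$ with value $1/(1024\,t_0)$.
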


\begin{proof}
Take any $\lam\in\bigl(0,\tfrac1{32t_0}\bigr)$ and the explicit point
\[
y_1=\lam\,y_1(w),\qquad y_2=(t_0\lam^2)\,y_2^\ast .
\]
Pins hold by Lemma~\ref{lem:pins} (homogeneity); $\lam\ge0$;
$(y_1,\lam)\in V_k$ by linearity of $V_k$. Both block matrices
in~\eqref{eq:tangent} are PSD by Albert's criterion: the corner blocks
$M_{s0},M_{a0}\succeq0$ (Theorem~\ref{thm:L2}), the range conditions hold
(Lemma~\ref{lem:range}), and the Schur complements are
\[
(t_0\lam^2)\,C_s-\lam^2Q_s=\lam^2\,(t_0C_s-Q_s)\succeq0,
\]
and the same in the antisymmetric block. The objective is
$B_0\cdot y_2+\lam/8=\lam/8-4t_0\lam^2$ (Lemma~\ref{lem:pins}), maximized
over the allowed $\lam$ at $\lam=\tfrac1{64t_0}$, with value
$\tfrac1{1024t_0}>0$.
\end{proof}

By Theorem~\ref{thm:witness} the witness exists at every level:
$w=y^\ast/4$ is pinned and $P(y^\ast/4)=\tfrac14u_ku_k^\top$, so
Lemma~\ref{lem:bridge}(iii) --- applied with $\lam=\tfrac14$, i.e.\
$\tilde u=u_k/2$ in the certificate scaling --- delivers exactly the two
swap-split hypotheses $N_s^\top M_s(w)N_s=\tilde u\tilde u^\top$ and
$N_a^\top M_a(w)N_a=0$, for every $k\ge2$ at once. (The same chain is
verified exactly in the concrete certificate bases at $k=2,3,4$ in the
independent re-verification.) Hence $v_k>0$ \emph{for every} $k\ge2$.

\section{Exact arc feasibility: from tangent jet to $c_k(s)$}
\label{sec:arc}

The step from a feasible tangent jet to the true SDP value is where naive
second-order arguments fail: when the first-order compressed form
degenerates on the kernel, the standard $o(s^2)$-correction argument
breaks down. Our jets degenerate on $\tilde u^\perp$ --- but exactly there
the second-order block is strictly positive definite, and the rank-one
first-order form has exactly zero kernel cross-coupling. That structure
makes any correction unnecessary: the arc is exactly feasible as written.

\begin{lemma}[no-repair positivity]\label{lem:norepair}
Let $M(s)=M_0+sM_1+s^2M_2$ be symmetric $n\times n$ and decompose
$\mathbb R^n=R\oplus U\oplus W$ orthogonally with $\dim U\le1$. Assume
\begin{enumerate}
\item[(a)] $M_0\succeq0$ with $\ker M_0=U\oplus W$ exactly;
  $c:=\min\operatorname{spec}(M_0|_R)>0$;
\item[(b)] $P_{\ker}M_1P_{\ker}=\hat\lam\,P_U$ with $\hat\lam>0$
  (rank one, aligned with $U$; if $U=0$, the kernel-compressed first-order
  form vanishes). In particular the $U$--$W$ and $W$--$W$ first-order
  blocks are exactly zero;
\item[(c)] no condition on the range--kernel coupling
  $X_1:=P_RM_1P_{\ker}$;
\item[(d)] $H_{WW}:=P_WM_2P_W-X_{1W}^\top M_0^{+}X_{1W}\succeq\mu I_W$
  with $\mu>0$ ($X_{1W}=P_RM_1P_W$).
\end{enumerate}
Set $b_1=\|M_1\|$, $b_2=\|M_2\|$ (spectral norms), $h=\|H_{UU}\|$,
$\kappa=\|H_{UW}\|$ (blocks of
$H:=P_{\ker}M_2P_{\ker}-X_1^\top M_0^{+}X_1$; note
$h,\kappa\le b_2+b_1^2/c$), and
\[
\eps:=\frac{2b_1^2(b_1+b_2)}{c^2}+\frac{4b_1b_2}{c}+\frac{2b_2^2}{c},
\qquad
s^\ast:=\min\Bigl\{1,\ \frac c{2(b_1+b_2)},\ \frac{\hat\lam}{2(h+\eps)},\
\frac\mu{2\eps},\ \frac{\hat\lam\mu}{4(\kappa+\eps)^2}\Bigr\}
\]
(dropping the $\hat\lam$-terms if $U=0$). Then $M(s)\succeq0$ for all
$s\in(0,s^\ast]$ (and $\succ0$ on the range-block--kernel-complement in
the interior). No $o(s^2)$ correction is needed.
\end{lemma}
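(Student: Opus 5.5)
We write an arbitrary vector as $x=r+u+w$ with $r\in R$, $u\in U$, $w\in W$, and bound the quadratic form $x^\top M(s)x$ from below block by block. Positivity then comes from a $3\times3$ scalar Gram-type matrix in $(\|r\|,\|u\|,\|w\|)$ whose diagonal entries scale as $c$, $\hat\lam s$ and $\mu s^2$.

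The key device is a change of variables that removes the range--kernel coupling $X_1$ (hypothesis (c) places no constraint on it). Set $z=u+w$ and shift $r\mapsto r'=r+sM_0^{+}X_1z$, i.e.\ complete the square in the $R$-block. Expanding,
\[
x^\top M(s)x=r^\top M_0r+2s\,r^\top X_1z+s\,z^\top M_1z+s^2z^\top M_2z+\text{(terms with $r$ at order $s$, $s^2$)}.
\]
The $r$--$z$ coupling at order $s$ is absorbed by the square $(r')^\top M_0r'$. This leaves $s^2z^\top(P_{\ker}M_2P_{\ker}-X_1^\top M_0^+X_1)z=s^2z^\top Hz$. The leftover terms are $s\,r^\top M_1r$, $s^2r^\top M_2r$ and $2s^2r^\top M_2z$, together with the mixed terms created when $r$ is re-expressed through $r'$. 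We bound each of these using $b_1,b_2,c$; collecting them is exactly what produces the constant $\eps$. Using $\|M_0^+\|=1/c$ and $s\le\min\{1,c/(2(b_1+b_2))\}$, the $r'$-block is at least $\tfrac c2\|r'\|^2$. The cross terms between $r'$ and $z$ are of size $O(s^2)$ times $b$'s. By AM--GM we split each such cross term so that one share is absorbed into $\tfrac c4\|r'\|^2$ and the remainder is at most $\eps s^2\|z\|^2$. The three pieces of $\eps$ come respectively from $b_1^2(b_1+b_2)/c^2$ (the shift interacting with $M_1$ and $M_2$), $b_1b_2/c$ (the interaction of $M_0^+X_1$ with $M_2$) and $b_2^2/c$ (the $r$--$z$ part of $M_2$).

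We then handle the kernel block. By (b), $s\,z^\top M_1z=s\hat\lam\|u\|^2$ exactly, with no $UW$ or $WW$ first-order contribution. The remaining kernel form is therefore at least
\[
(s\hat\lam-s^2(h+\eps))\|u\|^2-2s^2(\kappa+\eps)\|u\|\|w\|+s^2(\mu-\eps)\|w\|^2 .
\]
The conditions $s\le\hat\lam/(2(h+\eps))$ and $s\le\mu/(2\eps)$ bring the diagonal entries down to at least $s\hat\lam/2$ and $s^2\mu/2$. The $2\times2$ determinant condition $(s\hat\lam/2)(s^2\mu/2)\ge s^4(\kappa+\eps)^2$ is equivalent to $s\le\hat\lam\mu/(4(\kappa+\eps)^2)$. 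This is precisely the last term of $s^\ast$, and it yields $M(s)\succeq0$. Strict inequality in the interior follows from strictness of the bounds. If $U=0$, the $u$-terms vanish and only the $\mu$-condition remains.

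The main obstacle is the bookkeeping in the completion-of-squares step. It must be verified that every residual term is genuinely $O(s^2)\|z\|^2$ or can be absorbed into $c\|r'\|^2$, and that no term of order $s\|w\|^2$ survives. Any such surviving term would beat the $\mu s^2$ margin. It is excluded precisely because (b) forces the $WW$ and $UW$ blocks of $M_1$ to vanish exactly. I would first write out the explicit bound for each term and check it against the stated formula for $\eps$. If the constants do not match exactly, it suffices to prove the inequality with the stated $\eps$ as an upper bound, since enlarging $\eps$ only shrinks $s^\ast$.
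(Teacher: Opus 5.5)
Your route is a hand-rolled version of the paper's Schur-complement argument. The paper minimizes exactly over the range component, forms $S(s)=K(s)-X(s)^\top A(s)^{-1}X(s)$, and controls $A(s)^{-1}-M_0^{+}$ by a resolvent bound. You instead shift by $-sM_0^{+}X_1z$, the first-order approximation of the exact minimizer $-A(s)^{-1}X(s)z$, and absorb the rest by AM--GM. That substitution is fine.

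\textbf{The gap.} You get the order of the remainder wrong, and as written the decisive step fails. You bound the leftover by $\eps s^2\|z\|^2$ and arrive at a $W$-diagonal coefficient $s^2(\mu-\eps)$. The condition $s\le\mu/(2\eps)$ does nothing for such a term: $s^2(\mu-\eps)\ge s^2\mu/2$ would need $\eps\le\mu/2$, which is not a hypothesis of the lemma. Your ``main obstacle'' paragraph has the same blind spot. An $s\|w\|^2$ term is not the only fatal one; any order-$s^2$ contribution on $W$ beyond $s^2H_{WW}$ is equally fatal, because it competes at the same order as the margin $\mu s^2$. The content of the lemma (``no $o(s^2)$ correction'') is exactly that every order-$s^2$ kernel term is collected in $s^2H$ and everything else is $O(s^3)$.

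\textbf{The repair.} Your own expansion delivers this once the bookkeeping is done. Put $a=M_0^{+}X_1z$, so $\|a\|\le b_1\|z\|/c$, and $X_2=P_RM_2P_{\ker}$.
\begin{itemize}
\item The pure-kernel leftovers are $s^3a^\top M_1a$, $-2s^3a^\top X_2z$ and $s^4a^\top M_2a$.
\item After the shift, the range--kernel coupling is $2r'^\top v$ with $v=s^2(X_2z-P_RM_1a)-s^3P_RM_2a$. Absorbing it into $\tfrac c2\|r'\|^2$ costs only $2\|v\|^2/c=O(s^4)\|z\|^2$, not $O(s^2)$.
\end{itemize}
The kernel form is then at least
$(s\hat\lam-s^2h-s^3\eps)\|u\|^2-2s^2\kappa\|u\|\|w\|+(s^2\mu-s^3\eps)\|w\|^2$.
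Now $s\le1$, $s\le\hat\lam/(2(h+\eps))$, $s\le\mu/(2\eps)$ and the determinant condition close the proof exactly as you intended.

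\textbf{The constants.} Your fallback sentence is backwards. To obtain the stated $s^\ast$ you must show the total leftover is at most $s^3\eps\|z\|^2$ with the stated $\eps$; a larger constant proves the lemma only on a smaller interval. Here it does fit. Divide by $s^3$ and set $c=1$; the requirement reduces to
$2s(b_1^2+b_2+sb_1b_2)^2\le b_1^3+(2-s)b_1^2b_2+2b_1b_2+2b_2^2$.
This holds for $s\le\min\{1,1/(2(b_1+b_2))\}$, with equality at $s=1$, $b_1+b_2=\tfrac12$. It is a check you have to carry out, not one you can wave away.
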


\begin{proof}
For $s\le\min\{1,\tfrac c{2(b_1+b_2)}\}$:
$\|A(s)-A_0\|\le sb_1+s^2b_2\le s(b_1+b_2)\le c/2$, so the range block
$A(s):=P_RM(s)P_R\succeq\tfrac c2I\succ0$ and $\|A(s)^{-1}\|\le2/c$. By
the Schur criterion, $M(s)\succeq0$ iff the kernel Schur complement
$S(s):=K(s)-X(s)^\top A(s)^{-1}X(s)\succeq0$, where
$K(s)=s\hat\lam P_U+s^2G_2$ ($G_2=P_{\ker}M_2P_{\ker}$; the $s$-term off
$U$ vanishes \emph{exactly} by (b), the decisive consequence of the
rank-one form) and $X(s)=sX_1+s^2X_2$. Expanding around
$A_0^{-1}=M_0^{+}|_R$:
\[
X(s)^\top A(s)^{-1}X(s)=s^2X_1^\top M_0^{+}X_1+\mathrm{Rem}(s),
\]
\[
\|\mathrm{Rem}(s)\|\le s^2b_1^2\cdot\frac{2s(b_1+b_2)}{c^2}
+2s^3b_1b_2\cdot\frac2c+s^4b_2^2\cdot\frac2c\;\le\;s^3\eps\qquad(s\le1),
\]
using $\|A(s)^{-1}-A_0^{-1}\|\le\|A_0^{-1}\|\,\|A(s)-A_0\|\,\|A(s)^{-1}\|
\le2s(b_1+b_2)/c^2$. Hence $S(s)=s\hat\lam P_U+s^2H-\mathrm{Rem}(s)$ with
the block bounds
\[
\text{$U$-entry }\alpha\ge s\hat\lam-s^2h-s^3\eps\ge\tfrac{s\hat\lam}2
\quad\text{if }s\le\tfrac{\hat\lam}{2(h+\eps)};
\]
\[
\text{$W$-block }D\succeq s^2\mu I_W-s^3\eps I_W
\succeq\tfrac{s^2\mu}2I_W\quad\text{if }s\le\tfrac\mu{2\eps};
\qquad
\text{cross }\|r\|\le s^2\kappa+s^3\eps\le s^2(\kappa+\eps).
\]
For the $2\times2$ block form
$\bigl(\begin{smallmatrix}\alpha&r^\top\\ r&D\end{smallmatrix}\bigr)$
with $D\succ0$, PSD holds iff $\alpha\ge r^\top D^{-1}r$, for which
$\alpha\cdot\lambda_{\min}(D)\ge\|r\|^2$ suffices:
$\tfrac{s\hat\lam}2\cdot\tfrac{s^2\mu}2\ge s^4(\kappa+\eps)^2$ iff
$s\le\hat\lam\mu/(4(\kappa+\eps)^2)$. If $U=0$ then
$S(s)\succeq s^2(\mu-s\eps)I\succeq0$ directly.
\end{proof}

\begin{remark}[necessity of the strict margin]\label{rem:mu0}
If $\mu=0$ --- the $W$-margin merely PSD, as happens at \emph{exactly
optimal} jets by complementary slackness --- the $s^3$ remainder can go
negative on $W$ and a genuine third-order repair is required; this is
precisely the measured third-order infeasibility of the raw certified
ladder jets of~\cite{overshoot} (their blends are minute --- blend weight
$\eps_p=10^{-54}$ at $k=2$, $10^{-7}$ at $k=3,4$ --- so the repaired
margins $\mu$, and with them $s^\ast$, are minuscule, and violations are
visible at $s\sim0.05$).
Hypothesis $\mu>0$ is the sharp boundary: the framework caveat flagged
in~\cite{overshoot} was the $\mu=0$ case, not a hole in the strict case.
\end{remark}

\begin{theorem}[exact arc; witness $\Rightarrow$ $a_k>0$]\label{thm:arc}
Fix $k\ge2$ and a witness $w$ as in Theorem~\ref{thm:smalllambda}. Let
$(y_1,y_2)=(\lam\,y_1(w),\,t_0\lam^2y_2^\ast)$ with $t_0>\rho_k(w)$
rational and $\lam=\tfrac1{64t_0}$, and set
\[
y(s)\;=\;y_0(\delta^\ast)+s\,y_1+s^2\,y_2 .
\]
Then there is an explicit $s^\ast_k>0$ (Lemma~\ref{lem:norepair}'s
constants, computed per swap block) such that for all $s\in(0,s^\ast_k]$:
\begin{enumerate}
\item $y(s)_1=1$ and $\Gamma_k(y(s))\succeq0$: the arc is \emph{exactly}
  feasible for the level-$k$ SDP at deformation $s$;
\item $B_s\cdot y(s)=4-s+g_ks^2+(t_0\lam^2)s^3$ \emph{exactly} (a cubic,
  not an expansion), with
  \[
  g_k=\frac\lam8-4t_0\lam^2=\frac1{1024\,t_0}>0 .
  \]
\end{enumerate}
Hence $c_k(s)\ge4-s+g_ks^2$ on $(0,s^\ast_k]$ (the cubic term is
positive), so $a_k\ge g_k>0$.
\end{theorem}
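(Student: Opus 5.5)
The plan is to apply the no-repair positivity lemma (Lemma~\ref{lem:norepair}) separately to each swap block of $\Gamma_k(y(s))$, with $M_0$, $M_1$, $M_2$ the block matrices of $\Gamma_k(y_0(\delta^\ast))$, $\Gamma_k(y_1)$ and $\Gamma_k(y_2)$. Since $\Gamma_k$ is linear in $y$, we have exactly $\Gamma_k(y(s))=M_0+sM_1+s^2M_2$. The congruence by the swap basis is orthogonal and preserves PSD-ness, so it suffices to treat $M_s(y(s))$ and $M_a(y(s))$ separately. The normalization $y(s)_1=1$ is immediate: $y_0(\delta^\ast)_1=1$, and the pins give $(y_1)_1=(y_2)_1=0$ (Lemma~\ref{lem:pins}).

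\textbf{Verifying hypotheses (a)--(d).} In the symmetric block take $R=(\Span N_s)^\perp$ and $\ker=\Span N_s$. Split the kernel orthogonally into $U$, the line on which the rank-one first-order form lives, and $W=U^\perp\cap\Span N_s$.
\begin{itemize}
\item (a) holds by Theorem~\ref{thm:L2}, which gives $\ker M_{s0}=\Span N_s$ exactly; $c$ is the least positive eigenvalue.
\item (b) comes from Lemma~\ref{lem:bridge}(iii). The compressed form $N_s^\top M_s(y_1)N_s=\lam\tilde u\tilde u^\top$ is rank one. Transported to an orthonormal basis of $\Span N_s$, it becomes $\hat\lam P_U$ with $\hat\lam>0$, where $U$ is spanned by $N_s(N_s^\top N_s)^{-1}\tilde u$. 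I will check that this vector is exactly the direction $\hat u$ making $P_{\ker}M_1P_{\ker}$ vanish on $\hat u^\perp$. The vectors $N_sK_s$ span the $M_1$-null part, namely the lattice vectors $N_sc$ with $\tilde u^\top c=0$, so $W=\Span N_sK_s$, and its orthogonal complement in the lattice is $U$.
\item (d) requires the Schur margin on $W$ to be strictly positive. That margin is $(N_sK_s)^\top M_s(y_2)N_sK_s - F^\top M_{s0}^+F$ in the $N_sK_s$ coordinates, which equals $t_0\lam^2C_s-\lam^2Q_s=\lam^2(t_0C_s-Q_s)$. Choosing $t_0>\rho_k(w)$ strictly, in contrast to the $\ge$ used in Theorem~\ref{thm:smalllambda}, makes this positive definite. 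After conversion to orthonormal coordinates, where $\lambda_{\min}$ is scaled by the extreme singular values of $N_sK_s$, this yields $\mu>0$. This strictness is the decisive point, as explained in Remark~\ref{rem:mu0}.
\item In the antisymmetric block take $U=0$ and $W=\Span N_a$. Hypothesis (b) is $N_a^\top M_a(y_1)N_a=0$ from Lemma~\ref{lem:bridge}(iii), and (d) is $\lam^2(t_0C_a-Q_a)\succ0$ by the same argument.
\end{itemize}
Lemma~\ref{lem:norepair} then yields explicit per-block thresholds. Set $s^\ast_k$ to be their minimum.

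\textbf{The objective.} Because $B_s\cdot y$ is linear in both $y$ and $s$, the expansion of $B_s\cdot y(s)$ terminates as a cubic:
\begin{itemize}
\item order $s^0$: $B_0\cdot y_0=4$;
\item order $s^1$: $B_0\cdot y_1-m\cdot y_0$, which is $0-1=-1$ by Lemma~\ref{lem:identities} and Theorem~\ref{thm:L1}(4);
\item order $s^2$: $B_0\cdot y_2-m\cdot y_1$, which is $-4t_0\lam^2+\lam/8$ by Lemma~\ref{lem:pins};
\item order $s^3$: $-m\cdot y_2=-t_0\lam^2\,(y_2^\ast)_{\la A_0\ra}$.
\end{itemize}
In the $s^3$ term, $(y_2^\ast)_{\la A_0\ra}=0-m_{00}(\delta=0)=-1$, so the coefficient is $+t_0\lam^2$. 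At $\lam=1/(64t_0)$ the $s^2$ coefficient is $g_k=1/(1024t_0)$. Since $\lam/8$ and the $s^1$ value depend on the certificate scaling of Remark~\ref{rem:norm}, I will recheck the factors against that scaling.

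\textbf{Main obstacle.} The delicate step is (b) together with (d): making sure that the orthogonal decomposition $U\oplus W$ required by the lemma matches the non-orthogonal lattice coordinates $N_s$, $K_s$ used in the certificates. The cross block $H_{UW}$ need not vanish, but the lemma tolerates this through $\kappa$. To settle it, I will first write everything in terms of the subspaces and only then pick bases, showing that the choice of $K_s$ changes only the constants in $s^\ast_k$.
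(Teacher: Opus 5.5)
Your proposal is correct and follows the paper's proof essentially step for step. The paper also applies Lemma~\ref{lem:norepair} per swap block with $W=\range(N_sK_s)$ and $U$ its orthocomplement in $\Span N_s$, obtains (d) from the strict $t_0$-margin rescaled by the singular values of $N_sK_s$, uses the $U=0$ case in the antisymmetric block, and expands the objective term by term as an exact cubic.

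One minor correction: Theorem~\ref{thm:smalllambda} also assumes $t_0>\rho_k(w)$ strictly and merely uses only PSD-ness there. Here strictness is what delivers $\mu>0$, through $t_0C_s-Q_s\succeq(t_0-\rho_k)C_s\succ0$.
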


\begin{proof}
Swap-invariance of $y(s)$ gives $[\Gamma(y(s)),S]=0$, so the congruence
block-diagonalizes: $\Gamma\succeq0\iff M_s(s)\succeq0$ and
$M_a(s)\succeq0$. Apply Lemma~\ref{lem:norepair} to each block.
\emph{Symmetric block:} $R=\range M_{s0}$, $\ker M_{s0}=\Span N_s$
exactly (Theorem~\ref{thm:L2}); $W=\range(N_sK_s)$, $U$ its
orthocomplement in the kernel ($1$-dimensional: the kernel quadratic form
is $q_1(N_sv)=\lam(\tilde u^\top v)^2$ and $N_s$ is injective, so
$\hat\lam=\lam(\tilde u^\top v_U)^2>0$). Hypothesis (b) holds because
$N_s^\top M_s(y_1)N_s=\lam\tilde u\tilde u^\top$ makes the kernel
bilinear form $b_1(N_sv,N_sv')=\lam(\tilde u^\top v)(\tilde u^\top v')$,
which vanishes whenever either argument lies in $W$. Hypothesis (d) is
the strict $t_0$-margin: in $N_sK_s$-coordinates
\[
(N_sK_s)^\top M_s(y_2)N_sK_s-Q_s(y_1)=\lam^2\,(t_0C_s-Q_s)\succeq
\lam^2m\,I\qquad(m=\lambda_{\min}>0,\text{ certified exactly}),
\]
and the orthonormal-frame congruence $E_W=(N_sK_s)T$ rescales the margin
by $\sigma_{\min}(T)^2=1/\|N_sK_s\|^2>0$. \emph{Antisymmetric block:} the
$U=0$ case ($N_a^\top M_a(y_1)N_a=0$ on $V_k$), margin
$\lam^2(t_0C_a-Q_a)\succ0$. Feasibility (1) follows on $(0,s^\ast_k]$
with $s^\ast_k$ the minimum of the two blocks' $s^\ast$; the pins give
$y(s)_1=1+0+0$.

Objective (2): $B_s=B_0-s\,m\cdot$, and
\[
B_s\cdot y(s)=B_0\cdot y_0+s(B_0\cdot y_1-m\cdot y_0)
+s^2(B_0\cdot y_2-m\cdot y_1)+s^3(-m\cdot y_2)
=4-s+g_ks^2+t_0\lam^2s^3,
\]
using $B_0\cdot y_0=4$, $m\cdot y_0=1$, $B_0\cdot y_1=0$,
$m\cdot y_1=-\lam/8$, $B_0\cdot y_2=-4t_0\lam^2$, and
$m\cdot y_2=-t_0\lam^2$ (Lemma~\ref{lem:pins} together with
$(y_2^\ast)_{\la A_0\ra}=-1$). Feasibility gives
$c_k(s)\ge B_s\cdot y(s)$, and the cubic term is positive.
\end{proof}

Two machine checks accompany the lemma ($k=2,3,4$).
\texttt{verify\_framework\_lemma.py} re-verifies the structural hypotheses
of Lemma~\ref{lem:norepair} exactly (base kernels, rank-one alignment,
strict $t_0$-margins), verifies $M_s(s),M_a(s)$ exactly positive definite
at $s=10^{-3},10^{-4}$ by rational pivots with the objective cubic exact,
and reports the lemma constants and $s^\ast_k$ in floating point.
\texttt{verify\_sstar\_rational.py} then certifies the constants
themselves in exact rational arithmetic --- lower bounds on $c$ and $\mu$
by rational PD certificates against exact Gram matrices, $\hat\lam$ and
$h$ as exact rationals (the $U$-direction is rational), upper bounds on
$b_1,b_2,\kappa$ by exact PD and square comparisons --- and outputs
certified rational interval endpoints
$s^\ast_2\ge1.6\times10^{-3}$, $s^\ast_3\ge9.2\times10^{-3}$,
$s^\ast_4\ge7.5\times10^{-3}$ (the exact fractions are printed by the
verifier), each $\ge10^{-3}$, so both exact-PD sample points lie inside
the certified intervals. The analysis is
quantitatively tight, not just a bound: the measured minimal eigenvalue at
$k=2$, $s=10^{-3}$ is $+2.74\times10^{-10}$, which matches $s^2\mu$ of the
symmetric block; the lemma's guaranteed floor is the halved, min-over-blocks
value $s^2\mu/2\approx1.5\times10^{-11}$, comfortably below it.

\begin{corollary}[the certified ladder bounds transfer to $a_k$]
\label{cor:ladder}
Lemma~\ref{lem:norepair} applies verbatim to any strictly feasible point
of the tangent program (rank-one first-order form, exact base kernel,
strictly PD Schur blocks): its objective value lower-bounds $a_k$. The
certified ladder jets of~\cite{overshoot}
(\texttt{verify\_v\{2,3,4\}\_rational\_base.py}) are blends with strictly
positive-definite Schur complements (Lemma~\ref{lem:blend}), rank-one
first-order forms and the exact base kernel, so their certified values
transfer unconditionally:
\[
a_2>\tfrac1{39},\qquad a_3>\tfrac1{188},\qquad a_4>\tfrac1{641}.
\]
The framework caveat of~\cite{overshoot} is thereby discharged at
$k=2,3,4$ (the raw jets' $s^\ast$ is tiny because the blend weights are
$\eps_p=10^{-54}$ at $k=2$ and $10^{-7}$ at $k=3,4$, hence so are the
repaired Schur margins, which is why they \emph{look} third-order
infeasible at $s\sim0.05$; the lemma needs only some interval
$(0,s^\ast]$, and the liminf does the rest).
\end{corollary}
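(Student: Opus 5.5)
The plan is to reduce Corollary~\ref{cor:ladder} to Theorem~\ref{thm:arc}, with the witness-specific point $(\lam y_1(w),t_0\lam^2y_2^\ast)$ replaced by an arbitrary strictly feasible tangent point $(y_1,\lam,y_2)$. Set $y(s)=y_0(\delta^\ast)+sy_1+s^2y_2$.

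\emph{Feasibility of the arc.} Split $\Gamma_k(y(s))$ into its swap blocks and apply Lemma~\ref{lem:norepair} to each, checking hypotheses (a), (b) and (d) from the three properties the corollary lists.
\begin{enumerate}
\item Hypothesis (a) is the exact base kernel, Theorem~\ref{thm:L2}, with $R=\range M_{s0}$, and likewise for the antisymmetric block.
\item Hypothesis (b) comes from membership in $V_k$ together with Lemma~\ref{lem:bridge}(iii). The kernel form is $\lam(\tilde u^\top v)^2$ on the symmetric block and vanishes on the antisymmetric block. Take $W=\range(N_sK_s)$ and $U$ its one-dimensional complement in the kernel, so that $\hat\lam=\lam(\tilde u^\top v_U)^2>0$ when $\lam>0$.
\item Hypothesis (d) is the strict Schur margin. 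By Lemma~\ref{lem:range}, the range part of the block-PSD condition in~\eqref{eq:tangent} holds identically. The Schur complement $(N_sK_s)^\top M_s(y_2)N_sK_s-F^\top M_{s0}^+F$ is then exactly $H_{WW}$ in the $N_sK_s$ frame. Strict positive definiteness of it gives $\mu>0$ after the frame rescaling used in the proof of Theorem~\ref{thm:arc}; the antisymmetric block is the case $U=0$.
\end{enumerate}
Lemma~\ref{lem:norepair} then makes $y(s)$ exactly feasible on some $(0,s^\ast]$. The pins give $y(s)_1=1$.

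\emph{Objective.} Expand as in Theorem~\ref{thm:arc}(2). Theorem~\ref{thm:L1}(4) gives $B_0\cdot y_0=4$ and $m\cdot y_0=1$, and Lemma~\ref{lem:identities} gives $B_0\cdot y_1=0$. So $B_s\cdot y(s)=4-s+v\,s^2-(m\cdot y_2)s^3$, where $v=B_0\cdot y_2-m\cdot y_1$ is the tangent objective. Hence $c_k(s)\ge4-s+vs^2-O(s^3)$. Dividing by $s^2$ and letting $s\to0^+$ gives $a_k\ge v$. No sign on the cubic term is needed, since only the liminf is claimed.

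\emph{Application to the ladder jets.} For $k=2,3,4$ the certified jets of~\cite{overshoot} are blends in the sense of Lemma~\ref{lem:blend}, so their Schur complements are strictly positive definite. Their rank-one first-order forms and the exact base $\delta^\ast$ truncations (by the Remark after Theorem~\ref{thm:L2}) are recorded in the verification scripts. Their objective values, evaluated at the blended jets, exceed $\tfrac1{39}$, $\tfrac1{188}$ and $\tfrac1{641}$ respectively. The strict inequalities for $a_k$ follow because each certified value is a rational strictly above the stated fraction.

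\emph{Main obstacle.} I expect the delicate step to be confirming that the certificate jets satisfy hypothesis (b) exactly: the first-order form must be rank one aligned with $\tilde u$, and the antisymmetric compression must vanish, in the concrete certificate bases rather than only up to blending. The blend rescales $\lam$ and $\tau$ homogeneously, so alignment is preserved; the remaining work is the exact check that the raw jets lie in $V_k$. Once that holds, the tiny blend weights only shrink $s^\ast$ and do not affect the limit.
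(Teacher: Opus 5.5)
Your proposal is correct and follows the paper's own argument. You rerun the proof of Theorem~\ref{thm:arc} for a general strictly feasible tangent point: the exact base kernel gives hypothesis (a); $V_k$-membership with $\lam>0$ and Lemma~\ref{lem:bridge} give the rank-one hypothesis (b); and the strictly positive definite blended Schur complement, identified with $H_{WW}$ in the $N_sK_s$ frame, gives (d). You then read off $a_k\ge v$ from the exact cubic objective via the liminf, with no sign condition needed on the cubic term. This is exactly how the paper transfers the certified ladder values.
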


\section{Proof of the Main Theorem}\label{sec:assembly}

\begin{proof}[Proof of Theorem~\ref{thm:main}]
\emph{Lower interval bound, $k\ge2$.} By Theorem~\ref{thm:witness} the
witness $w=y^\ast/4$ exists at every level (pinned, certificate-scaled,
antisymmetric compression zero). By Theorem~\ref{thm:L4}, $C_s,C_a\succ0$
for all $k$, so $\rho_k(w)<\infty$ and a rational $t_0=t_0(k)>\rho_k(w)$
exists. Theorem~\ref{thm:arc} then gives explicit $s^\ast_k>0$ and
$g_k=1/(1024\,t_0(k))>0$ with $c_k(s)\ge4-s+g_ks^2$ on $(0,s^\ast_k]$,
hence $a_k\ge g_k>0$.

\emph{Non-exactness.} By~\eqref{eq:cQ} there is $s_1>0$ with
$c_Q(s)\le4-s+\tfrac{s^3}3$ on $(0,s_1]$. For
$\eps_k:=\min\{s^\ast_k,\,s_1,\,2g_k\}$ and $s\in(0,\eps_k]$,
\[
c_k(s)-c_Q(s)\;\ge\;g_ks^2-\tfrac{s^3}3\;\ge\;g_ks^2\bigl(1-\tfrac23\bigr)
\;>\;0 .
\]
For the remaining finite levels: the level-$1$ and almost-quantum
relaxations are outer to level $2$
($\mathcal W_1\subset\mathcal W_{1+AB}\subset\mathcal W_2$, the latter
inclusion of monomial sets verified explicitly in the independent
re-verification), so
$c_1(s)\ge c_{1+AB}(s)\ge c_2(s)>c_Q(s)$ on $(0,\eps_2]$. Hence
\emph{every} finite level fails to be exact on an explicit interval
$(0,\eps_k]$.

\emph{The exponent.} The lower bound $c_k(s)-(4-s)\ge g_ks^2$ on
$(0,s^\ast_k]$ is unconditional. For the matching upper bound we invoke the
almost-quantum expansion $c_{1+AB}(s)=4-s+\tfrac3{64}s^2+o(s^2)$
of~\cite{overshoot}; its $\le$ half rests on the dual second-order arc
constructed there, whose tangent conditions sit at the boundary of the PSD
cone (the $\mu=0$ regime of Remark~\ref{rem:mu0}), so we state the upper
half as conditional on that expansion. Monotonicity of the hierarchy then
gives
\[
g_ks^2\;\le\;c_k(s)-(4-s)\;\le\;c_2(s)-(4-s)\;\le\;
c_{1+AB}(s)-(4-s)\;=\;\tfrac3{64}s^2+o(s^2),
\]
the first inequality on $(0,s^\ast_k]$; so, granted the companion's
expansion, the level-$k$ overshoot is $\Theta(s^2)$ with
$g_k\le a_k\le\tfrac3{64}$. (None of the non-exactness statements above
depend on this step.)

\emph{Quantitative instances.} Corollary~\ref{cor:ladder} gives the
unconditional $a_2>\tfrac1{39}$, $a_3>\tfrac1{188}$, $a_4>\tfrac1{641}$.
\end{proof}

\begin{remark}[explicit constants]\label{rem:constants}
The construction is fully effective. With the certified per-level
witnesses of the ladder, $t_0(k)=0.316,\,29.7,\,302$ at $k=2,3,4$, giving
$g_2\ge\tfrac{15625}{5053008}\approx3.09\times10^{-3}$,
$g_3\approx3.29\times10^{-5}$, $g_4\approx3.23\times10^{-6}$; with the
universal witness $y^\ast/4$ itself the independent re-verification gave better constants at
higher levels, $t_0=0.348,\,8.59,\,69.4$. The interval endpoints at $k=2,3,4$ are
certified \emph{rational}: $s^\ast_2\ge1.6\times10^{-3}$,
$s^\ast_3\ge9.2\times10^{-3}$, $s^\ast_4\ge7.5\times10^{-3}$
(\texttt{verify\_sstar\_rational.py}, exact arithmetic throughout; the
floating-point evaluation of the lemma formula gives
$1.8\times10^{-3},\,1.0\times10^{-2},\,8.4\times10^{-3}$). At a general
level $k$, $s^\ast_k$ may be taken to be any positive rational below the
lemma's minimum; since all constants are eigenvalue/norm data of explicit
rational matrices, the same rational-PD certification procedure applies
verbatim at every $k$. The witness-dependent growth of $t_0(k)$ means the
theorem provides \emph{positivity uniform in kind, not in size}: no decay
law for $a_k$ is claimed (Section~\ref{sec:discussion}).
\end{remark}

\begin{remark}[reading of $a_k$]
$a_k$ is defined as a liminf; Theorem~\ref{thm:arc} in fact proves the
stronger \emph{interval} statement $c_k(s)\ge4-s+g_ks^2$ pointwise on
$(0,s^\ast_k]$, so no expansion or analyticity assumption on $c_k$ enters
the non-exactness conclusion. Whenever $c_k$ admits a second-order
expansion at $0^+$, its coefficient is $\ge g_k$.
\end{remark}

\section{The mechanism: the overshoot is non-quantum}\label{sec:mechanism}

Why did the witness have to be the signed formula~\eqref{eq:ystar}? The
following two no-gos locate the level-$k$ overshoot strictly outside
quantum theory's tangent directions; together they are the conceptual
heart of the result.

\begin{proposition}[GNS no-go: no state witness]\label{prop:gns}
For $k\ge3$, no witness $y_1$ with $P(y_1)=u_ku_k^\top$ can be realized as
the moment functional $\la\psi_1|\cdot|\psi_1\ra$ of a vector in any
Hilbert-space representation of the scenario algebra (commuting
involutions $A_i,B_j$).
\end{proposition}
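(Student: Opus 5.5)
The plan is to exploit the fact that a Hilbert-space realization turns the compressed identity $P(y_1)=u_ku_k^\top$ from a statement about a signed class function into a statement about \emph{vectors}. The rank-one Gram condition then forces these vectors to be rigidly aligned, and I will try to show that this alignment is incompatible with the operator relations once words of length $3$ are available.

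\emph{Step 1 (Gram reformulation).} Suppose $y_1(X)=\la\psi_1|X|\psi_1\ra$ for commuting involutions $A_i,B_j$ on some $H$. Then $\Gamma(y_1)_{pq}=\la p\psi_1,q\psi_1\ra$. To each lattice column $K$ of $N_k$ I attach the vector $v_K:=\sum_p K_p\,p\,\psi_1$, so that $P(y_1)$ is the Gram matrix of $\{v_K\}$. A Gram matrix equals $u_ku_k^\top$ iff there is one unit vector $\xi$ with
\[
v_K=u_k[K]\,\xi\quad\text{for every lattice column }K.
\]
In particular $v_K=0$ whenever $u_k[K]=0$.

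\emph{Step 2 (translate into eigen-relations for $\xi$).} Using Lemma~\ref{lem:tensor}, each $T$-column is an operator identity of the form $a'(A_0-1)\psi_1=s(a)s(b)\,\xi$ (times the $B$-word), and each $D$-column is $W(1-A_1)(1-B_1)\psi_1=2(-1)^{|W|}\xi$.
\begin{enumerate}
\item The short columns give $(A_0-1)\psi_1=\xi$, $(B_0-1)\psi_1=\xi$, $(A_0B_0-1)\psi_1=0$, and, from the length-$2$ trivial reduction $T_{(A_1A_0,\emptyset)}$, $A_1\xi=-\xi$; symmetrically $B_1\xi=-\xi$.
\item At $k\ge3$ the length-$3$ word $A_0A_1A_0$ enters the basis, and $T_{(A_0A_1A_0,\emptyset)}$ yields $A_0A_1\xi=\xi$, hence $A_0\xi=-\xi$.
\item So $\xi$ lies in the joint $(-1)$-eigenspace of all four generators, i.e.\ in the one-dimensional sign representation on each side.
\end{enumerate}
This is exactly where $k=2$ escapes: at $k=2$ the relation $A_0\xi=-\xi$ is not imposed, and the no-go statement itself is what reserves $k\ge3$.

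\emph{Step 3 (contradiction).} I will combine these eigen-relations with the pins and the scalar identities of Lemma~\ref{lem:identities}. These give $\la\psi_1|A_0|\psi_1\ra=-\tfrac12$, $(y_1)_1=\|\psi_1\|^2=0$, and vanishing on the face classes $\la\psi_1|\beta_j|\psi_1\ra$. The most direct route is that the pin $(y_1)_1=0$ forces $\psi_1=0$, hence $\Gamma(y_1)=0\neq u_ku_k^\top$. I will first check whether the proposition is meant with the identity pin in force; if so the proof is immediate, and the role of $k\ge3$ is only to make the statement robust.

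If the pin is not assumed, I will use the sign-representation property of $\xi$ directly. Decompose $\psi_1=\psi_\perp+c\,\xi$ along the projection onto the joint sign isotypic component, which commutes with every word. Then $(A_0-1)\psi_1=\xi$ forces the component relation $-2c\,\xi=\xi$, so $c=-\tfrac12$, together with $(A_0-1)\psi_\perp=0$. Dressing columns with $|W|\ge1$ are then evaluated on $\psi_\perp$, on which $A_0,B_0$ act trivially. The aim is to show that the length-$3$ dressings (present only at $k\ge3$) demand $\psi_\perp$-contributions with alternating signs $2(-1)^{|W|}$ that cannot all be produced. Concretely, I would compare $D_{(A_0,\emptyset)}$ with $D_{(A_0A_1A_0,\emptyset)}$ (or the first length-$3$ pair allowed at level $k$) and derive two incompatible values of $\la\xi, A_1(1-B_1)\psi_\perp\ra$.

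The main obstacle is this last bookkeeping step. I expect the first few relations to be mutually consistent (the $k=2$ data are realizable in the paper's sense), so the contradiction must be located precisely in a length-$3$ column. I would identify which column by running the Step-2 translation over all columns of $N_3\setminus N_2$.
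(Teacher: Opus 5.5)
Your Steps 1--2 agree with the paper: the rank-one Gram forces $v_K=u_k[K]\,\xi$, and the trivial reductions put $\xi$ in the joint $(-1)$-eigenspace. The pinned shortcut is also logically valid if ``witness'' includes $(y_1)_1=0$. Under that reading, however, the proposition says nothing about $u_k$, and the paper proves the stronger statement using only $P(y_1)=u_ku_k^\top$. For that version your Step~3 has a genuine gap, and the route you sketch cannot close it.

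Your translation of the dressing columns is wrong. The defect vector of $D_W$ is $(w_a'-w_aA_1)(w_b'-w_bB_1)\psi_1$. For nonempty parts this equals $w_a(A_0-A_1)\,w_b(B_0-B_1)\psi_1$, which agrees with $WR_1\psi_1$ only modulo $A_0\psi=B_0\psi=\psi$. That is exactly what fails here, since $(A_0-1)\psi_1=\xi$. Used literally, your formula would give a spurious contradiction $0=WR_1\psi_1=2(-1)^{|W|}\xi$.

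With the correct vectors, the relations $(A_0-1)\psi_1=(B_0-1)\psi_1=\xi$, $R_1\psi_1=0$ and $A_i\xi=B_j\xi=-\xi$ give $(A_0-A_1)(B_0-B_1)\psi_1=2\xi$. So every dressed defect equals $2(-1)^{|W|}\xi$ automatically. The alternating signs are \emph{produced} by the eigen-relations, not obstructed. In fact every column relation $v_K=u_k[K]\xi$ of $N_k$ follows from those few relations, so scanning $N_3\setminus N_2$ finds nothing. Your belief that $k=2$ escapes is also mistaken: $T_{(A_0,B_0)}$, which you list, gives $(A_0B_0-1)\psi_1=(A_0+1)\xi=0$ already at level~$2$.

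The missing idea is that the contradiction is not linear. It comes from self-adjointness, through one pairing against $\psi_1$ that uses $D_\emptyset$. You already have $R_1\psi_1=0$, since $u[D_\emptyset]=0$. As $R_1\xi=4\xi$,
\[
4\langle\xi,\psi_1\rangle=\langle R_1\xi,\psi_1\rangle=\langle\xi,R_1\psi_1\rangle=0,\qquad 1=\|\xi\|^2=\langle\xi,(A_0-1)\psi_1\rangle=\langle(A_0-1)\xi,\psi_1\rangle=-2\langle\xi,\psi_1\rangle,
\]
which is a contradiction; this is the paper's proof.

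In your own decomposition it is one line. The isotypic projection $E$ commutes with $R_1$. Applying $E$ to $(A_0-1)\psi_1=\xi$ gives $E\psi_1=-\tfrac12\xi$. This is also what justifies writing $E\psi_1$ as a multiple of $\xi$ at all, since the isotypic component may have multiplicity. Then $0=ER_1\psi_1=R_1E\psi_1=-2\xi$.
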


\begin{proof}
If $y_1$ consists of moments of $\psi_1$, then $P(y_1)$ is the Gram matrix
of the kernel-defect vectors $K_a\psi_1$, and rank-one positivity in a
definite space forces $K_a\psi_1=u_a v$ with $|v|=1$. The
$u_k$-amplitudes then force: (1) $(A_0-1)\psi_1=(B_0-1)\psi_1=v$
(from the trailing-$0$ reductions with $|u|=1$); (2) $A_iv=B_jv=-v$ (from
the length-$3$ reductions, present once $k\ge3$); (3) $R_1\psi_1=0$
($u[D_\emptyset]=0$). Self-adjointness gives
$\la v,R_1\psi_1\ra=\la R_1v,\psi_1\ra=4\la v,\psi_1\ra=0$ and
$\la v,(A_0-1)\psi_1\ra=\la(A_0-1)v,\psi_1\ra=-2\la v,\psi_1\ra=0$, i.e.\
$|v|^2=\la v,(A_0-1)\psi_1\ra=0$: contradiction.
\end{proof}

\begin{remark}[indefinite (Krein) representations]\label{rem:krein}
The proposition is stated for Hilbert-space states only, and deliberately
so. In a Krein space the argument changes at its first step: an
indefinite rank-one Gram $[K_a\psi_1,K_b\psi_1]=u_au_b$ does \emph{not}
force the defect vectors to be collinear --- each $K_a\psi_1=u_av+n_a$
may be perturbed by mutually orthogonal \emph{neutral} vectors $n_a$
($[n_a,n_b]=0$, $[v,n_a]=0$) without changing the Gram --- so the chain
(1)--(3) need not start. What the same computation does show is
conditional: \emph{if} the defects are collinear, $K_a\psi_1=u_av$, then
self-adjointness forces $[v,v]=0$, i.e.\ the gain-carrying direction must
be neutral, which is impossible in a definite space but not in an
indefinite one. We therefore make no claim about general Krein-space
realizations. (Nothing in the main theorem depends on this section; the
propositions serve to locate the witness outside quantum tangent
directions.)
\end{remark}

\begin{proposition}[curve no-go: no smooth quantum family]\label{prop:curve}
For any smooth curve of genuine commuting-party quantum models
$(A(t),B(t),\psi(t))$ through the base realization, the compressed
first-order form cannot equal $\lam u_ku_k^\top$ with $\lam>0$ at any
order in $t$, for any $k\ge3$.
\end{proposition}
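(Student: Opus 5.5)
The plan is to reduce Proposition~\ref{prop:curve} to the GNS no-go of Proposition~\ref{prop:gns}, applied to the lowest-order nonvanishing jet of the defect vectors along the curve. Fix a smooth curve $(A(t),B(t),\psi(t))$ of commuting-party involutions and a unit vector. At $t=0$ it realizes the base face point $y_0(\delta^\ast)$, so by Theorem~\ref{thm:L2} every relation vector annihilates the base: $K_a(0)\psi(0)=0$ for each kernel column $K_a$ of $N_k$ (trailing-$0$ reductions and dressings). Set $\phi_a(t):=K_a(t)\psi(t)$, where $K_a(t)$ is the operator polynomial of $K_a$ built from $A(t),B(t)$. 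Since $\Gamma(y(t))_{pq}=\la w_p(t)\psi(t),w_q(t)\psi(t)\ra$, the compressed form along the curve is exactly the Gram matrix
\[
N_k^\top\Gamma(y(t))N_k=\bigl(\la\phi_a(t),\phi_b(t)\ra\bigr)_{a,b}.
\]
Because every $\phi_a(0)=0$, its Taylor expansion starts at order $t^2$, and the coefficient there is the Gram matrix of the first derivatives $\phi_a'(0)$. More generally, if $r$ is the least order at which some $\phi_a^{(r)}(0)\ne0$, the leading term is $t^{2r}$ with coefficient $\bigl(\la\phi_a^{(r)},\phi_b^{(r)}\ra\bigr)/(r!)^2$, again a genuine Hilbert-space Gram matrix. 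Every lower order vanishes, and there is no odd-order contribution at the leading level. So ``the compressed form equals $\lam u_ku_k^\top$ at some order'' means that this leading Gram matrix is $\lam u_ku_k^\top$.

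The second step derives the same forcing relations as in Proposition~\ref{prop:gns}, now for the leading jets. Rank-one positivity of a definite Gram matrix forces $\chi_a:=\phi_a^{(r)}(0)/r!=\sqrt{\lam}\,u_a v$ for a single unit vector $v$. Differentiate $\phi_a(t)=K_a(t)\psi(t)$ $r$ times by Leibniz. Any term in which derivatives fall on the operators pairs lower-order data with $\psi(0)$, and that data must be handled. The cleanest route is to exploit the identities $A_0(t)^2=1$ and $[A_i,B_j]=0$ together with the specific structure of the relation vectors. Trailing-$0$ reductions have the form $W(t)(A_0(t)-1)\psi(t)$, so their leading jet is $W(0)$ applied to the leading jet $\xi$ of $(A_0(t)-1)\psi(t)$, and $W(0)$ is unitary. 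The $|u|=1$ reductions $T_{A_0}$, $T_{B_0}$ then give $\xi_A=\xi_B=\sqrt{\lam}\,v$. The length-$3$ reductions, e.g.\ $T_{(A_1A_0,\emptyset)}$ with $u=-1$, available once $k\ge3$, give $A_1(0)\xi_A=-\sqrt{\lam}\,v$, i.e.\ $A_i(0)v=-v$; the party-symmetric analogues give $B_j(0)v=-v$. Finally, $u[D_\emptyset]=0$ gives that the leading jet of $R_1\psi$ vanishes at order $r$. With these in hand, self-adjointness at $t=0$ contradicts them as before. In the expansion of $(A_0(t)-1)\psi(t)$ at order $r$, each term has the form $A_0^{(j)}(0)\psi^{(r-j)}(0)$ plus $(A_0(0)-1)\psi^{(r)}(0)$, and pairing with $v$ moves $(A_0(0)-1)$ onto $v$, producing $-2\la v,\psi^{(r)}(0)\ra$. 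The $R_1$ relation kills this, paralleling $\la v,R_1\psi_1\ra=4\la v,\psi_1\ra$, which forces $\lam|v|^2=0$.

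The main obstacle is controlling the cross terms in which derivatives land on the operators rather than on $\psi$. The leading jet of $W(t)(A_0(t)-1)\psi(t)$ picks up contributions such as $W'(0)\,(A_0(0)-1)\psi^{(r-1)}$, and these vanish only if lower jets are controlled. My first attempt is an induction on order showing that all jets of $(A_0(t)-1)\psi(t)$, $(B_0(t)-1)\psi(t)$ and $R_1(t)\psi(t)$ below order $r$ vanish; this follows from the $|u|=1$ rows at lower orders. I will also use $A_0(t)^2=1$: differentiating gives $A_0A_0'+A_0'A_0=0$, so $A_0'(0)$ maps the $+1$-eigenspace of $A_0(0)$ to the $-1$-eigenspace. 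This should make the $A_0'(0)\psi(0)$-type terms orthogonal to what they are paired with, using $A_0(0)\psi(0)=\psi(0)$. If the induction becomes messy, the fallback is a change of frame. Conjugate by a smooth unitary $U(t)$ that fixes $A_0(t)=A_0$, and if possible $B_0$ as well; the eigenprojection of a smooth involution can be transported by a Kato-type unitary. The trailing-$0$ relations then involve only $\psi(t)$, and the dressing relations reduce to the $r$-th jet identity of Proposition~\ref{prop:gns}, with $\psi_1:=\psi^{(r)}(0)/r!$ playing the role of the state.
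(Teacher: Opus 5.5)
There is a genuine gap, and it is exactly the step you flag as the main obstacle. The trailing-$0$ part is actually harmless. $(A_0(t)-1)\psi(t)$ is itself a defect vector, so $W(t)(A_0(t)-1)\psi(t)$ has leading jet $W(0)\xi$ with no induction needed, and this correctly gives $A_i(0)v=B_j(0)v=-v$. The cross terms that matter sit inside $\xi$ and inside the jet of $R_1(t)\psi(t)$, and neither proposed fix controls them.

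\emph{The orthogonality claim is false.} $A_0'(0)\psi(0)$ lies in the $-1$-eigenspace of $A_0(0)$, which is also where $v$ lies. For constant $\psi$ it is in fact the entire jet of $(A_0(t)-1)\psi$.

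\emph{Kato transport is not enough.} Freezing $A_0,B_0$ this way is legitimate, but $A_1(t),B_1(t)$ still move. Write $X^{[j]}$ for the $j$-th Taylor coefficient at $0$. The $D_\emptyset$ row at order $r$ reads $R_1(0)\psi^{[r]}+\sum_{j\ge1}R_1^{[j]}\psi^{[r-j]}=0$. Pair it with $v$ and use $\la v,(A_0-1)\psi^{[r]}\ra=\sqrt{\lam}$. The contradiction of Proposition~\ref{prop:gns} then becomes only the requirement $\la v,\sum_{j\ge1}R_1^{[j]}\psi^{[r-j]}\ra=2\sqrt{\lam}$, which nothing in your argument excludes. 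So $\psi^{[r]}$ does not play the role of $\psi_1$.

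The gap cannot be closed along these lines, because actual curves meet this requirement. On $\mathbb C^2\otimes\mathbb C^2$ take:
\begin{itemize}
\item $A_0=\sigma_z\otimes1$ and $B_0=1\otimes\sigma_z$;
\item $A_1(t)=c_t\otimes1$ and $B_1(t)=1\otimes c_t$, where $c_t=\cos 2t\,\sigma_z-\sin 2t\,\sigma_x$;
\item $\psi(t)=(|00\ra-t^2|11\ra)/\sqrt{1+t^4}$.
\end{itemize}
At $t=0$ this is the face point with all moments equal to $1$. Every defect vector is a word-prefactored combination of $(A_0-1)\psi=2t^2|11\ra+O(t^6)$, $(B_0-1)\psi$ and $R_1\psi$. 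Moreover $R_1(t)\psi(t)=O(t^3)$, because the mixed term $\dot{A}_1\dot{B}_1\psi(0)=4|11\ra$ cancels $R_1(0)\psi^{[2]}=-4|11\ra$. All four operators act as $-1$ on $|11\ra$ at $t=0$, and $u_k$ pairs to zero with the unstripped corners $W(1-A_1)(1-B_1)$. Hence every column $K_a$ of $N_k$ has leading jet $2u_a|11\ra$, and $N_k^\top\Gamma(y(t))N_k=4t^4u_ku_k^\top+O(t^5)$ at every $k$.

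About the base, your argument uses only that it is a face point, so it would wrongly exclude this curve. Adjoin this block with amplitude $\sqrt{c}\,t$, orthogonally to a purification of the base realization. This gives a curve through the base state itself with leading form $4c\,t^6u_ku_k^\top$.

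The statement can therefore hold only in the rigid reading, where the Hilbert space and the $t=0$ operators are exactly those of the defect realization of Theorem~\ref{thm:L2}. There it holds for a different reason: the relations $A_i(0)v=B_j(0)v=-v$ you derived have no nonzero solution, since $\mu$ has no atom at $\varphi=0$.

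The paper's own argument, equal norms of dressed defects via the unitary prefactors $A_{w_a}B_{w_b}$, does not avoid this. That identity holds for the unstripped corners, on which $u_k$ pairs to zero. It fails for the stripped $D_W$ of Section~\ref{sec:lattice}: in the example, $\chi_{D_{((0),\emptyset)}}=-4t^2|11\ra+O(t^3)$ while $\chi_{D_\emptyset}=O(t^3)$.
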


\begin{proof}
The dressed defect vectors obey
$\chi_{D_W}(t)=A_{w_a}(t)B_{w_b}(t)\,\chi_{D_\emptyset}(t)$ with unitary
prefactors, so $\|\chi_{D_W}(t)\|=\|\chi_{D_\emptyset}(t)\|$ for all $t$:
the compressed form has \emph{equal} diagonal at all dressings at every
order in $t$. But the gain direction requires $u[D_W]^2=4$ against
$u[D_\emptyset]^2=0$.
\end{proof}

Thus the direction that produces the quadratic overshoot is invisible not
only to states (Proposition~\ref{prop:gns}) but to all smooth families of
quantum models (Proposition~\ref{prop:curve}): \emph{the overshoot lives
strictly in the NPA-feasible, non-quantum part of the tangent cone at the
critical face}. Finite NPA levels fail near the critical line because
they admit a signed tangent direction that quantum mechanics forbids ---
and one level-independent signed quadratic ($y^\ast$) realizes it at
every level simultaneously, in the coherent-tower form predicted by the
germ stabilization law (Remark~\ref{rem:germ}).

\section{Discussion}\label{sec:discussion}

\subsection{What remains open}

\emph{The decay law.} The theorem gives $a_k\ge1/(1024\,t_0(k))>0$ with
$t_0(k)$ witness-dependent and growing ($0.35,\,8.6,\,69$ along
$y^\ast$); the measured coefficients $a_2\approx0.0281$,
$a_3\approx0.0096$ \cite{overshoot} decay with irregular successive
ratios ($\approx0.60$, then $\approx0.34$). Nothing here controls the decay; a uniform-in-$k$ lower
bound on $\min_w\rho_k(w)^{-1}$, or a level-to-level inequality
$a_{k+1}\ge r'a_k$, would sharpen positivity into a rate.

\emph{The dimension law (R2).} The first-order solution space $V_k$ has
$\dim V_k=2k+3$ verified at $k\le5$, with a constructive lower bound for
all $k$; the matching upper bound is open. It is \emph{not used}
anywhere in the proof --- the certificates never invoke it --- but it is
the one remaining structural conjecture of this framework.

\emph{The fixed-$s$ question.} Our theorem is local: for each $k$,
non-exactness on $(0,\eps_k]$ with $\eps_k\downarrow0$ allowed. Since
each level \emph{is} exact for $s$ above its threshold, the pointwise
required level $D(s)$ is finite for every fixed $s>0$; what we prove is
$D(s)\to\infty$ as $s\to0^+$. The stronger claim that some fixed $s$
admits no exact level is false in spirit here and we do not make it.

\subsection{Placement}

The result sharpens the picture drawn in~\cite{overshoot} and closes its
conjecture. In the degree-bounded noncommutative Fej\'er--Riesz theorem
of Klep--Levenson--McCullough~\cite{KLM}, strict positivity at margin
$\eps$ yields a factorization at some finite one-party degree
$M'(\eps)$; our family has margin $\sim s^3$ and the theorem proves the
complementary necessity statement --- no bounded level (hence no bounded
word length) certifies uniformly as $s\to0$. In the finite-convergence
theory of polynomial optimization, Nie's theorem~\cite{Nie} (through
Marshall's boundary Hessian condition~\cite{Marshall}) makes finite
exactness \emph{generic}; the doubly-tilted critical point is exactly a
failure point of the boundary Hessian condition (cubic touch, contact
order three), and our theorem shows that there the generic conclusion
genuinely fails --- the first unconditional instance of this failure mode
for the NPA hierarchy at an attained, self-tested optimum, obtained by
construction rather than by hardness reduction~\cite{hardness}. The
commutative shadow of the problem cannot see this: on the compact torus
the degenerating family is SOS at base degree
(\cite{overshoot}, following~\cite{Dritschel,Scheiderer}), so the
divergence is carried by the noncommutative moment structure --- in the
present language, by the signed non-quantum tangent direction of
Section~\ref{sec:mechanism}. The scaling limit of the critical
degeneracy is a rescaled Motzkin form~\cite{overshoot}, but the
finite-level failure proved here is not a scalar-polynomial degree
phenomenon; it is a property of the restricted certificate class of the
hierarchy, localized now in an explicit direction. Finally, the
single-tilt functional, whose touch is quadratic, is finitely exact at
level $1{+}AB$ with the Bamps--Pironio certificate~\cite{BampsPironio}
--- the dichotomy between quadratic and cubic contact is complete: contact
order two is finitely certifiable, and at the contact-order-three point
no finite level suffices.

\appendix

\section{Verification programs}\label{app:verif}

The complete verification suite, together with both papers, is available at
\url{https://github.com/tohafrit/npa-nonexactness}; \texttt{run\_all\_checks.sh}
reproduces every machine check below in about forty seconds. Every claim
entering the proofs is either a finite hand-checkable
argument or an exact machine computation (integer or rational arithmetic
throughout; no floating point enters any proof). The finite computations
are constituent parts of the proofs, with the degree/coverage lemmas
(\ref{lem:regime}, \ref{lem:grid}) making them conclusive. The ancillary
files and what each one establishes:

\begin{center}
\small
\begin{tabular}{@{}>{\raggedright\arraybackslash}p{4.4cm}>{\raggedright\arraybackslash}p{7.8cm}l@{}}
\toprule
file & checks & runtime\\
\midrule
\texttt{verify\_pipeline\_laws.py} & counting laws; Lemma~\ref{lem:dual}
($E\succ0$, affine identity); Lemma~\ref{lem:closure} (rank $15/17$);
product-law identity $\Gamma(y_0(\delta))N_k=0$; affine dimension $k$;
Theorem~\ref{thm:L2} (exact PD of $G_k$, $k\le8$; $(k{+}1,k)$ splits);
$u_k$ facts and Lemma~\ref{lem:identities} on full solution spaces
($k\le5$); Theorem~\ref{thm:L4} ($K^\top\Gamma(y_2^\ast)K=K^\top K$,
$k\le5$) & $\sim$2\,s\\
\texttt{verify\_r1\_induction.py} & Theorem~\ref{thm:witness}: per-level
$P(y^\ast)=u_ku_k^\top$ at $k=2..8$; sharp pins; the regime-grid engine,
$E(K_1,K_2)=0$ for all $1{,}413{,}721$ pairs at side length $\le20$
(margin: $2{,}883{,}601$ at $\le24$); exact integers & $\sim$4\,s\\
\texttt{verify\_bridge\_lemma.py} & Lemma~\ref{lem:bridge}: swap
permutation of the $T/D$ basis, $\Pi u_k=u_k$, vanishing antisymmetric
and cross compressions, rank-one symmetric compression
$\tilde u\tilde u^\top$; exact integers, $k=2..5$, independent of the
certificate bases & $<$1\,s\\
\texttt{verify\_small\_lambda.py} & Theorem~\ref{thm:smalllambda}:
range identity, pins, $C_s,C_a\succ0$, $\rho_k$, the feasible point and
its value, $k=2,3,4$, exact & $<$1\,s\\
\texttt{verify\_framework\_}\newline\texttt{lemma.py} & Lemma~\ref{lem:norepair} and
Theorem~\ref{thm:arc}: structural hypotheses re-verified exactly; exact
rational PD of $M_s(s),M_a(s)$ at $s=10^{-3},10^{-4}$; exact cubic
objective; float report of the lemma constants; $k=2,3,4$ & $<$1\,s\\
\texttt{verify\_sstar\_}\newline\texttt{rational.py} & certified rational bounds
for every Lemma~\ref{lem:norepair} constant ($c,\mu$ lower, $b_1,b_2,
\kappa$ upper, $\hat\lam,h$ exact) and certified rational $s^\ast_k$
($\ge1.6\times10^{-3},\,9.2\times10^{-3},\,7.5\times10^{-3}$ at
$k=2,3,4$); exact rational arithmetic & $<$1\,s\\
\texttt{verify\_v\{2,3,4\}\_}\newline\texttt{rational\_base.py} & the certified ladder
jets behind Corollary~\ref{cor:ladder}: exact rational tangent-program
certificates for $v_2>\tfrac1{39}$, $v_3>\tfrac1{188}$,
$v_4>\tfrac1{641}$ (data in \texttt{v\{2,3,4\}\_cert\_data.py}) &
seconds--minutes\\
\texttt{audit\_spot.py} & the adversarial audit S1--S5 (per-item
provenance below), including the composition step: $y^\ast$ evaluated in
the certificate bases at $k=2,3,4$ gives exactly
$N_s^\top M_s(y^\ast)N_s=4uu^\top$, antisymmetric compression $0$, pins,
range identity & $\sim$11\,s\\
\texttt{scenario\_ext.py}, \texttt{npa\_general.py} & shared library:
level-$k$ word/class/moment-matrix machinery used by the verifiers &
---\\
\bottomrule
\end{tabular}
\end{center}

Each verifier exits $0$ iff all its checks pass; all pass. Supporting
exploration scripts (\texttt{attack\_r1.py},
\texttt{explore\_r1\_induction.py},
\texttt{r1\_formula\_hunt.py}) document, respectively, the dual criterion
and germ analysis of Remark~\ref{rem:germ} and the gauge in which
$y^\ast$ was found; they play no role in the proofs.

\emph{Independent re-verification.} The complete chain was checked a
second time, directed at the points where an error would be most damaging.
The machine checks of this audit are collected in \texttt{audit\_spot.py}
(sections S1--S5); the degree of independence varies by item, so we state
it per item. \emph{Freshly written code} (S1, S2: own word/reduction/%
kernel/$u_k$/$y^\ast$ implementations, sharing nothing with the
development stack): a reproduction of $P(y^\ast)=u_ku_k^\top$ at
$k=2,\dots,6$ with pins and anchors (S1), and an adversarial attack on the
fine-regime completeness of Lemmas~\ref{lem:regime}--\ref{lem:grid} with
$18{,}915$ targeted extreme pairs and $150{,}000$ random pairs at side
lengths up to $61$ (S2), backed by a hand re-derivation of
Lemma~\ref{lem:nest}(iii). \emph{Independent driver scripts over the
shared exact stack} (S3, S4 import \texttt{scenario\_ext} and the
\texttt{LevelData} machinery of \texttt{verify\_small\_lambda.py}): the
composition of Theorem~\ref{thm:witness} with
Theorem~\ref{thm:smalllambda} in the concrete certificate bases at
$k=2,3,4$ (the one step not covered by the original programs, now checked
exactly) together with the normalization bookkeeping of
Remark~\ref{rem:norm} (S3); and exact positive definiteness of the full
\emph{unsymmetrized} $\Gamma(y(s))$ at $s=10^{-3}$, bypassing the swap
splitting entirely --- run at $k=2$ and $3$ only (the swap-split exact PD
checks of \texttt{verify\_framework\_lemma.py} cover $k=2,3,4$) --- with
the objective cubic recomputed from the raw functional (S4). The audit
also re-derived Lemma~\ref{lem:norepair} by hand. Two further items
quoted above live elsewhere: the extended enumeration at side length
$\le26$ ($3{,}946{,}645$ pairs, all zero) is a rerun of the development
engine, reproducible via \texttt{check\_all\_pairs(26)} in
\texttt{verify\_r1\_induction.py}, and the base-point integrals of
Theorem~\ref{thm:L2} are checked exactly in
\texttt{verify\_pipeline\_laws.py} (section D). No discrepancy was found
in any step.


\begin{thebibliography}{99}
\bibitem{Gigena} N.~Gigena, E.~Panwar, G.~Scala, M.~Ara\'ujo, M.~Farkas,
A.~Chaturvedi, \emph{Self-testing tilted strategies for maximal
loophole-free nonlocality}, npj Quantum Inf.\ \textbf{11}, 82 (2025);
arXiv:2405.08743.
\bibitem{NPA} M.~Navascu\'es, S.~Pironio, A.~Ac\'in, \emph{A convergent
hierarchy of semidefinite programs characterizing the set of quantum
correlations}, New J.\ Phys.\ \textbf{10}, 073013 (2008).
\bibitem{overshoot} A.~Pakhunov, \emph{A phase transition in the exactness
of the NPA hierarchy at the critical doubly-tilted CHSH functional},
companion note (2026).
\bibitem{sep} A.~Pakhunov, \emph{A certified separation of the
almost-quantum set from the quantum set in the minimal Bell scenario},
companion note (2026).
\bibitem{BB} V.~Barizien, J.-D.~Bancal, \emph{Quantum statistics in the
minimal scenario}, Nat.\ Phys.\ \textbf{21}, 577 (2025); arXiv:2406.09350.
\bibitem{KLM} I.~Klep, J.~Levenson, S.~McCullough, \emph{Fej\'er--Riesz
factorization for positive noncommutative trigonometric polynomials},
arXiv:2511.09267 (2025).
\bibitem{Nie} J.~Nie, \emph{Optimality conditions and finite convergence
of Lasserre's hierarchy}, Math.\ Program.\ \textbf{146}, 97 (2014);
arXiv:1206.0319.
\bibitem{Marshall} M.~Marshall, \emph{Representations of non-negative
polynomials, degree bounds and applications to optimization}, Canad.\ J.\
Math.\ \textbf{61}, 205 (2009); \emph{Positive Polynomials and Sums of
Squares}, Math.\ Surveys Monogr.\ \textbf{146}, AMS (2008).
\bibitem{BampsPironio} C.~Bamps, S.~Pironio, \emph{Sum-of-squares
decompositions for a family of CHSH-like inequalities and their
application to self-testing}, Phys.\ Rev.\ A \textbf{91}, 052111 (2015);
arXiv:1504.06960.
\bibitem{hardness} L.~F.~Vargas, \emph{On the hardness of deciding the
finite convergence of Lasserre hierarchies}, Numer.\ Algebra Control
Optim.\ \textbf{16}, 35 (2026); arXiv:2401.12613.
\bibitem{Dritschel} M.~A.~Dritschel, \emph{Factoring non-negative operator
valued trigonometric polynomials in two variables}, Math.\ Ann.\
\textbf{391}, 515 (2024); arXiv:1811.06005.
\bibitem{Scheiderer} C.~Scheiderer, \emph{Sums of squares on real
algebraic surfaces}, Manuscripta Math.\ \textbf{119}, 395 (2006);
\emph{Non-existence of degree bounds for weighted sums of squares
representations}, J.\ Complexity \textbf{21}, 823 (2005).
\end{thebibliography}
\end{document}